\newtheorem{thm}{Theorem}
\newtheorem{lem}[thm]{Lemma}
\newtheorem{prop}[thm]{Proposition}
\newtheorem{defn}[thm]{Definition}
\newtheorem{example}[thm]{Example}
\newtheorem{remark}[thm]{Remark}
\newcommand\dsb[1]{\llbracket #1 \rrbracket}
\newcommand\Diag{\operatorname{diag}}
\begin{document}

\title{How Much Entanglement Does a Quantum Code Need?}

\author{Gaojun Luo}
  \orcid{0000-0003-1482-4783}
  \email{gaojun.luo@ntu.edu.sg}
  \affiliation{%
    School of Physical and Mathematical Sciences, Nanyang Technological University,
    21 Nanyang Link, Singapore 637371, Singapore
  }%
\author{Martianus Frederic Ezerman}%
  \orcid{0000-0002-5851-2717}
  \email{fredezerman@ntu.edu.sg\\fred@sandhiguna.com}
  \affiliation{%
    School of Physical and Mathematical Sciences, Nanyang Technological University,
    21 Nanyang Link, Singapore 637371, Singapore and Sandhiguna, Suite 707 Graha Pena, Batam, Kepulauan Riau, 29461, Indonesia
}%
\author{Markus Grassl}%
  \orcid{0000-0002-3720-5195}%
  \email{markus.grassl@ug.edu.pl}
  \affiliation{%
    International Centre for Theory of Quantum Technologies, University of Gdansk,
    80-309 Gda\'nsk, Poland
  }
\author{San Ling}%
  \orcid{0000-0002-1978-3557}
  \email{lingsan@ntu.edu.sg}
  \affiliation{%
    School of Physical and Mathematical Sciences, Nanyang Technological University,
    21 Nanyang Link, Singapore 637371, Singapore
}%

\date{6 September 2022}

\begin{abstract}
In the setting of entanglement-assisted quantum error-correcting codes (EAQECCs), the sender and the receiver have access to pre-shared entanglement. Such codes promise better information rates or improved error handling properties. Entanglement incurs costs and must be judiciously calibrated in designing quantum codes with good performance, relative to their deployment parameters. 

Revisiting known constructions, we devise tools from classical coding theory to better understand how the amount of entanglement can be varied. We present three new propagation rules and discuss how each of them affects the error handling. Tables listing the parameters of the best performing qubit and qutrit EAQECCs that we can explicitly construct are supplied for reference and comparison.
\end{abstract}

\keywords{algebraic codes, quantum codes, orthogonal codes, entanglement assisted}
\maketitle

\section{Introduction}\label{sec:intro}
In quantum communication, the goal is to send as much quantum
information as possible over a noisy quantum channel using a fixed
number of quantum bits (\textit{qubits}) or higher-dimensional systems (\textit{qudits}). One aims at optimizing the
transmission rate so that it approaches the channel capacity
asymptotically.  The communicating parties are assumed to be
physically separated, but they might have access to additional
resources, which may include access to classical communication
channels, pre-shared randomness, and pre-shared entanglement. We focus on the use of entanglement in the design of quantum error-correcting codes (QECCs) to boost either their communication
rates or error-control capabilities.

\subsection{Quantum Codes}\label{subsec:QC}

A general quantum error-correcting code for qubits that does not use
additional resources is a $K$-dimensional subspace of the complex
Hilbert space of $n$ qubits, which has dimension $2^n$. We use the
notation $\dsb{n,k}_2$ for a code $\mathcal{Q}$ of dimension $K=2^k$
and say that $\mathcal{Q}$ encodes $k$ logical qubits into $n$
physical qubits. The encoding operation consists of two steps.  First,
the sender appends $n-k$ ancilla qubits in a fixed state, typically
$\ket{0}^{\otimes(n-k)}$, to a state $\ket{\varphi}$ of $k$
qubits. Then an encoding unitary $U_{\rm enc}$ is applied:
\begin{alignat}{5}\label{eq:encode}
\ket{\varphi} & \mapsto \ket{\varphi} \otimes \ket{0}^{\otimes (n-k)} \notag\\
              &\mapsto \ket{\Psi_L}:= U_{\rm enc}\left(\ket{\varphi} \otimes \ket{0}^{\otimes (n-k)}\right).
\end{alignat}
The state $\ket{\Psi_L}$ is called the \textit{encoded state} or the
\textit{logical state}. The unitary $U_{\rm enc}$ acts on the space
formed by the input state $\ket{\varphi}$ and the ancillas
$\ket{0}^{\otimes (n-k)}$. 

In an \emph{entanglement-assisted setup} \cite{Brun2006}, we replace $c$ of
the ancillas by $c$ \emph{pairs} of maximally entangled qubits (\textit{ebits} in short), before applying $U_{\rm enc}$.  Let the state
$\ket{\Psi^+}_{AB}$ be an EPR pair \cite{Einstein1935} shared
between the sender Alice and the receiver Bob.  The encoding operation
performed on Alice's qubits is given by
\begin{alignat}{5}
\ket{\varphi}& \mapsto \ket{\varphi} \otimes \ket{0} ^{\otimes (n-k-c)} \otimes \ket{\Psi^+}_{AB}^{\otimes c} \notag \\
 & \mapsto\left(U_{\rm enc} \otimes I_{B}\right) \left(\ket{\varphi} \otimes \ket{0}^{\otimes (n-k-c)}  \otimes \ket{\Psi^+}_{AB}^{\otimes c}\right).\label{eq:entencode2}
\end{alignat}
Here $I_B$ denotes the identity operator on the receiver's half of the $c$
maximally entangled pairs, \textit{i.\,e.}, the encoded state
consists of $n+c$ qubits in total. The notation
$\ket{\Psi^+}_{AB}^{\otimes c}$ should be understood as reordering
the qubits such that the first $c$ qubits are with the sender Alice.
We assume that the noise only affects the first $n$ qubits sent over the channel, while the half of the shared $c$ ebits that is with the receiver Bob is not affected.

\begin{figure}
\centering
\begin{adjustbox}{width=\hsize}
\begin{quantikz}
\lstick{$\ket{\varphi}$} & [1.8cm]\gate[wires=3]{U} 
\qwbundle
{k} & \gate[3,style={starburst,fill=yellow,draw=red,line
width=1pt,inner xsep=-6pt,inner ysep=-14pt},
label style=black]{\text{noise}} & \gate[wires=4,disable auto height]
{\begin{array}{c} \text{Error} \\ \text{diagnosis} \\ \text{and} \\ \text{recovery} \end{array}} \qw\\
\lstick{$\ket{0}$} & \qwbundle{n-k-c} & \qw & \qw  \\
\makeebit{$\ket{\Phi}_{AB}$} & \qwbundle{c} & \qw & \qw \\
& \qw & \qwbundle{c} & \qw \\
\end{quantikz}
\end{adjustbox}
\caption{The basic structure of an entanglement-assisted quantum error-correcting code. Alice and Bob share the maximally entangled state $\ket{\Phi}_{AB}$ of $c$ maximally entangled qudits. Alice uses her half of the maximally entangled state and $n-k-c$ ancillas in a fixed state $\ket{0}$ in the encoding of her quantum information $\ket{\varphi}$ by the operator $U:=U_{\rm enc}$. The resulting $q^n$-dimensional state passes through the noisy channel. Bob's half of the initial $c$ pairs of maximally entangled states is assumed to be error-free. They will be used in error diagnosis and recovery.}
\label{fig:circuit}
\end{figure}
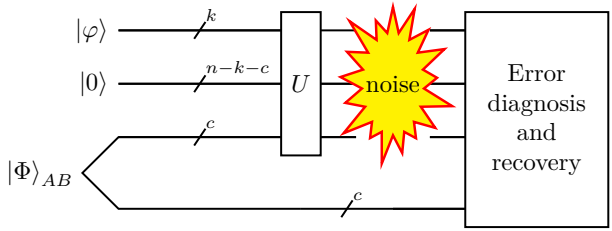

The ebits are prepared \emph{ahead of time} and are assumed to be
error-free, using, \textit{e.\,g.}, entanglement distillation or a
similar procedure. For more details on quantum entanglement, including
its creation and delivery across quantum networks, one can consult
the survey in \cite{Horodecki2009}, the experiment reported in
\cite{Humphreys2018}, or a recent scheme to generate genuine
multipartite entanglement of a large number of qubits in
\cite{Zhang2021}.

In \textit{teleportation}, one perfect ebit is used in tandem with noiseless classical communication to perfectly transmit one qubit. Entanglement-assisted quantum error-correcting codes (EAQECCs) use some noiseless ebits, without classical communication, to transmit quantum information over a noisy quantum channel.

In \textit{superdense coding}, the sender can apply an operation to her half of a pair of maximally entangled states such that, after sending this qubit, the receiver can decode \emph{two} classical bits of information.  An ancilla in a standard QECC can be interpreted as a
\emph{placeholder} for one bit of classical information about any error that has occurred.  Replacing an ancilla with one half of an ebit can, in theory, enable the receiver to extract \emph{two} bits of classical information regarding the errors. This enhances the error-handling capabilities of EAQECCs over the standard counterparts.

One can execute some communication tasks with fewer total resources or better error control by using an EAQECC instead of a combination of a standard QECC and teleportation.  An asymptotic analysis on the
benefits of pre-shared entanglement in quantum communication is available in \cite{Devetak2008}. We view an EAQECC as a finite-length realization. It is in principle possible to approach the
entanglement-assisted quantum capacity by building larger code blocks as shown in \cite{Bowen2002}.

Shared entanglement does not come for free. The cost of
sharing and purifying ebits means that EAQECCs \emph{do not automatically outperform} standard quantum codes in all circumstances. One measure to assess the advantage is the \textit{net rate}, which subtracts the number of ebits required from the number of qubits
transmitted.  In terms of construction via classical error-correcting codes, however, EAQECCs have fewer restrictions, allowing us to use larger families of classical codes.

The notation $\dsb{n,\kappa,\delta; c}_q$ signifies that the quantum
code $\mathcal{Q}$ is a $q$-ary EAQECC that encodes $\kappa$ logical
\textit{qudits} (quantum systems of dimension $q$) into $n$ physical
qudits, with the help of $n-\kappa-c$ ancillas and $c$ pairs of
maximally entangled qudits.  A quantum code with minimum distance $\delta$ can correct up to $\lfloor{(\delta-1)/2\rfloor}$
single-qudit errors. As shown in Figure \ref{fig:circuit}, Alice transmits the $n$ qudits to Bob. He then performs a syndrome measurement on them together with his half of the $c$ pairs of maximally entangled qudits to correct errors and
to retrieve the $\kappa$ logical qudits. The \textit{rate} $\rho$ and the \textit{net rate} $\Bar{\rho}$ of $\mathcal{Q}$ are, respectively,
\begin{equation}\label{eq:rate}
\rho:= \frac{\kappa}{n} \quad\text{and}\quad
\Bar{\rho}:= \frac{\kappa - c}{n}.
\end{equation}
The abbreviated form $\dsb{n,\kappa,\delta}_q$ is used when $c=0$. 

\subsection{Quantum Codes from Classical Codes}\label{subsec:construct}

Let $p$ be a prime and let $s$ be a positive integer. Let $q$ be a prime
power $q=p^{s}$ and let $\mathbb{F}_q$ be the finite field with $q$
elements. The multiplicative group of the nonzero elements of
$\mathbb{F}_q$ is denoted by $\mathbb{F}_q^{*}$. For a positive
integer $m$, we denote by $[m]$ the set $\{1,2,\ldots,m\}$. A code
$\mathcal{C}$ of length $n$ is a nonempty subset of
$\mathbb{F}_q^n$. Its \textit{codewords} are vectors of length $n$
with entries from $\mathbb{F}_q$. The {\it weight} of a vector is the
number of its nonzero entries. Given a nonempty $\mathcal{S} \subseteq \mathbb{F}_q^n$, we denote by ${\rm wt} (\mathcal{S})$ the number 
$\min\{ {\rm wt}(\mathbf{v})\colon \mathbf{v} \in \mathcal{S}, \mathbf{v}
\neq \mathbf{0}\}$.

A code $\mathcal{C}$ is {\it linear} with parameters $[n,k,d]_q$ if it
is a $k$-dimensional subspace of $\mathbb{F}_q^n$ and its {\it minimum distance}, defined to be the smallest of the weights of its nonzero
codewords, is $d$. A $k \times n$ matrix $G$ whose rows form a basis
for $\mathcal{C}$ is a \textit{generator matrix} of $\mathcal{C}$. If
$G = \begin{pmatrix} I_k\; A \end{pmatrix}$, where $I_k$ is the $k
\times k$ identity matrix, we say that $G$ is \textit{in the standard form}.

We equip $\mathbb{F}_q^n$ and $\mathbb{F}_{q^2}^n$ with the Euclidean
and the Hermitian inner products, respectively. Given an arbitrary vector $\mathbf{x}=(x_1,\ldots,x_n)$ and a codeword $\mathbf{c}=(c_1,\ldots,c_n)$ in $\mathcal{C}$, the {\it Euclidean dual} $\mathcal{C}^{\perp}$ of $\mathcal{C}$ is 
\begin{equation}\label{eq:Euclidean}
\mathcal{C}^{\perp} = \left\{\mathbf{x} \in \mathbb{F}_q^n\colon \sum_{i=1}^n x_ic_i=0, 
\text{for all $\mathbf{c} \in \mathcal{C}$} \right\}.
\end{equation}
Analogously, the {\it Hermitian dual} $\mathcal{C}^{\perp_{\rm H}}$ of $\mathcal{C}$ is 
\begin{equation}\label{eq:Hermitian}
\mathcal{C}^{\perp_{\rm H}} = \left\{\mathbf{x} \in \mathbb{F}_{q^2}^n\colon \sum_{i=1}^n x_i c_i^q = 0, 
\text{for all $\mathbf{c} \in \mathcal{C}$} \right\}.
\end{equation}

An $[n,k,d]_q$-code with $k \leq
\lfloor{\frac{n}{2}}\rfloor$ is \textit{self-orthogonal} if it is
contained in its dual. If, moreover, $n=2k$, then the code is
\textit{self-dual}. The notion of the \textit{hull} of a code was
introduced in \cite{Assmus1992} to define the intersection of the code
with its dual. Hence, the \textit{Hermitian hull} of $\mathcal{C}$ is
the code ${\rm Hull}_{\rm H}(\mathcal{C}) = \mathcal{C} \cap
\mathcal{C}^{\perp_{\rm H}}$. A code whose hull is $\{\mathbf{0}\}$
intersects trivially with its dual and is called a \textit{linear complementary dual} (LCD) code. Readers interested to know more
about classical codes may consult \cite{Huffman2021}.

Gottesman formulated the \emph{stabilizer formalism} for QECCs in \cite{Gottesman1997}. It was subsequently expressed in the language of classical coding theory in \cite{Calderbank1998}, triggering fruitful cross-pollination of ideas and results between quantum error control and classical coding theory. A general treatment over any finite field followed in \cite{AK01}. A survey can be found in \cite{Ketkar2006}. The main ingredients are self-orthogonal classical (additive) codes under the (trace) Hermitian inner product. The orthogonality condition imposes constraints on the parameters of the
corresponding quantum codes. Entanglement-assisted schemes enlarge the pool of ingredients to include codes which are not self-orthogonal, but require maximally entangled states as an additional resource.

We recall a general construction route of EAQECCs via the
\emph{non-commuting stabilizers} as explained, with illustrations, in
\cite{BrunHsieh}. For the qubit case, a formal treatment is given in
\cite{Brun2014}. It links arbitrary classical codes over
$\mathbb{F}_4$ as well as pairs of codes over $\mathbb{F}_2$ to qubit
EAQECCs. Extensions to the qudit case, where $q>2$, are given in
\cite{Galindo2019,Galindo2021}. Using $\mathbb{F}_{q^2}$-linear
codes, we have the following construction (see
\cite[Theorem~3]{Galindo2019}).

\begin{prop}[Hermitian construction]\label{prop:two}
Let $\mathcal{C}$ be an $[n,k]_{q^2}$-code, and let
$\mathcal{C}^{\perp_{\rm H}}$ denote its Hermitian dual. Then there
exists an $\dsb{n,\kappa, \delta; c}_q$-code $\mathcal{Q}$ with
\begin{align*}
c &{}= k - \dim_{\mathbb{F}_{q^2}} 
\left(\mathcal{C} \cap \mathcal{C}^{\perp_{\rm H}}\right),\\
\kappa &{}= n-2k+c,\\
\text{and}\quad\delta &{}=\begin{cases}
       {\rm wt}\left(\mathcal{C}^{\perp_{\rm H}}\right), &\text{if
         $\mathcal{C}^{\perp_{\rm H}}\subseteq\mathcal{C}$;}\\
       {\rm wt}\left(\mathcal{C}^{\perp_{\rm H}} \setminus
       \left(\mathcal{C} \cap \mathcal{C}^{\perp_{\rm H}}\right)\right), & \text{otherwise.}\\
\end{cases}
\end{align*}
\end{prop}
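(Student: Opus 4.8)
The plan is to exhibit the EAQECC $\mathcal{Q}$ explicitly via the CSS-type entanglement-assisted stabilizer construction for qudits, reducing everything to a linear-algebra normalization of a symplectic-type form. First I would recall that an $\dsb{n,\kappa,\delta;c}_q$-code is obtained from a subgroup of the qudit error group whose generators may fail to commute pairwise, and that the number of mutually non-commuting pairs of generators equals the required number of ebits $c$. Starting from the $[n,k]_{q^2}$-code $\mathcal{C}$, I would pass to its image under the trace-Hermitian correspondence, producing an additive code in $\mathbb{F}_q^{2n}$ of $\mathbb{F}_q$-dimension $2k$, equipped with the symplectic (trace-alternating) form induced by the Hermitian inner product on $\mathbb{F}_{q^2}^n$. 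The radical of this form restricted to the image of $\mathcal{C}$ is precisely the image of $\mathrm{Hull}_{\mathrm H}(\mathcal{C}) = \mathcal{C}\cap\mathcal{C}^{\perp_{\mathrm H}}$, so the form has rank $2k - 2\dim_{\mathbb{F}_{q^2}}(\mathcal{C}\cap\mathcal{C}^{\perp_{\mathrm H}})$, i.e. $2c$ with $c$ as claimed.

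Next I would invoke the normal form for an alternating bilinear form over a field: there is a symplectic basis in which the form decomposes into $c$ hyperbolic planes plus a totally isotropic complement. Translating this back, one gets $c$ pairs of generators realizing the canonical anticommutation relations of $c$ ebits, together with $(2k - 2c)/... $ — more precisely $k - c$ "free" generators (coming from $\mathcal{C}\cap\mathcal{C}^{\perp_{\mathrm H}}$, which is genuinely self-orthogonal) that commute with everything. Feeding these into the entanglement-assisted stabilizer machinery of \cite{Galindo2019} (their Theorem~3, whose statement this Proposition reproduces) yields a code acting on $n+c$ qudits that encodes $\kappa$ logical qudits, where the count $\kappa = n - (k-c) - (k) = n - 2k + c$ comes from subtracting the number of independent stabilizer generators: $k-c$ ancilla-type generators and $k$ ebit-and-ancilla generators, adjusted so that the total is $n - \kappa$. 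I would present this dimension count carefully, since it is the step most prone to off-by-$c$ slips.

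For the distance, I would recall that in the entanglement-assisted CSS formalism the detectable errors correspond to elements of the symplectic dual of the stabilizer that are \emph{not} in the stabilizer itself, and that under the trace-Hermitian correspondence the relevant ambient code is $\mathcal{C}^{\perp_{\mathrm H}}$ while the "trivial" (logically-identity) part is the hull $\mathcal{C}\cap\mathcal{C}^{\perp_{\mathrm H}}$. Hence $\delta = \mathrm{wt}\bigl(\mathcal{C}^{\perp_{\mathrm H}}\setminus(\mathcal{C}\cap\mathcal{C}^{\perp_{\mathrm H}})\bigr)$ in general; and when $\mathcal{C}^{\perp_{\mathrm H}}\subseteq\mathcal{C}$ the hull equals $\mathcal{C}^{\perp_{\mathrm H}}$ itself, so the set difference would be empty and one instead takes the minimum weight over all of $\mathcal{C}^{\perp_{\mathrm H}}$ — this is the pure-versus-impure dichotomy, and it accounts for the two cases in the statement. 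The main obstacle, and the part I would spend the most care on, is making the bookkeeping between the $\mathbb{F}_{q^2}$-linear picture and the $\mathbb{F}_q$-symplectic picture fully rigorous — in particular verifying that the Hermitian hull is exactly the radical and that the weight function is preserved under the correspondence — rather than the abstract existence of the symplectic normal form, which is standard.
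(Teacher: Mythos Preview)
The paper does not give its own proof of this proposition: it is quoted directly as \cite[Theorem~3]{Galindo2019}, with only the extremal case $\mathcal{C}^{\perp_{\rm H}}\subseteq\mathcal{C}$ (where $\kappa=0$) added as a brief remark afterward. Your sketch is essentially the argument that underlies that reference (and, for qubits, \cite{Brun2006,Brun2014}): pass from the $\mathbb{F}_{q^2}$-linear code to the associated $\mathbb{F}_q$-additive code with the trace-symplectic form, identify the radical with the image of the Hermitian hull, put the form into symplectic normal form to read off $c$, and then invoke the entanglement-assisted stabilizer machinery for the dimension and distance. So there is no divergence in approach to discuss---you are reproducing the standard proof that the paper imports by citation.

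Two small points worth tightening in your write-up. First, this is the \emph{Hermitian} construction, not a CSS-type construction; the CSS-like analogue is the separate Proposition~\ref{prop:one}, and conflating the two names will confuse a reader. Second, your dimension count ``$\kappa = n-(k-c)-(k)$, adjusted so that\ldots'' is arithmetically correct but the accompanying narrative about ``$k-c$ ancilla-type generators and $k$ ebit-and-ancilla generators'' does not parse. The clean version is: the $k$ generators split into $k-c$ isotropic ones (from the hull) and $c$ symplectic pairs; after appending $c$ ebit halves the extended stabilizer on $n+c$ qudits is abelian of rank $k+c$ over $\mathbb{F}_{q^2}$, hence $\kappa=(n+c)-2\cdot\tfrac{1}{2}\cdot 2(k+c)+ c \cdot 2/2$---or, more transparently, $\kappa=\dim\mathcal{C}^{\perp_{\rm H}}-\dim\bigl(\mathcal{C}\cap\mathcal{C}^{\perp_{\rm H}}\bigr)=(n-k)-(k-c)=n-2k+c$. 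Your handling of the $\mathcal{C}^{\perp_{\rm H}}\subseteq\mathcal{C}$ case matches the paper's own remark that such codes have $\kappa=0$ and are pure by convention.
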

We note that the construction includes the case that $\mathcal{C}$ is
contained in its Hermitian dual $\mathcal{C}^{\perp_{\rm H}}$, which
implies $c=0$, \textit{i.\,e.}, the quantum codes do not require
entanglement assistance.  
The case $\mathcal{C}^{\perp_{\rm H}}\subseteq\mathcal{C}$ has not been explicitly
addressed in \cite{Galindo2019}. The resulting codes have $c=2k-n$ and $\kappa=0$. For
codes with $\kappa=0$ and minimum distance $\delta$, by definition the code has to be
\emph{pure}, {\it i\,.e.}, there is no error of weight less than $\delta$ that acts trivially on
the code.

Another construction uses a pair of $\mathbb{F}_q$-linear codes of
equal length, yielding the so-called CSS-like family of EAQECCs
(see \cite[Theorem~4]{Galindo2019}).
\begin{prop}[CSS-like construction]\label{prop:one}
If $\mathcal{C}_i$ is an $[n,k_i,d_i]_q$-code for $i = 1,2$, then there is an $\dsb{n,\kappa,\delta; c}_q$-code $\mathcal{Q}$ with 
\begin{align*}
c &{}= k_1 - \dim\bigr(C_1 \cap C_2^{\perp}\bigr),\\
\kappa &{}= n-(k_1+k_2)+c,\quad\text{and}\\
\delta &{}=\begin{cases}
  \min\bigl\{{\rm wt}\bigl(C_1^{\perp}\bigr),{\rm wt}\bigr(C_2^{\perp}\bigr)\bigr\},&\text{if $C_1^\perp\subseteq C_2$};\\
  \min\bigl\{{\rm wt}\bigl(C_1^{\perp} \setminus (C_2 \cap C_1^{\perp})\bigr),\\
  \text{\phantom{$\min\bigl\{$}}{\rm wt}\bigr(C_2^{\perp} \setminus (C_1 \cap C_2^{\perp})\bigr)
  \bigr\}, & \text{otherwise.}
\end{cases}
\end{align*} 
\end{prop}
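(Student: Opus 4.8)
The plan is to obtain $\mathcal{Q}$ from the general entanglement-assisted stabilizer construction over $\mathbb{F}_q$ (the non-commuting stabilizer framework of \cite{Brun2014,Galindo2019}), so that everything reduces to choosing the right $\mathbb{F}_q$-linear symplectic code and doing linear algebra with it. Let $G_i$ be a generator matrix of $\mathcal{C}_i$ and set
\[
\widehat{G}=\begin{pmatrix} G_1 & \mathbf{0}\\ \mathbf{0} & G_2\end{pmatrix},
\]
whose row space $C\subseteq\mathbb{F}_q^{2n}$ has dimension $k_1+k_2$; the first block of $n$ coordinates records the $X$-part and the second the $Z$-part of the associated generalised Pauli operators. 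With respect to the symplectic form $\langle(v_1\mid w_1),(v_2\mid w_2)\rangle_s=v_1\cdot w_2-w_1\cdot v_2$ one checks immediately that $C^{\perp_s}=\{(v\mid w): v\in\mathcal{C}_2^{\perp},\ w\in\mathcal{C}_1^{\perp}\}$, and that the symplectic Gram matrix of $C$ has, after reordering, a single nonzero block $G_1G_2^{\mathsf{T}}$ together with its negative transpose, so its rank is $2\operatorname{rank}(G_1G_2^{\mathsf{T}})$.

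The first task is to read off $c$ and $\kappa$. The general construction sets $c=\operatorname{rank}(G_1G_2^{\mathsf{T}})$; since $x\mapsto xG_1$ is injective with image $\mathcal{C}_1$, the left kernel of $G_1G_2^{\mathsf{T}}$ is $\{x: xG_1\in\operatorname{rowspace}(G_2)^{\perp}=\mathcal{C}_2^{\perp}\}$, of dimension $\dim(\mathcal{C}_1\cap\mathcal{C}_2^{\perp})$, whence $c=k_1-\dim(\mathcal{C}_1\cap\mathcal{C}_2^{\perp})$. For the dimension, the general formula $\kappa=n-\dim_{\mathbb{F}_q}C+c$ gives $\kappa=n-(k_1+k_2)+c$ at once.

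For $\delta$ I would argue in the error picture. An error on the $n$ transmitted qudits is a vector $(v\mid w)\in\mathbb{F}_q^{2n}$; it gives the zero syndrome precisely when $(v\mid w)\in C^{\perp_s}$, i.e.\ $v\in\mathcal{C}_2^{\perp}$ and $w\in\mathcal{C}_1^{\perp}$, and it acts as the logical identity precisely when $(v\mid w)\in C\cap C^{\perp_s}$, i.e.\ $v\in\mathcal{C}_1\cap\mathcal{C}_2^{\perp}$ and $w\in\mathcal{C}_2\cap\mathcal{C}_1^{\perp}$ (the entangled generators contribute only errors that either raise a syndrome on the transmitted qudits or are removable with the receiver's ebit halves, which is exactly what the cited construction ensures). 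If $\mathcal{C}_1^{\perp}\subseteq\mathcal{C}_2$ --- equivalently $\mathcal{C}_2^{\perp}\subseteq\mathcal{C}_1$, equivalently $C^{\perp_s}\subseteq C$ --- then no nonzero undetectable error acts nontrivially, the code is pure, and $\delta$ is the least weight of a nonzero element of $C^{\perp_s}$; because the quantum weight of $(v\mid w)$ is at least $\operatorname{wt}(v)$ and at least $\operatorname{wt}(w)$, this minimum equals $\min\{\operatorname{wt}(\mathcal{C}_1^{\perp}),\operatorname{wt}(\mathcal{C}_2^{\perp})\}$, attained by $(v\mid\mathbf{0})$ and $(\mathbf{0}\mid w)$. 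Otherwise $\delta$ is the least weight of an element of $C^{\perp_s}\setminus C$; any such element has $v\in\mathcal{C}_2^{\perp}\setminus(\mathcal{C}_1\cap\mathcal{C}_2^{\perp})$ or $w\in\mathcal{C}_1^{\perp}\setminus(\mathcal{C}_2\cap\mathcal{C}_1^{\perp})$, so the same weight bound yields the second branch of the stated formula, again attained by pure $X$- and $Z$-errors.

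The part that needs the most care --- and the main obstacle --- is making the distance bookkeeping airtight once the $c$ ebit halves held by the receiver enter: one must pin down exactly what ``acts trivially on $\mathcal{Q}$'' means in the expanded $(n+c)$-qudit picture, and then check that mixed errors (with both $v$ and $w$ nonzero) can never beat the pure ones. The latter is guaranteed by the block-diagonal shape of $\widehat{G}$, which keeps the $X$- and $Z$-constraints decoupled while forcing the quantum weight of $(v\mid w)$ to dominate each of $\operatorname{wt}(v)$ and $\operatorname{wt}(w)$; alternatively, the whole distance computation can instead be deduced by specialising the symplectic-code distance statement already packaged in the general construction of \cite{Galindo2019} to the block-diagonal code $C$ above.
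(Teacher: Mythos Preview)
Your argument is correct and is essentially the standard symplectic unpacking of the cited result. Note, however, that the paper does not give its own proof of this proposition: it is quoted from \cite[Theorem~4]{Galindo2019}, with the extra pure case $C_1^\perp\subseteq C_2$ added by the authors without further justification. Your derivation of $c$, $\kappa$, and both branches of the distance formula via the block-diagonal symplectic code $C=\mathrm{rowspace}(\widehat{G})$ is exactly the computation one performs to establish that theorem, and your handling of the added pure case (where $C^{\perp_s}\subseteq C$, $\kappa=0$, and $\delta=\mathrm{wt}(C^{\perp_s})$) fills in the small gap the paper leaves open. The one point you flag as delicate---why an undetectable error on the $n$ transmitted qudits acts trivially precisely when it lies in $C\cap C^{\perp_s}$, despite the stabilizer living on $n+c$ qudits---is the right thing to worry about, and your resolution (the symplectic-pair generators act nontrivially on the ebit halves, so any stabilizer element that is the identity there must be a product of isotropic generators only) is the correct one.
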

Again, when $C_2^\bot \subseteq C_1$, we have $c=0$ and the resulting
code does not require entanglement assistance. The case
$C_1^\perp\subseteq C_2$, resulting in $c=k_1+k_2-n$ and $\kappa=0$, has
not been explicitly addressed in \cite{Galindo2019} either.

The code $\mathcal{Q}$ in Proposition \ref{prop:two} is {\it pure} or {\it nondegenerate}
if $\delta = {\rm wt}(\mathcal{C}^{\perp_{\rm H}}) = d(\mathcal{C}^{\perp_{\rm H}})$. The
code $\mathcal{Q}$ in Proposition \ref{prop:one} is pure whenever $\delta =
\min\{d(C_1^\perp),d(C_2^\perp)\}$.  Otherwise, the code is said to be \emph{impure}, and
it is \emph{pure to distance ${\rm wt}\left(\mathcal{C}^{\perp_{\rm H}}\right)$} or
$\min\{d(C_1^\perp),d(C_2^\perp)\}$, respectively.

Another extremal case of Proposition \ref{prop:two} that has not been
explicitly discussed in the literature arises when one considers the
trivial code $\mathcal{C}=[n,n,1]_{q^2}$. In this case, $\mathcal{C}^{\perp_{\rm H}}$ is the trivial code that contains only the zero codeword. We argue that the distance of the resulting EAQECC with parameters $c=n$ and $\kappa=0$ is $n+1$. As the code uses
$c=n$ maximally entangled states, we are in a situation similar to
superdense coding.  Performing a joint measurement on $n$ qudits
received from the channel and the $n$ qudits from the pre-shared
entanglement, the receiver can distinguish $q^{2n}$ different unitary
operations applied by the channel, corresponding to all errors of
weight at most $n$. For this, we do not require $q$ to be a prime power. In summary, we have the following proposition.
\begin{prop}
For any $q \ge 2$, not necessarily a prime power, there exists an EAQECC $\mathcal{Q} = [\![n,0,n+1;n]\!]_q$.
\end{prop}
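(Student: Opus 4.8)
The plan is to realize the code by an explicit encoded state and then to check that every nonzero channel error is detectable; since a Pauli error on $n$ qudits has weight at most $n$, detectability of \emph{all} such errors is exactly the assertion that the minimum distance equals $n+1$. Concretely, I would take the encoding unitary to be the identity, use no ancillas and $c=n$ ebits, so that the single logical state is the maximally entangled state $\ket{\Phi}_{AB}=\ket{\Psi^{+}}^{\otimes n}$ with $\ket{\Psi^{+}}=q^{-1/2}\sum_{j=0}^{q-1}\ket{j}_A\ket{j}_B$. This is a state of $n+c=2n$ qudits whose $A$-register, the $n$ qudits passed through the channel, has the right size, and it matches $\kappa=0$, $c=n$, $n-\kappa-c=0$; this already shows that a code $\mathcal{Q}=[\![n,0,\delta;n]\!]_q$ exists, and it remains to pin down $\delta$.

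Next I would fix the error model: on a $q$-dimensional system the errors are the Heisenberg--Weyl (generalized Pauli) operators $X^{a}Z^{b}$, $0\le a,b<q$, with $X\ket{j}=\ket{(j+1)\bmod q}$ and $Z\ket{j}=\omega^{j}\ket{j}$, $\omega=e^{2\pi\mathrm{i}/q}$. These are well defined for every integer $q\ge 2$, not only prime powers, and they satisfy $\operatorname{Tr}(X^{a}Z^{b})=q\,\delta_{a,0}\,\delta_{b,0}$. A weight-$w$ channel error $E$ is a tensor product of $n$ such single-qudit operators with exactly $w$ non-identity factors, so $\operatorname{Tr}(E)=q^{n}$ if $w=0$ and $\operatorname{Tr}(E)=0$ if $w\ge 1$. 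The only further ingredient is the standard identity for maximally entangled states: $\bra{\Phi}(M\otimes I_{B})\ket{\Phi}=q^{-n}\operatorname{Tr}(M)$ for every operator $M$ on the $n$ transmitted qudits.

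Combining these, for any channel error $E$ with $1\le{\rm wt}(E)\le n$ one obtains $\bra{\Phi}(E\otimes I_{B})\ket{\Phi}=q^{-n}\operatorname{Tr}(E)=0$, so $E$ sends $\ket{\Phi}$ to an orthogonal state; the same computation applied to $E_1^{\dagger}E_2$ (again a Heisenberg--Weyl operator up to a phase) shows that no such error acts as a scalar, i.e.\ the code is pure. Hence every error of weight at most $n$ is detected, and, because $n$ is the largest attainable weight, the minimum distance is exactly $n+1$. Equivalently --- this is the superdense-coding reading already hinted at in the text --- Bob measures each returned qudit jointly with its untouched reference half in the generalized Bell basis, reads off the pair $(a_i,b_i)$ on every coordinate, and thereby distinguishes all $q^{2n}$ possible errors.

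I expect the mathematical content to be entirely routine; the only points needing care are conventions rather than computations. First, one must state precisely what ``minimum distance $n+1$'' means when $\kappa=0$: it is purity to distance $n+1$, the detectability statement above, in line with the convention ${\rm wt}(\{\mathbf{0}\})=n+1$ that also governs the specialization of Proposition~\ref{prop:two} to $\mathcal{C}=[n,n,1]_{q^{2}}$. Second, one must make explicit that the error basis is the Heisenberg--Weyl family, since that is exactly what allows the prime-power hypothesis to be dropped; once that is fixed, both the trace identities and the maximally-entangled-state identity hold verbatim for all $q\ge 2$.
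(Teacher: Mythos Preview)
Your proposal is correct and follows essentially the same approach as the paper: both take the encoded state to be $n$ copies of the maximally entangled pair and argue, via a superdense-coding observation, that all $q^{2n}$ Heisenberg--Weyl errors on the $n$ channel qudits are distinguishable (hence detectable), yielding pure distance $n+1$. Your version is somewhat more explicit than the paper's sketch, in that you verify the error-detection condition directly via the identity $\bra{\Phi}(E\otimes I_B)\ket{\Phi}=q^{-n}\operatorname{Tr}(E)$, whereas the paper simply invokes the joint Bell-type measurement; but the underlying idea is the same.
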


\subsection{Our Contributions}

\begin{enumerate}
    \item We establish three propagation rules.
    
    The first rule, given as Theorem \ref{thm:more}, increases $c$, signifying that more entanglement is required. The derived quantum code can send more information without losing anything in terms of error handling. 
    
    Theorem \ref{thm:same} gives the second rule. It keeps $c$ fixed while lengthening the code, reducing its size. If some conditions are met, then the quantum distance may increase. 
    
    The third rule is in Theorem \ref{thm:less}. It decreases $c$ while lengthening the code. There may be a price to pay in terms of smaller distances on some occasions.

    For the last two rules, we have less theoretical control over the quantum distances and, therefore, searches are the next best option.

\item Our propagation rules are applicable to \emph{both} stabilizer QECCs and EAQECCs. Most prior propagation rules were designed for stabilizer QECCs whereas our propagation rules works on nontrivial EAQECCs as well. It is in conducting searches for EAQECCs with excellent parameters that the main advantage of our propagation rules come to the fore. They allow us to control either the distance or the number of ebits. 

\item In the realm of classical coding theory, Theorem \ref{thm:hull} provides a simple proof on the equivalence of $\mathbb{F}_{q^2}$-linear codes with diverse Hermitian hull dimensions for $q>2$. For any $[n,k,d]_{q^2}$-code $\mathcal{C}$ with $\dim({\rm Hull}_{\rm H}(\mathcal{C})) = \ell$, there exists an equivalent $[n,k,d]_{q^2}$-code $\mathcal{C}^{\prime}$ with $\dim({\rm Hull}_{\rm H}(\mathcal{C}^{\prime})) = \ell^{\prime}$ for each $\ell^{\prime} \in \{0,1,\ldots,\ell\}$. This generalizes the result for Hermitian LCD codes in \cite[Section V]{Carlet2018} that considered only the case of $\ell^{\prime}=0$.
\end{enumerate}

Given an $[n,k,d]_{q^2}$-code $\mathcal{C}$, Section \ref{sec:propagation} discusses three linear
algebraic approaches that derive codes whose dimensions of Hermitian hulls vary. In the first two approaches, the derived codes have fixed parameters $[n,k,d]_{q^2}$, while the dimension of the hull decreases. In the third approach, $k$ is fixed, while both $n$ and the dimension of the hull increase by $1$, and the distance $d$ is either fixed or improved by $1$. Section \ref{sec:bound} discusses upper bounds on the parameters of EAQECCs. They are subsequently used collectively in Section \ref{sec:compute} as a measure of goodness to motivate our computational process and results. The parameters of the resulting qubit and qutrit EAQECCs are listed in the tables after the concluding remarks in Section \ref{sec:conclu}.

\section{Three New Propagation Rules}\label{sec:propagation}

This section presents three new propagation rules based on their effects on $c$, which quantifies the amount of entanglement. We start by devising tools from the classical ingredients.

\subsection{Tools from Classical Coding Theory}\label{subsec:tools}

For any vector $\mathbf{v}:=(v_1,\ldots,v_{n}) \in
\mathbb{F}_{q^2}^{n}$, we denote by $\mathbf{v}^q$ the vector
$(v_1^q,\ldots,v_{n}^q)$. Let $\mathcal{C}$ be an $[n,k,d]_{q^2}$-code
with generator matrix $G$, and let $\mathbf{v}_1,\ldots,\mathbf{v}_k$
be the rows of $G$. We use $G^{\dagger}$ to denote the $n \times k$
matrix whose columns are $\mathbf{v}_1^q,\ldots,\mathbf{v}_k^q$. We
call $G^{\dagger}$ the {\it Hermitian transpose} of $G$. As usual,
$\mathbf{x}^{\top}$ and $M^{\top}$ denote the respective transposes of
a vector $\mathbf{x}$ and a matrix $M$.

We recall the relation between a code's generator matrix and its Hermitian hull.
\begin{lem}\label{lem:dimHull}
The dimension of the Hermitian hull is 
\begin{alignat}{5}
\dim({\rm Hull}_{\rm H}(\mathcal{C})) &= k-\rank(GG^{\dagger}).\label{eq:dimHull}
\end{alignat}
\end{lem}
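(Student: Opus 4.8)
The plan is to parametrize the codewords of $\mathcal{C}$ via the generator matrix $G$ and to rewrite the membership condition for $\mathcal{C}^{\perp_{\rm H}}$ as a single matrix equation in which $GG^{\dagger}$ appears, after which the claimed dimension count follows from rank--nullity.

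First I would observe that, since $G$ is a $k\times n$ matrix of full row rank $k$, the map $\mathbf{x}\mapsto\mathbf{x}G$ is an $\mathbb{F}_{q^2}$-linear isomorphism from $\mathbb{F}_{q^2}^{k}$ onto $\mathcal{C}$; in particular every codeword of $\mathcal{C}$ is $\mathbf{x}G$ for a unique $\mathbf{x}$. Next, I would characterize when such a codeword $\mathbf{c}=\mathbf{x}G$ lies in $\mathcal{C}^{\perp_{\rm H}}$. By the definition in \eqref{eq:Hermitian}, this requires $\sum_{i=1}^{n} c_i\, u_i^{\,q}=0$ for every $\mathbf{u}\in\mathcal{C}$. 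It suffices to impose this for $\mathbf{u}$ ranging over the rows $\mathbf{v}_1,\dots,\mathbf{v}_k$ of $G$: the pairing $\mathbf{u}\mapsto\sum_i c_i u_i^{\,q}$ is additive and satisfies $\sum_i c_i (a u_i)^q = a^{q}\sum_i c_i u_i^{\,q}$, so it vanishes on all of $\mathcal{C}$ as soon as it vanishes on a spanning set. By the definition of $G^{\dagger}$ (whose $j$-th column is $\mathbf{v}_j^{\,q}$), the $k$ scalars $\sum_i c_i v_{j,i}^{\,q}$ for $j=1,\dots,k$ are precisely the entries of the row vector $\mathbf{c}\,G^{\dagger}$. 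Hence $\mathbf{c}\in\mathcal{C}^{\perp_{\rm H}}$ iff $\mathbf{c}\,G^{\dagger}=\mathbf{0}$, and substituting $\mathbf{c}=\mathbf{x}G$ yields $\mathbf{x}\,GG^{\dagger}=\mathbf{0}$.

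Combining the two observations, ${\rm Hull}_{\rm H}(\mathcal{C})=\mathcal{C}\cap\mathcal{C}^{\perp_{\rm H}}=\{\mathbf{x}G:\mathbf{x}\in\mathbb{F}_{q^2}^{k},\ \mathbf{x}\,GG^{\dagger}=\mathbf{0}\}$. Since $\mathbf{x}\mapsto\mathbf{x}G$ is injective, $\dim({\rm Hull}_{\rm H}(\mathcal{C}))$ equals the dimension of the left null space of the $k\times k$ matrix $GG^{\dagger}$, namely $k-\rank(GG^{\dagger})$ by rank--nullity, which is \eqref{eq:dimHull}.

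I expect no serious obstacle here; the proof is essentially a translation exercise. The one point that needs care is the bookkeeping around the Hermitian pairing: verifying that testing orthogonality against the basis rows of $G$ is enough (using additivity together with the $q$-semilinear scaling of the pairing), and tracking the placement of the $q$-th powers so that the stacked conditions assemble exactly into $\mathbf{x}\,GG^{\dagger}=\mathbf{0}$ rather than into a transpose or conjugate variant of it.
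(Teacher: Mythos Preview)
Your proof is correct and follows essentially the same approach as the paper: parametrize $\mathcal{C}$ by $\mathbf{x}\mapsto\mathbf{x}G$, observe that membership in $\mathcal{C}^{\perp_{\rm H}}$ is equivalent to $\mathbf{x}GG^{\dagger}=\mathbf{0}$, and read off the dimension from rank--nullity. You simply give more detail than the paper does on why testing against the rows of $G$ suffices and on the final dimension count.
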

\begin{proof}
  A vector $\mathbf{v}$ is an element of ${\rm Hull}_{\rm H}(\mathcal{C})$ 
  if it is both a codeword of $\mathcal{C}$ and
  $\mathcal{C}^{\perp_{\rm H}}$. The first condition requires
  that $\mathbf{v}$ is in the row span of $G$, that is,
  $\mathbf{v}=\mathbf{u}G$ for some $\mathbf{u}\in\mathbb{F}_{q^2}^k$.
  The second condition requires that $\mathbf{v}$ is in the kernel of
  $G^\dagger$, \textit{i.\,e.}, $\mathbf{v}G^\dagger=\mathbf{0}$. In combination we have
  \begin{alignat}{5}
    {\rm Hull}_{\rm H}(\mathcal{C})=\{\mathbf{v}=\mathbf{u}G\colon
    \mathbf{u}\in\mathbb{F}_q^k \mbox{ and } \mathbf{u}GG^\dagger=\mathbf{0}\}.
  \end{alignat}
  This implies \eqref{eq:dimHull}.
\end{proof}

A {\it monomial matrix} is a square matrix with exactly one nonzero
entry in each row and each column and zeros elsewhere. The matrix is a
{\it permutation matrix} if all of its nonzero entries are $1$.  Based
on these two families of matrices, two equivalence relations among linear
codes can be defined.

\begin{defn}\label{def1}
Let two linear codes $\mathcal{C}_1$ and $\mathcal{C}_2$ with
respective generator matrices $G_1$ and $G_2$ be given. Then the
following statements hold.
\begin{enumerate}
  \item The codes $\mathcal{C}_1$ and $\mathcal{C}_2$ are permutation
    equivalent if there exists a permutation matrix $P$ such that $G_1
    P$ is a generator matrix of $\mathcal{C}_2$. 
  \item The codes $\mathcal{C}_1$ and $\mathcal{C}_2$ are monomially
    equivalent if there exists a monomial matrix $M$ such that $G_1M$
    is a generator matrix of $\mathcal{C}_2$. 
\end{enumerate}
\end{defn}
	
Equivalent codes have the same length, dimension, and minimum
distance. It can be shown (see, \textit{e.\,g.}, \cite[Theorem
  1.6.2]{Huffman2005}) that any linear code is permutation equivalent
to a linear code whose generator matrix is in the standard form. The next
result shows that the respective Hermitian hulls of two permutation
equivalent codes have the same dimension.
	
\begin{lem}\label{lem1H}
Any two permutation equivalent $\mathbb{F}_{q^2}$-linear codes
$\mathcal{C}_1$ and $\mathcal{C}_2$ have 
\[
\dim({\rm Hull}_{\rm H}(\mathcal{C}_1)) = \dim({\rm Hull}_{\rm H}(\mathcal{C}_2)).
\]
\end{lem}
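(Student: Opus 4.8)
The plan is to reduce the claim to the formula in Lemma~\ref{lem:dimHull}, namely $\dim({\rm Hull}_{\rm H}(\mathcal{C})) = k - \rank(GG^\dagger)$, and show that the quantity $\rank(GG^\dagger)$ is unchanged under passing to a permutation equivalent code. First I would fix a generator matrix $G_1$ of $\mathcal{C}_1$; since $\mathcal{C}_2$ is permutation equivalent to $\mathcal{C}_1$, there is a permutation matrix $P$ such that $G_1 P$ is a generator matrix of $\mathcal{C}_2$. Because permutation matrices have entries in the prime field (indeed in $\{0,1\}$), the Frobenius map acts trivially on them, so $P^q = P$ entrywise, and hence the Hermitian transpose satisfies $(G_1 P)^\dagger = P^\top (G_1)^\dagger$. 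The key point is then the identity
\begin{equation*}
(G_1 P)(G_1 P)^\dagger = G_1 P P^\top G_1^\dagger = G_1 G_1^\dagger,
\end{equation*}
where I use that $P$ is orthogonal, $P P^\top = I_n$.

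From this identity it follows immediately that $\rank\bigl((G_1P)(G_1P)^\dagger\bigr) = \rank(G_1 G_1^\dagger)$. Applying Lemma~\ref{lem:dimHull} to $\mathcal{C}_1$ with generator matrix $G_1$ and to $\mathcal{C}_2$ with generator matrix $G_1 P$ (both codes having the same dimension $k$, since equivalent codes share length, dimension, and minimum distance) yields
\begin{equation*}
\dim({\rm Hull}_{\rm H}(\mathcal{C}_2)) = k - \rank\bigl((G_1P)(G_1P)^\dagger\bigr) = k - \rank(G_1 G_1^\dagger) = \dim({\rm Hull}_{\rm H}(\mathcal{C}_1)),
\end{equation*}
which is the desired equality.

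One small point worth addressing for completeness is that Lemma~\ref{lem:dimHull} is stated for an arbitrary generator matrix of the code, so we should note that $\rank(GG^\dagger)$ does not depend on the choice of generator matrix: if $G' = T G$ for an invertible $T \in \mathrm{GL}_k(\mathbb{F}_{q^2})$, then $G'(G')^\dagger = T G G^\dagger T^\dagger$ has the same rank as $GG^\dagger$ since $T$ and $T^\dagger$ are invertible. This makes the statement of Lemma~\ref{lem:dimHull} well posed and lets us choose $G_1 P$ as the generator matrix of $\mathcal{C}_2$ without loss of generality.

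I do not expect any serious obstacle here; the only thing one must be careful about is the interaction of the Frobenius twist with $P$, i.e. verifying that $P^q = P$ so that $(G_1 P)^\dagger = P^\top G_1^\dagger$ rather than $(P^q)^\top G_1^\dagger$. This is exactly where the hypothesis "permutation" (as opposed to "monomial") equivalence is used: a general monomial matrix has nonzero entries in $\mathbb{F}_{q^2}^*$ that need not be fixed by the $q$-th power map, so the argument would break, which is consistent with the fact that monomial equivalence does \emph{not} in general preserve the Hermitian hull dimension.
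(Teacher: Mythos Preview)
Your proof is correct and follows essentially the same approach as the paper: both arguments compute $(G_1P)(G_1P)^\dagger = G_1 P P^\top G_1^\dagger = G_1 G_1^\dagger$ using $PP^\top=I$ and then invoke Lemma~\ref{lem:dimHull}. Your additional remarks on $P^q=P$, the independence of $\rank(GG^\dagger)$ from the choice of generator matrix, and the failure for monomial equivalence are helpful elaborations, but the core argument is identical.
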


\begin{proof}
Let $\mathcal{C}_1$ and $\mathcal{C}_2$ be permutation equivalent
codes with parameters $[n,k,d]_{q^2}$. Let $G_1$ be a generator matrix of $\mathcal{C}_1$. Hence, there is a permutation matrix $P$ such that $G_2 = G_1 P$ is a generator matrix of $\mathcal{C}_2$. This implies that
\[
G_2G_2^\dagger=G_1P(G_1P)^\dagger = G_1PP^{\top}G_1^\dagger =
G_1G_1^\dagger.
\]
In combination with Lemma \ref{lem:dimHull}, the
conclusion follows.
\end{proof}

Regarding monomially equivalent codes, Carlet {\it et al.} in
\cite{Carlet2018} demonstrated that Hermitian LCD codes over
$\mathbb{F}_q$ exist for all possible parameters when $q>2$. The next
result is a generalization of the results for Hermitian LCD codes in
\cite[Section V]{Carlet2018}.  In our notation, only the case of
$\ell^{\prime}=0$ was considered in the said reference.
\begin{thm}\label{thm:hull}
Let $q>2$ be a prime power and let $\mathcal{C}$ be an
$[n,k,d]_{q^2}$-code with $\dim({\rm Hull}_{\rm H}(\mathcal{C})) =
\ell$. Then there exists an equivalent $[n,k,d]_{q^2}$-code
$\mathcal{C}^{\prime}$ with $\dim({\rm Hull}_{\rm
  H}(\mathcal{C}^{\prime})) = \ell^{\prime}$ for each $\ell^{\prime}
\in \{0,1,\ldots,\ell\}$.
\end{thm}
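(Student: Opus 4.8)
The plan is to decrease the Hermitian hull dimension by one at a time, each step by rescaling a single coordinate. It thus suffices to prove the following claim: if $\mathcal{C}$ is an $[n,k,d]_{q^2}$-code with $\dim({\rm Hull}_{\rm H}(\mathcal{C})) = \ell \geq 1$, then there is a monomially equivalent $[n,k,d]_{q^2}$-code $\mathcal{C}''$ with $\dim({\rm Hull}_{\rm H}(\mathcal{C}'')) = \ell - 1$. Granting this, one obtains a code with any prescribed $\ell' \in \{0,1,\ldots,\ell\}$ by applying the claim $\ell - \ell'$ times (taking $\mathcal{C}' = \mathcal{C}$ when $\ell' = \ell$); since monomial equivalence composes and preserves $[n,k,d]_{q^2}$, the resulting $\mathcal{C}'$ has the asserted parameters. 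Moreover, a monomial matrix factors as a permutation matrix times an invertible diagonal matrix, and permutation equivalence already preserves the Hermitian hull dimension by Lemma~\ref{lem1H}, so it is enough to obtain $\mathcal{C}''$ by right-multiplying a generator matrix $G$ of $\mathcal{C}$ by an invertible diagonal matrix.

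First I would record the effect of a diagonal coordinate change on the Gram matrix $M := GG^{\dagger}$. If $D = \Diag(d_1,\ldots,d_n)$ with $d_i \in \mathbb{F}_{q^2}^{*}$, then $G' := GD$ is a generator matrix of a monomially equivalent code and $G'(G')^{\dagger} = G\,DD^{\dagger}\,G^{\dagger} = G\Lambda G^{\dagger}$ with $\Lambda = \Diag(d_1^{q+1},\ldots,d_n^{q+1})$. This is exactly where $q > 2$ is used: the norm map $x \mapsto x^{q+1}$ is a surjection from $\mathbb{F}_{q^2}^{*}$ onto $\mathbb{F}_q^{*}$, and $|\mathbb{F}_q^{*}| = q - 1 \geq 2$; hence for any coordinate $j$ and any $\lambda \in \mathbb{F}_q^{*} \setminus \{1\}$ we may pick $d_j$ with $d_j^{q+1} = \lambda$ and all other $d_i = 1$, which gives $G'(G')^{\dagger} = M + (\lambda - 1)\,\mathbf{g}_j\mathbf{g}_j^{\dagger}$, where $\mathbf{g}_j$ is the $j$-th column of $G$ and $\mathbf{g}_j^{\dagger}$ denotes the corresponding row vector $(\mathbf{g}_j^{q})^{\top}$ (so that $M = \sum_{i=1}^{n}\mathbf{g}_i\mathbf{g}_i^{\dagger}$).

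The core of the argument is a rank count. The matrix $M = GG^{\dagger}$ is Hermitian, $M^{\dagger} = M$, and by Lemma~\ref{lem:dimHull} it has rank $k - \ell < k$, so ${\rm Col}(M)$ is a proper subspace of $\mathbb{F}_{q^2}^{k}$. Since $\rank(G) = k$, the columns $\mathbf{g}_1,\ldots,\mathbf{g}_n$ span $\mathbb{F}_{q^2}^{k}$, so some $\mathbf{g}_j$ lies outside ${\rm Col}(M)$. Fix such a $j$ and any $\lambda \in \mathbb{F}_q^{*} \setminus \{1\}$; then $M' := M + (\lambda - 1)\mathbf{g}_j\mathbf{g}_j^{\dagger}$ is a rank-one update of $M$, and by the standard rank-one update criterion $\rank(M') = \rank(M) + 1$ provided $\mathbf{g}_j \notin {\rm Col}(M)$ and $\mathbf{g}_j^{\dagger} \notin {\rm Row}(M)$. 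The first holds by the choice of $j$; the second is equivalent to the first because $M$ is Hermitian — indeed $M^{\top} = M^{q}$, so ${\rm Row}(M)$ is exactly the set of transposes of vectors whose $q$-th powers lie in ${\rm Col}(M)$, and therefore $\mathbf{g}_j^{\dagger} = (\mathbf{g}_j^{q})^{\top} \in {\rm Row}(M)$ if and only if $\mathbf{g}_j = (\mathbf{g}_j^{q})^{q} \in {\rm Col}(M)$. Hence $\rank(M') = k - \ell + 1$, and Lemma~\ref{lem:dimHull} applied to the generator matrix $G'$ of $\mathcal{C}''$ yields $\dim({\rm Hull}_{\rm H}(\mathcal{C}'')) = k - \rank(M') = \ell - 1$, proving the claim.

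The step I expect to be most delicate is matching the two hypotheses of the rank-one update criterion at one coordinate: raising the rank of a rank-one perturbation requires the new column to leave ${\rm Col}(M)$ and, simultaneously, the new row to leave ${\rm Row}(M)$, and there is no a priori reason these should occur at the same index. The resolution is the Hermiticity of $M = GG^{\dagger}$, which ties ${\rm Row}(M)$ to ${\rm Col}(M)$ via the Frobenius map and collapses both conditions into $\mathbf{g}_j \notin {\rm Col}(M)$; this single condition is satisfiable precisely when $\rank(M) < k$, i.e.\ precisely when $\ell \geq 1$, which is the regime in which the claim is needed. The remaining ingredients are routine: that $GG^{\dagger}$ is Hermitian, that right multiplication by an invertible diagonal matrix yields a monomially equivalent code in the sense of Definition~\ref{def1}, the surjectivity of the norm map, and the elementary rank-one update formula itself.
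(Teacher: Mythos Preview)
Your proof is correct and takes a genuinely different route from the paper's. The paper first brings the generator matrix into a convenient block form
\[
G=\begin{pmatrix} I_\ell & A \\ \mathbf{0} & B \end{pmatrix}
\]
adapted to the hull, so that $GG^\dagger=\Diag(\mathbf{0},BB^\dagger)$, and then rescales the first $\ell-\ell'$ coordinates simultaneously; the new Gram matrix is computed explicitly and its rank read off at once. Your argument instead works with an arbitrary generator matrix, proceeds one coordinate at a time, and replaces the explicit block computation by a rank-one update together with the Hermiticity of $GG^\dagger$ to force the column and row conditions to coincide. The paper's approach is more constructive---it tells you exactly which coordinates to rescale and exhibits $G'G'^\dagger$ in closed form---at the price of first passing to a permutation-equivalent code and a special basis. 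Your approach avoids any preliminary normalisation and isolates the conceptual reason the trick works (a column of $G$ outside $\text{Col}(GG^\dagger)$ always exists when $\ell\ge 1$, and Hermiticity makes that single condition suffice), but it is iterative rather than one-shot. In fact the two are compatible: in the paper's block form the first $\ell$ columns of $G$ are $e_1,\ldots,e_\ell$, which visibly lie outside $\text{Col}(GG^\dagger)\subseteq\{0\}^\ell\times\mathbb{F}_{q^2}^{k-\ell}$, so the paper's choice of coordinates is a specific instance of your column-selection criterion.
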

\begin{proof}
Without loss of generality, we can assume that ${\rm Hull}_{\rm H} (C)$ is an $[n,\ell,d']_{q^2}$-code generated in the standard form by $G_1 =\begin{pmatrix} I_{\ell}\; A \end{pmatrix}$. Moreover, we can choose a generator
matrix for $\mathcal{C}$ of the form
\begin{equation}\label{eq:GenG}
G=\begin{pmatrix}
I_{\ell} & A \\
\mathbf{0} & B
\end{pmatrix}
\end{equation}
and derive
\begin{equation}\label{eq:matprod}
G G^{\dagger} = 
\begin{pmatrix}
I_{\ell} & A \\
\mathbf{0} & B
\end{pmatrix}
\begin{pmatrix}
I_{\ell} & \mathbf{0} \\
A^{\dagger} & B^{\dagger}
\end{pmatrix}
=
\begin{pmatrix}
I_{\ell}+ A A^{\dagger} & A B^{\dagger} \\
B A^{\dagger} & B B^{\dagger}
\end{pmatrix}.
\end{equation}
The submatrices $I_{\ell}+ A A^{\dagger}$ and $A B^{\dagger}$ are zero
since $G_1$ generates the Hermitian hull, which is contained in the
Hermitian dual of $\mathcal{C}$. We have $\rank({B B^{\dagger}}) = k -
\ell$. We now consider the generator matrix
\begin{equation}
G^{\prime} = G \;  \Diag\left(a_1,a_2,\ldots,a_{\ell - \ell^{\prime}},1,\ldots,1\right)
\end{equation}
with $a_j\in\mathbb{F}_{q^2}^*$ and $a_j^{q+1} \neq 1$ for $1 \leq j
\leq \ell - \ell^{\prime}$. These conditions can always be met for $q>2$.
Let
\[
T = \begin{pmatrix}
a_1^{q+1}-1 & 0 & \cdots & 0 \\
0  & a_2^{q+1}-1 & \cdots & 0 \\
\vdots  & \vdots & \ddots & \vdots \\
0 & 0  & \cdots & a_{\ell - \ell^{\prime}}^{q+1}-1\\
\end{pmatrix}.
\]
We verify that $G^{\prime} G^{\prime \dagger}$ is the block-diagonal matrix
\begin{multline}{}
G^{\prime} G^{\prime \dagger}  = 
\begin{pmatrix}
T & \mathbf{0} & \mathbf{0} \\
\mathbf{0} & \mathbf{0} & \mathbf{0} \\
\mathbf{0} & \mathbf{0}  & B B^{\dagger}\\
\end{pmatrix} = \\
\Diag \left(a_1^{q+1}-1,\cdots,a_{\ell - \ell^{\prime}}^{q+1}-1, 0, \cdots,0, 
B B^{\dagger}\right)
\end{multline}
and that $\rank\left(G^{\prime} G^{\prime \dagger}\right) = k -
\ell^{\prime}$. This completes the proof.
\end{proof}

\begin{remark}
Independently and coming from a different motivation, H. Chen derived a similar result to Theorem \ref{thm:hull} in \cite[Corollary 2.2]{Chen2022}. The said corollary was added since version $2$ of his pre-print following a private communication with the first author. 
\end{remark}

We will need the following lemma (see \cite[Theorem 6.32]{Hou}) later.
\begin{lem}\label{lem42}
Let $A$ be an $n\times n$ matrix with rank $s$ over $\mathbb{F}_{q^2}$
such that $A= A^{\dagger}$. Then $A$ is Hermitian congruent to 
\[
\Diag \Big(\underbrace{1,\cdots,1}_s,0,\cdots,0\Big).
\]
\end{lem}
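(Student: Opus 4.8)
The plan is to prove the statement by induction on $n$, in two stages: first show that every Hermitian matrix over $\mathbb{F}_{q^2}$ is Hermitian congruent to a diagonal matrix, then normalize the nonzero diagonal entries to $1$. Throughout, ``Hermitian congruent'' means $A \mapsto P A P^{\dagger}$ for an invertible $P$, which preserves rank; hence once $A$ has been brought to the form $\Diag(1,\ldots,1,0,\ldots,0)$, the number of ones is forced to be $s = \rank(A)$. Two standard facts about finite fields are used: the $q$-power map is the nontrivial automorphism of $\mathbb{F}_{q^2}/\mathbb{F}_q$, so $\lambda \neq \lambda^q$ for $\lambda \notin \mathbb{F}_q$; and the norm map $N\colon\mathbb{F}_{q^2}^{*}\to\mathbb{F}_q^{*}$, $c\mapsto c^{q+1}$, is surjective.

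The base case $n\le 1$ is immediate: if $A=(a)$ with $a=a^q$ then $a\in\mathbb{F}_q$, and when $a\neq 0$ we scale by $c$ with $c^{q+1}=a^{-1}$ to get $(1)$. For the inductive step, the crux is to exhibit a vector $\mathbf{v}\in\mathbb{F}_{q^2}^n$ with $\mathbf{v}A\mathbf{v}^{\dagger}\neq 0$ whenever $A\neq\mathbf{0}$. Suppose instead that $\mathbf{x}A\mathbf{x}^{\dagger}=0$ for all $\mathbf{x}$. Expanding $(\mathbf{x}+\mathbf{y})A(\mathbf{x}+\mathbf{y})^{\dagger}=0$ and using $A=A^{\dagger}$, which gives $\mathbf{y}A\mathbf{x}^{\dagger}=(\mathbf{x}A\mathbf{y}^{\dagger})^{q}$, one obtains that $\alpha:=\mathbf{x}A\mathbf{y}^{\dagger}$ satisfies $\alpha+\alpha^{q}=0$. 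Replacing $\mathbf{x}$ by $\lambda\mathbf{x}$ then yields $\lambda\alpha+\lambda^{q}\alpha^{q}=0$, so $\alpha(\lambda-\lambda^{q})=0$ for every $\lambda\in\mathbb{F}_{q^2}$; choosing $\lambda\notin\mathbb{F}_q$ forces $\alpha=0$, whence $\mathbf{x}A\mathbf{y}^{\dagger}=0$ for all $\mathbf{x},\mathbf{y}$ and $A=\mathbf{0}$, a contradiction.

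Given such a $\mathbf{v}$, the scalar $\beta:=\mathbf{v}A\mathbf{v}^{\dagger}$ is fixed by the $q$-power map, hence lies in $\mathbb{F}_q^{*}$, and after rescaling $\mathbf{v}$ by some $c$ with $c^{q+1}=\beta^{-1}$ we may assume $\mathbf{v}A\mathbf{v}^{\dagger}=1$. The functional $\mathbf{x}\mapsto\mathbf{x}A\mathbf{v}^{\dagger}$ is nonzero, so its kernel $W$ has dimension $n-1$ and does not contain $\mathbf{v}$; taking $P$ to have first row $\mathbf{v}$ and remaining rows a basis of $W$ makes $P$ invertible and, using again $\mathbf{v}A\mathbf{p}^{\dagger}=(\mathbf{p}A\mathbf{v}^{\dagger})^{q}$, gives $PAP^{\dagger}=\Diag(1,A')$ with $A'$ an $(n-1)\times(n-1)$ Hermitian matrix. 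Applying the induction hypothesis to $A'$ diagonalizes it with ones and zeros on the diagonal, and reassembling (plus a final relabelling of the coordinates to group the ones together) yields the claimed normal form; rank invariance under congruence identifies the number of ones as $s$.

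The one step with genuine content is the polarization argument showing that a nonzero Hermitian matrix admits a non-isotropic vector. It is exactly here that the arithmetic of $\mathbb{F}_{q^2}/\mathbb{F}_q$ enters, and it is worth noting that the argument goes through even when $q$ is even, precisely because $\lambda\neq\lambda^q$ for $\lambda\notin\mathbb{F}_q$ in all characteristics; this is in contrast with symmetric bilinear forms in characteristic $2$, where the diagonalization statement can fail. Everything else is routine linear algebra together with the surjectivity of the norm map on finite fields.
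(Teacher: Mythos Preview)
Your argument is correct. The polarization step is sound: from $A=A^{\dagger}$ one indeed gets $\mathbf{y}A\mathbf{x}^{\dagger}=(\mathbf{x}A\mathbf{y}^{\dagger})^{q}$, so if the form were totally isotropic then $\alpha+\alpha^{q}=0$ and the substitution $\mathbf{x}\mapsto\lambda\mathbf{x}$ forces $\alpha(\lambda-\lambda^{q})=0$, giving $\alpha=0$ for any $\lambda\notin\mathbb{F}_q$. The reduction via the hyperplane $W=\ker(\,\cdot\,A\mathbf{v}^{\dagger})$ is standard and your use of the surjectivity of the norm to normalize the pivot to $1$ is exactly what is needed. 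One minor remark: after the inductive step you obtain $\Diag(1,A')$ and then $\Diag(1,1,\ldots,1,0,\ldots,0)$ directly, so the ``final relabelling'' is not actually required---but including it does no harm.

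As for comparison with the paper: the paper does not supply its own proof of this lemma. It is quoted as a known result from the cited reference (Theorem~6.32 in Hou's text on finite fields), so there is nothing to compare your argument against here. Your self-contained inductive proof is a perfectly good substitute for that citation.
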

Here two matrices $A$ and $B$ over $\mathbb{F}_{q^2}$ are {\it Hermitian congruent} if there exists a nonsingular matrix $D$ such that $B = D A D^{\dagger}$.

Using Lemma \ref{lem42}, we derive the following result. It enables an
$[n,k]_{q^2}$-code with $\ell$-dimensional Hermitian hull to generate
an $[n+1,k]_{q^2}$-code with $(\ell+1)$-dimensional Hermitian hull. 
\begin{prop}\label{cor:extension}
Let $0\leq \ell<\min\{k,n-k\}$. Given an $[n,k,d]_{q^2}$-code $\mathcal{C}$ with
$\dim({\rm Hull}_{\rm H}(\mathcal{C}))=\ell$ and generator matrix
$G$, one can add one column to $G$ such that the Hermitian hull of the
extended $[n+1,k,d^{\prime}]_{q^2}$-code $\mathcal{C}^{\prime}$ has
dimension $\dim\left({\rm Hull}_{\rm H}(\mathcal{C}^{\prime})\right) = \ell+1$ and minimum distance $d'$, with $d\le d'\le d+1$.
\end{prop}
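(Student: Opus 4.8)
The plan is to realize the addition of a column as a Hermitian rank-one update of the Gram-type matrix $G G^{\dagger}$ and then to control the rank of that update. Write $G' = (\,G \mid \mathbf{y}\,)$ for the extended $k \times (n+1)$ generator matrix, where $\mathbf{y} \in \mathbb{F}_{q^2}^{k}$ is the column to be chosen. Since $G$ already has rank $k$, so does $G'$, hence $\mathcal{C}'$ is an $[n+1,k,d']_{q^2}$-code. A block computation gives
\[
G'(G')^{\dagger} \;=\; G G^{\dagger} + \mathbf{y}\,(\mathbf{y}^{q})^{\top},
\]
and the perturbation $\mathbf{y}\,(\mathbf{y}^{q})^{\top}$ is Hermitian of rank at most $1$. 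By Lemma \ref{lem:dimHull}, the hypothesis $\dim({\rm Hull}_{\rm H}(\mathcal{C})) = \ell$ is equivalent to $\rank(G G^{\dagger}) = k - \ell$, and the desired conclusion $\dim({\rm Hull}_{\rm H}(\mathcal{C}')) = \ell+1$ is equivalent to $\rank\!\big(G'(G')^{\dagger}\big) = k - \ell - 1$; note $\ell < k$ makes this a non-negative target.

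Next I would normalize $G G^{\dagger}$. It is $k \times k$, Hermitian, and of rank $k-\ell$, so by Lemma \ref{lem42} there is a nonsingular $D$ with $D (G G^{\dagger}) D^{\dagger} = \Diag(I_{k-\ell},\mathbf{0}_{\ell})$. Replacing $G$ by $DG$ alters neither the code $\mathcal{C}$ nor $\dim({\rm Hull}_{\rm H}(\mathcal{C}))$, and a column added to $DG$ can be transported back through $D^{-1}$ to a column added to the original $G$; so we may assume from the outset that $G G^{\dagger} = \Diag(I_{k-\ell},\mathbf{0}_{\ell})$.

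Now set $\mathbf{y} = (y,0,\ldots,0)^{\top}$ with a scalar $y \in \mathbb{F}_{q^2}^{*}$ satisfying $y^{q+1} = -1$. Such a $y$ exists because the norm map $x \mapsto x^{q+1}$ from $\mathbb{F}_{q^2}^{*}$ onto $\mathbb{F}_{q}^{*}$ is surjective, which is valid for every prime power $q$, including $q = 2$. Then $\mathbf{y}\,(\mathbf{y}^{q})^{\top} = \Diag(y^{q+1},0,\ldots,0) = \Diag(-1,0,\ldots,0)$, so $G'(G')^{\dagger} = \Diag(0,1,\ldots,1,0,\ldots,0)$ with exactly $k-\ell-1$ ones. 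Hence $\rank\!\big(G'(G')^{\dagger}\big) = k-\ell-1$, and Lemma \ref{lem:dimHull} yields $\dim({\rm Hull}_{\rm H}(\mathcal{C}')) = \ell+1$.

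Finally, for the distance bound the choice of $\mathbf{y}$ is irrelevant. Every codeword of $\mathcal{C}'$ has the form $\mathbf{u}G' = (\,\mathbf{u}G \mid \mathbf{u}\mathbf{y}\,)$, and since $\rank(G) = k$ the map that deletes the last coordinate is a bijection from $\mathcal{C}'$ onto $\mathcal{C}$. Thus ${\rm wt}(\mathbf{u}G') \in \{\,{\rm wt}(\mathbf{u}G),\,{\rm wt}(\mathbf{u}G)+1\,\}$. Minimizing over $\mathbf{u} \neq \mathbf{0}$ gives $d' \ge d$, while extending a weight-$d$ codeword of $\mathcal{C}$ produces a codeword of $\mathcal{C}'$ of weight at most $d+1$, so $d \le d' \le d+1$. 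The one genuinely delicate point is the middle step: a rank-one Hermitian update can raise, preserve, or lower the rank, and Lemma \ref{lem42} is precisely what lets us arrange that it cancels a single diagonal $1$, with the solvability of $y^{q+1} = -1$ being where the finite-field arithmetic enters.
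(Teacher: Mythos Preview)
Your argument is correct and follows essentially the same route as the paper: diagonalize $GG^{\dagger}$ via Lemma~\ref{lem42}, append a column $(\alpha,0,\ldots,0)^{\top}$ with $\alpha^{q+1}=-1$ to cancel one diagonal entry, and transport the column back through $D^{-1}$ to the original generator matrix. Your treatment of the distance bound $d\le d'\le d+1$ is in fact more explicit than the paper's.
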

\begin{proof}
Let $q=p^m$. Let $G$ be a generator matrix of $\mathcal{C}$. Since
$\dim({\rm Hull}_{\rm H}(\mathcal{C}))=\ell$, by Lemma \ref{lem:dimHull}, $\rank\left(GG^{\dagger}\right)=s$, where $s = k-\ell\ge 1$. By Lemma
\ref{lem42}, there exists a nonsingular $k \times k$ matrix $D$ over
$\mathbb{F}_{q^2}$ such that
\begin{equation}\label{eq:thm41H}
DGG^{\dagger}D^{\dagger}= \Diag \Big(\underbrace{1,\cdots,1}_s,0,\cdots,0 \Big). 
\end{equation}
Since $s\ge 1$, the first diagonal entry of the $k\times k$ diagonal
matrix in \eqref{eq:thm41H} must be $1$. Because $D$ is nonsingular,
$DG$ is also a generator matrix of $\mathcal{C}$. Let $\alpha \in
\mathbb{F}_{q^2}$ be such that $\alpha^{q+1} = -1$. Such an $\alpha$
always exists since $\alpha^{q+1}$ runs through $\mathbb{F}_q$ when
$\alpha$ runs through $\mathbb{F}_{q^2}$. Let $G^{\prime}$ be the
$k\times(n+1)$ matrix defined by
\begin{equation}
G^{\prime}=\left(\begin{array}{@{}c|c@{}}
DG & \begin{matrix}
\alpha  \\
0   \\
\vdots  \\
0 \\
\end{matrix}
\end{array}\right).\label{eq:extGenMat}
\end{equation}
Then $G^{\prime}$ generates an $[n+1,k,d^{\prime}]_q$-code
$\mathcal{C}^{\prime}$ with $d\leq d^{\prime}\leq d+1$. One then
verifies that
\begin{alignat}{5}
  G^{\prime}G^{\prime\dagger}
  &=DGG^\dagger G^\dagger+\Diag(\alpha^{q+1},0,\cdots,0)\\
  &= \Diag \Big(0,\underbrace{1,\cdots,1}_{s-1},0,\cdots,0 \Big), 
\end{alignat}
and, hence, $\rank(G^{\prime} \, G^{\prime\dagger}) = s-1$.
The claim about the dimension of the hull follows from Lemma
\ref{lem:dimHull}. As $D$ is invertible, the matrix
\begin{alignat}{5}
    D^{-1}G'=\left(\begin{array}{@{}c|c@{}}
          G & D^{-1}\begin{pmatrix}
               \alpha  \\
               0   \\
               \vdots  \\
                0 \\
          \end{pmatrix}
    \end{array}\right)
\end{alignat}
is a generator matrix for $\mathcal{C}'$ in the desired form.
\end{proof}

We note that the matrix $D$ in \eqref{eq:thm41H} is not unique. Moreover, there are $q+1$ choices for the element $\alpha$, which can be at any of the first $s$ positions. Hence, there are many choices for the additional column in Proposition \ref{cor:extension}.

In \cite{lisonvek2014quantum}, Lison{\v{e}}k and Singh proposed a modified construction of quantum codes by relaxing the self-orthogonality requirement. From a linear code $\mathcal{C}$ that is not Hermitian self-orthogonal, one can obtain a new linear code which is Hermitian self-orthogonal by adding some rows and columns to a generator matrix of $\mathcal{C}$. Inspired by this result, we show that an 
$[n,k]_{q^2}$-code with $\ell$-dimensional Hermitian hull gives rise to an $[n+1,k+1]_{q^2}$-code with $(\ell+1)$-dimensional Hermitian hull.

\begin{prop}\label{pextendrule}
Let $\mathcal{C}$ be an $[n,k,d]_{q^2}$-code with basis $\{\mathbf{a}_1,\cdots,\mathbf{a}_k\}$ and $\dim({\rm Hull}_{\rm H}(\mathcal{C}))=\ell$, with $0\le\ell<\min\{k,n-k\}$. Let $\mathbf{c}$ be a chosen codeword of $\mathcal{C}^{\perp_{\rm H}}\setminus{\rm Hull}_{\rm H}(\mathcal{C})$ such that $\mathbf{c}\mathbf{c}^{\dagger}\neq 0$. Then there exists an $[n+1,k+1,d^{\prime}]_{q^2}$-code $\mathcal{C}^{\prime}$ with
$\dim\left({\rm Hull}_{\rm H}(\mathcal{C}^{\prime})\right) = \ell+1$ and $d^{\prime}=\min\{d,d_0+1\}$, where $d_0$ is the minimum distance of the code generated by $\{\mathbf{a}_1,\cdots,\mathbf{a}_k,\mathbf{c}\}$.
\end{prop}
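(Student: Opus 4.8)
The plan is to follow a Lison{\v{e}}k--Singh-style augmentation: adjoin $\mathbf{c}$ as a new row of a generator matrix and introduce a single new coordinate that is nonzero only in that row, with the new entry $\beta$ chosen so that the new row becomes Hermitian-isotropic after the extension.

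First I would take the generator matrix $G$ of $\mathcal{C}$ with rows $\mathbf{a}_1,\ldots,\mathbf{a}_k$ and consider the $(k+1)\times(n+1)$ matrix
\[
G^{\prime}=\left(\begin{array}{@{}c|c@{}} G & \mathbf{0}\\ \mathbf{c} & \beta\end{array}\right),\qquad \beta\in\mathbb{F}_{q^2}\ \text{to be fixed}.
\]
Since $\mathbf{c}\in\mathcal{C}^{\perp_{\rm H}}\setminus{\rm Hull}_{\rm H}(\mathcal{C})$ we have $\mathbf{c}\notin\mathcal{C}$, so the $k+1$ rows of $G^{\prime}$ are linearly independent and $\dim\mathcal{C}^{\prime}=k+1$. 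The crucial computation is the product $G^{\prime}G^{\prime\dagger}$: because $\mathbf{c}\in\mathcal{C}^{\perp_{\rm H}}$ means exactly $\mathbf{c}G^{\dagger}=\mathbf{0}$ (as in the proof of Lemma~\ref{lem:dimHull}), and hence $G\mathbf{c}^{\dagger}=(\mathbf{c}G^{\dagger})^{\dagger}=\mathbf{0}$, the two off-diagonal blocks vanish and one obtains the block-diagonal matrix
\[
G^{\prime}G^{\prime\dagger}=\begin{pmatrix} GG^{\dagger} & \mathbf{0}\\ \mathbf{0} & \mathbf{c}\mathbf{c}^{\dagger}+\beta^{q+1}\end{pmatrix}.
\]
Now $\mathbf{c}\mathbf{c}^{\dagger}=\sum_{i} c_i^{q+1}$ lies in $\mathbb{F}_q^{*}$ by hypothesis, and the norm map $\beta\mapsto\beta^{q+1}$ from $\mathbb{F}_{q^2}^{*}$ onto $\mathbb{F}_q^{*}$ is surjective, so I can pick $\beta\in\mathbb{F}_{q^2}^{*}$ with $\beta^{q+1}=-\mathbf{c}\mathbf{c}^{\dagger}$; this is precisely where the assumption $\mathbf{c}\mathbf{c}^{\dagger}\ne 0$ is used. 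With this choice the lower-right block is zero, so ${\rm rank}(G^{\prime}G^{\prime\dagger})={\rm rank}(GG^{\dagger})=k-\ell$, and Lemma~\ref{lem:dimHull} applied to $\mathcal{C}^{\prime}$ yields $\dim({\rm Hull}_{\rm H}(\mathcal{C}^{\prime}))=(k+1)-(k-\ell)=\ell+1$.

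For the minimum distance I would classify a codeword $\mathbf{u}G^{\prime}$ of $\mathcal{C}^{\prime}$ by the coefficient $u_{k+1}$ of the last row. If $u_{k+1}=0$ the codeword equals $(\mathbf{u}_1G,0)$, of weight $\ge d$ when nonzero; if $u_{k+1}\ne 0$ it equals $(\mathbf{w},u_{k+1}\beta)$ with $\mathbf{w}=\mathbf{u}_1G+u_{k+1}\mathbf{c}$ a \emph{nonzero} codeword of the code $\mathcal{D}$ generated by $\{\mathbf{a}_1,\ldots,\mathbf{a}_k,\mathbf{c}\}$ (nonzero since $\mathbf{c}\notin\mathcal{C}$), so its weight is $\ge d_0+1$ because the appended entry $u_{k+1}\beta$ is nonzero. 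Hence $d^{\prime}\ge\min\{d,d_0+1\}$. Conversely, extending a minimum-weight word of $\mathcal{C}$ by $0$ shows $d^{\prime}\le d$; and if $d_0<d$, any minimum-weight word of $\mathcal{D}$ must involve $\mathbf{c}$ with nonzero coefficient (otherwise it would be a word of $\mathcal{C}$ of weight $d_0<d$), and extending it shows $d^{\prime}\le d_0+1$. Since $\mathcal{C}\subseteq\mathcal{D}$ forces $d_0\le d$, combining these gives $d^{\prime}=\min\{d,d_0+1\}$.

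The hull statement is essentially automatic once the block-diagonal form of $G^{\prime}G^{\prime\dagger}$ is established, so I expect the main obstacle to be the distance bookkeeping --- in particular verifying that the relevant word of $\mathcal{D}$ is nonzero, and that when $d_0<d$ a minimum-weight word of $\mathcal{D}$ can indeed be taken with nonzero $\mathbf{c}$-component, so that the upper bound $d^{\prime}\le d_0+1$ is attained.
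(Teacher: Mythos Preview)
Your proof is correct and follows essentially the same construction and computation as the paper: the same augmented generator matrix $G'$, the same choice of $\beta$ via surjectivity of the norm to kill the lower-right entry of $G'G'^{\dagger}$, and the same block computation together with Lemma~\ref{lem:dimHull} for the hull dimension. Your treatment of the minimum distance is in fact more thorough than the paper's, which simply asserts $d'=\min\{d,d_0+1\}$ by case-splitting on $x_{k+1}$; you supply the matching upper bounds and the observation that $\mathbf{c}\notin\mathcal{C}$ guarantees the relevant words are nonzero.
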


\begin{proof}
Such a codeword $\mathbf{c}\in\mathcal{C}^{\perp_{\rm H}} \setminus{\rm Hull}_{\rm H}(\mathcal{C})$ with $\mathbf{c}\mathbf{c}^{\dagger}\neq 0$ always exists since $\dim({\rm Hull}_{\rm H}(\mathcal{C}))=\ell$ with $0\le\ell<\min\{k,n-k\}$. Let $G$ be a generator matrix of $\mathcal{C}$ whose rows are $\{\mathbf{a}_1,\cdots,\mathbf{a}_k\}$. Let $\mathbf{c}\mathbf{c}^{\dagger}=\alpha$. Since $\alpha\in\mathbb{F}_q^*$, there exists $\beta\in\mathbb{F}_{q^2}^*$ such that $\beta^{q+1}=-\alpha$. Let $G^{\prime}$ be the
$(k+1)\times(n+1)$ matrix defined by
\begin{equation}\label{eq26}
G^{\prime} := \left(
\begin{matrix}G&\mathbf{0}_{k\times 1}\\ 
\mathbf{c}&\beta \end{matrix}\right).
\end{equation}
Then we obtain the code 
\[
\mathcal{C}^{\prime}=\{(x_1,\cdots,x_{k+1})\cdot G^{\prime} : x_i\in\mathbb{F}_{q^2},i\in[1,\cdots,k+1]\}.
\]
Putting $x_{k+1}$ to be either zero or nonzero, we deduce that the minimum distance of $\mathcal{C}^{\prime}$ is $d^{\prime}=\min\{d,d_0+1\}$. It follows from (\ref{eq26}) that 
\begin{equation}
G^{\prime} G^{\prime\dagger} = 
\begin{pmatrix}
G G^{\dagger} & G \mathbf{c}^{\dagger} \\
\mathbf{c} G^{\dagger} & 0
\end{pmatrix}.
\end{equation}
Since $\mathbf{c} \in \mathcal{C}^{\perp_{\rm H}}$, the column  $G \mathbf{c}^{\dagger}$ is a zero vector. Thus, $\dim\left({\rm Hull}_{\rm H}(\mathcal{C}^{\prime})\right) = k+1-\rank(GG^{\dagger})=\ell+1$.
\end{proof}

\subsection{Propagation Rules}\label{subsec:rules}

The new propagation rules are presented based on how they affect the variable $c$. 

\begin{thm}[More Entanglement]\label{thm:more}
For $q>2$, the existence of a pure $\dsb{n,\kappa,\delta;c}_q$-code
$\mathcal{Q}$, constructed by Proposition \ref{prop:two}, implies the
existence of an $\dsb{n, \kappa+i, \delta; c+i}_q$-code
$\mathcal{Q}^{\prime}$ that is pure to distance $\delta$ for each $i
\in\{1,\ldots,\ell\}$, where $\ell$ is the dimension of the Hermitian
hull of the $\mathbb{F}_{q^2}$-linear code $\mathcal{C}$ that
corresponds to $\mathcal{Q}$.
\end{thm}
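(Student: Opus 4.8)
The plan is to realize the code $\mathcal{Q}'$ by applying Theorem~\ref{thm:hull} to the underlying classical code $\mathcal{C}$, and then feeding the resulting equivalent code back into the Hermitian construction of Proposition~\ref{prop:two}. Recall that $\mathcal{Q}$ arises from an $[n,k]_{q^2}$-code $\mathcal{C}$ with $\dim({\rm Hull}_{\rm H}(\mathcal{C})) = \ell$, so that $c = k - \ell$, $\kappa = n - 2k + c = n - k - \ell$, and, since $\mathcal{Q}$ is pure, $\delta = d(\mathcal{C}^{\perp_{\rm H}})$. By Theorem~\ref{thm:hull}, for each $i \in \{1,\ldots,\ell\}$ there is an equivalent code $\mathcal{C}'$ with the same parameters $[n,k,d]_{q^2}$ but with $\dim({\rm Hull}_{\rm H}(\mathcal{C}')) = \ell - i$.

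Next I would apply Proposition~\ref{prop:two} to $\mathcal{C}'$. This yields an $\dsb{n,\kappa',\delta';c'}_q$-code $\mathcal{Q}'$ with $c' = k - (\ell - i) = c + i$ and $\kappa' = n - 2k + c' = \kappa + i$, which matches the claimed parameters. The remaining point is the distance. The key observation here is that monomial equivalence of $\mathcal{C}$ and $\mathcal{C}'$ forces $\mathcal{C}^{\perp_{\rm H}}$ and $(\mathcal{C}')^{\perp_{\rm H}}$ to be monomially equivalent as well (the Hermitian dual of a monomially transformed code is the monomially transformed dual, via the conjugate-inverse diagonal), so they have the same weight distribution; in particular $d\big((\mathcal{C}')^{\perp_{\rm H}}\big) = d(\mathcal{C}^{\perp_{\rm H}}) = \delta$. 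Since $\mathcal{Q}'$ is obtained from Proposition~\ref{prop:two} with $\delta' = {\rm wt}\big((\mathcal{C}')^{\perp_{\rm H}} \setminus {\rm Hull}_{\rm H}(\mathcal{C}')\big) \ge d\big((\mathcal{C}')^{\perp_{\rm H}}\big) = \delta$, and is pure to distance $d\big((\mathcal{C}')^{\perp_{\rm H}}\big) = \delta$ by the definition recalled after Proposition~\ref{prop:one}, the code $\mathcal{Q}'$ has actual distance $\delta' \ge \delta$ and is pure to distance $\delta$, which is exactly what is asserted.

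The main obstacle I anticipate is bookkeeping around the distance, specifically justifying that the minimum distance cannot \emph{drop} below $\delta$: one must be careful that ${\rm wt}$ is taken over the set-difference $(\mathcal{C}')^{\perp_{\rm H}} \setminus {\rm Hull}_{\rm H}(\mathcal{C}')$, and argue that removing the hull (which has strictly smaller dimension than $(\mathcal{C}')^{\perp_{\rm H}}$ since $\ell - i < \ell \le k \le n-k$, so the dual is nonempty outside the hull) only removes codewords, hence can only increase the minimum weight. The statement deliberately claims purity only \emph{to distance} $\delta$ rather than exact purity, precisely because the passage through the hull may raise the true distance; so no sharper control is needed. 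Everything else is a direct substitution into the formulas of Proposition~\ref{prop:two}.
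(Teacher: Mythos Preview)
Your proposal is correct and follows exactly the route the paper takes: apply Theorem~\ref{thm:hull} to $\mathcal{C}$ to reduce the hull dimension from $\ell$ to $\ell-i$, then feed the equivalent code into Proposition~\ref{prop:two} and read off the parameters, with the distance controlled because monomial equivalence preserves $d(\mathcal{C}^{\perp_{\rm H}})$. One tiny bookkeeping slip: in your parenthetical you write $\ell - i < \ell \le k \le n-k$, but $k \le n-k$ need not hold; the inequality you actually need, $\ell - i < n-k$, follows directly from $\ell \le n-k$ (the hull sits inside the dual) and $i \ge 1$.
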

\begin{proof}
To confirm the assertion, we apply the Hermitian construction in Proposition \ref{prop:two} on the codes from Theorem \ref{thm:hull} and use Lemma \ref{lem:dimHull} to cover the stated range of parameters.
\end{proof}

\begin{example}
Let $\omega$ be a root of $x^2 + 2x + 2 \in \mathbb{F}_3[x]$ and let
$\mathbb{F}_9= \mathbb{F}_3(\omega)$. The $[29,14,12]_9$-code
$\mathcal{C}$ generated by $G=\begin{pmatrix}I_{14}\;
A\end{pmatrix}$, with $A$ being the matrix 
\begin{small}
\[\let\w\omega
\setlength\arraycolsep{1.8pt}
\begin{pmatrix}
2 &  \w  &  \w^5 & \w^7 & \w^7 & \w^2 & \w & 0 & 0 & \w^5 & \w^6 & \w^3 & \w^3 &  \w & \w^5 \\
\w^5 & \w & 2 & \w^3 & \w & 0 & \w^6 & \w & 0 & \w^6 & 2 & \w^3 & \w^5 & 2 & 2\\
2 & 0 & 0 & \w^2 & 0 & \w^3 & \w & \w^6 & \w & \w^5 & \w^2 & \w^5 & \w^7 & 0 & \w^6 \\
\w^6 & \w^5 & \w^7 & \w & \w^3 & 2 & \w^7 & \w & \w^6 & \w^6 & \w^3 & 1 & \w & 0 & \w^7 \\
\w^7 & \w^3 & \w^3 & \w^5 & \w^3 & \w^2 & 1 & \w^7 & \w & \w^5 & 2 & \w & \w^5 &  \w^7 & 1 \\
1 & 2 & 1 & \w^7 & \w^2 & \w & 1 & 1 & \w^7 & \w^5 & 1 & \w^2 & \w^6 & \w &  \w^6 \\
\w^6 & \w^6 & \w^2 & \w^2 & \w^6 & \w & 1 & 1 & 1 & \w^3 & \w^3 & \w^3 & 1 & \w & \w^6 \\
\w^6 & \w & 1 & \w^3 & \w^3 & \w^3 & 1 & 1 & 1 & \w & \w^6 & \w^2 & \w^2 & \w^6 & \w^6 \\
\w^6 & \w & \w^6 & \w^2 & 1 & \w^5 & \w^7 & 1 & 1 & \w & \w^2 & \w^7 & 1 & 2 & 1 \\
1 & \w^7 & \w^5 & \w & 2 & \w^5 & \w & \w^7 & 1 & \w^2 & \w^3 & \w^5 & \w^3 & \w^3 & \w^7 \\
\w^7 & 0 & \w & 1 & \w^3 & \w^6 & \w^6 & \w & \w^7 & 2 & \w^3 & \w & \w^7 & \w^5 & \w^6 \\
\w^6 & 0 & \w^7 & \w^5 & \w^2 & \w^5 & \w & \w^6 & \w & \w^3 & 0 & \w^2 & 0 & 0 & 2 \\
2 & 2 & \w^5 & \w^3 & 2 & \w^6 & 0 & \w & \w^6 & 0 & \w &  \w^3 & 2 & \w & \w^5\\
\w^5 & \w & \w^3 & \w^3 & \w^6 & \w^5 & 0 & 0 & \w & \w^2 & \w^7 & \w^7 & \w^5 & \w & 2
\end{pmatrix},
\]
\end{small}%
is Hermitian self-orthogonal. The dual has parameters $[29,15,11]_9$. For their respective $(n,k)$ values, both $\mathcal{C}$ and $\mathcal{C}^{\perp_{\rm H}}$ have the best-known minimum distances. We get a $\dsb{29,1,11;0}_3$-code by Proposition \ref{prop:two}. The existence of a $\dsb{29,1+i,11;i}_3$-code for each $1 \leq i \leq 14$ is guaranteed by Theorem \ref{thm:more}.
\end{example}

Theorem \ref{thm:more} allows for the transmission of a larger number of qudits when more pairs of maximally entangled qudits are available, while preserving the minimum distance $\delta$, the total number $n$ of qudits to be sent, as well as the net rate. The main idea is to multiply the columns of the generator matrix by an invertible diagonal matrix to \emph{decrease} the dimension of the Hermitian hull. 

We can use the same approach to try to \emph{increase} the dimension of the Hermitian hull. This yields the following generalization of the Hermitian construction in Proposition \ref{prop:two}.

\begin{thm}\label{thm:minEntanglement}
Let $\mathcal{C}$ be an $[n,k]_{q^2}$-code whose Hermitian dual is $\mathcal{C}^{\perp_{\rm H}}$. Then there
exists an $\dsb{n,\kappa, \delta; c}_q$-code $\mathcal{Q}$ with
\begin{align}
c &{}=\min\left\{\rank\bigl(G\Diag(b_1,\cdots,b_n) G^{\dagger}\bigr)\colon b_i\in\mathbb{F}_q^*\right\},\label{eq:min_entanglement} \\
\kappa &{}= n-2k+c,\notag \\
\text{and}\quad\delta{}&\ge\begin{cases}
   {\rm wt}\left(\mathcal{C}^{\perp_{\rm H}} \setminus \left(\mathcal{C}\cap \mathcal{C}^{\perp_{\rm H}}\right)\right),&\text{if $c>2k-n$;}\\
   {\rm wt}\left(\mathcal{C}^{\perp_{\rm H}}\right),&\text{if $c=2k-n$.}
\end{cases} \label{eq:min_entanglement_distance}
\end{align}
\end{thm}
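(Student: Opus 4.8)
The plan is to show that the quantity $c$ defined in \eqref{eq:min_entanglement} is realized by some concrete code equivalent to $\mathcal{C}$, and then invoke Proposition \ref{prop:two}. First I would fix a generator matrix $G$ of $\mathcal{C}$ and, for a tuple $\mathbf{b}=(b_1,\ldots,b_n)\in(\mathbb{F}_q^*)^n$, consider the diagonal matrix $M_{\mathbf{b}}=\Diag(b_1,\ldots,b_n)$, which is an invertible \emph{monomial} matrix over $\mathbb{F}_{q^2}$. The code $\mathcal{C}_{\mathbf{b}}$ with generator matrix $G M_{\mathbf{b}}$ is monomially equivalent to $\mathcal{C}$, hence has the same length $n$, dimension $k$, and minimum distance $d$. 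The key computation is that $(GM_{\mathbf{b}})(GM_{\mathbf{b}})^{\dagger}=G M_{\mathbf{b}} M_{\mathbf{b}}^{\dagger} G^{\dagger}=G\Diag(b_1^{q+1},\ldots,b_n^{q+1})G^{\dagger}$; since $b_i\in\mathbb{F}_q^*$ iff $b_i^{q+1}=b_i^2$, and more to the point since the map $x\mapsto x^{q+1}$ sends $\mathbb{F}_{q^2}^*$ onto $\mathbb{F}_q^*$, as $\mathbf{b}$ ranges over $(\mathbb{F}_{q^2}^*)^n$ the tuple $(b_1^{q+1},\ldots,b_n^{q+1})$ ranges over $(\mathbb{F}_q^*)^n$. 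Thus the set of achievable values of $\rank\!\bigl((GM_{\mathbf b})(GM_{\mathbf b})^{\dagger}\bigr)$ over all diagonal monomial scalings is exactly $\{\rank(G\Diag(b_1,\ldots,b_n)G^{\dagger})\colon b_i\in\mathbb{F}_q^*\}$, whose minimum is the $c$ in \eqref{eq:min_entanglement}.

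Next I would pick $\mathbf{b}^*$ attaining that minimum and apply the Hermitian construction of Proposition \ref{prop:two} to $\mathcal{C}_{\mathbf{b}^*}$. By Lemma \ref{lem:dimHull}, $\dim\bigl(\mathcal{C}_{\mathbf{b}^*}\cap\mathcal{C}_{\mathbf{b}^*}^{\perp_{\rm H}}\bigr)=k-\rank\!\bigl((GM_{\mathbf{b}^*})(GM_{\mathbf{b}^*})^{\dagger}\bigr)=k-c$, so Proposition \ref{prop:two} yields an $\dsb{n,\kappa,\delta;c}_q$-code with $c=k-(k-c)$ as stated and $\kappa=n-2k+c$. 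For the distance, note that $\mathcal{C}_{\mathbf{b}^*}^{\perp_{\rm H}}$ is monomially equivalent to $\mathcal{C}^{\perp_{\rm H}}$ (indeed $\mathcal{C}_{\mathbf{b}^*}^{\perp_{\rm H}}=\mathcal{C}^{\perp_{\rm H}}M_{\mathbf{b}^*}^{-\dagger}$, another diagonal scaling), so ${\rm wt}\bigl(\mathcal{C}_{\mathbf{b}^*}^{\perp_{\rm H}}\bigr)={\rm wt}\bigl(\mathcal{C}^{\perp_{\rm H}}\bigr)$, and likewise the hull and the set-difference weights are preserved under the monomial map. Feeding these equalities into the two cases of Proposition \ref{prop:two} — the case $\mathcal{C}_{\mathbf{b}^*}^{\perp_{\rm H}}\subseteq\mathcal{C}_{\mathbf{b}^*}$ corresponding to $c=2k-n$, the generic case to $c>2k-n$ — gives $\delta$ equal to the right-hand side of \eqref{eq:min_entanglement_distance} for this particular code; writing ``$\ge$'' in the theorem simply acknowledges that one only claims \emph{a} code with at least that distance exists (and in the impure case the true $\delta$ may be larger, namely ${\rm wt}(\mathcal{C}^{\perp_{\rm H}})$, since purity to a larger distance is still permitted).

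I expect the only genuinely delicate point to be the surjectivity of $x\mapsto x^{q+1}$ from $\mathbb{F}_{q^2}^*$ onto $\mathbb{F}_q^*$ and the resulting identification of the optimization domain: one must be careful that the minimand in \eqref{eq:min_entanglement} is phrased with $b_i$ ranging over $\mathbb{F}_q^*$ (the norms) rather than over $\mathbb{F}_{q^2}^*$ (the scalars), and that these give the same attainable ranks. This is the norm map $N_{\mathbb{F}_{q^2}/\mathbb{F}_q}$, which is surjective with every fiber of size $q+1$; once this is in hand everything else is bookkeeping with monomial equivalence and Lemma \ref{lem:dimHull}. A secondary subtlety is that the theorem as stated allows \emph{any} diagonal scaling, not necessarily one minimizing the rank, to be used for the distance bound — but taking the minimizer and observing that monomial equivalence leaves all the relevant weights invariant handles this uniformly, so I would just carry out the argument for the optimal $\mathbf{b}^*$ and remark that the weight quantities on the right of \eqref{eq:min_entanglement_distance} are scaling-independent.
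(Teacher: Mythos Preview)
Your treatment of $c$ and $\kappa$ is correct and is essentially the paper's argument: scale by a diagonal matrix, compute $(GM_{\mathbf b})(GM_{\mathbf b})^{\dagger}=G\,\Diag(b_i^{q+1})\,G^{\dagger}$, use surjectivity of the norm $\mathbb{F}_{q^2}^*\to\mathbb{F}_q^*$ to reduce the optimization domain to $b_i\in\mathbb{F}_q^*$, then apply Lemma~\ref{lem:dimHull} and Proposition~\ref{prop:two}.

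The distance argument has a genuine gap. You write that ``the hull and the set-difference weights are preserved under the monomial map,'' but this is false --- and in fact contradicts your own construction, whose whole purpose is to \emph{change} the hull dimension by a diagonal scaling. Monomial equivalence preserves the weight distribution of $\mathcal{C}$ and of $\mathcal{C}^{\perp_{\rm H}}$ separately, but it does \emph{not} in general preserve ${\rm Hull}_{\rm H}(\mathcal{C})=\mathcal{C}\cap\mathcal{C}^{\perp_{\rm H}}$; only permutation equivalence does (Lemma~\ref{lem1H}). Concretely, $\mathcal{C}_{\mathbf{b}^*}=\mathcal{C}M_{\mathbf{b}^*}$ while $\mathcal{C}_{\mathbf{b}^*}^{\perp_{\rm H}}=\mathcal{C}^{\perp_{\rm H}}M_{\mathbf{b}^*}^{-\dagger}$ are images under two \emph{different} diagonal maps, so their intersection is not the image of ${\rm Hull}_{\rm H}(\mathcal{C})$ under either one. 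You therefore cannot conclude that ${\rm wt}\bigl(\mathcal{C}_{\mathbf{b}^*}^{\perp_{\rm H}}\setminus{\rm Hull}_{\rm H}(\mathcal{C}_{\mathbf{b}^*})\bigr)$ equals ${\rm wt}\bigl(\mathcal{C}^{\perp_{\rm H}}\setminus{\rm Hull}_{\rm H}(\mathcal{C})\bigr)$.

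The paper does not claim equality here. It argues only for the inequality $\ge$ in \eqref{eq:min_entanglement_distance}: since ${\rm Hull}_{\rm H}(\mathcal{C}')$ contains the transformed vectors of ${\rm Hull}_{\rm H}(\mathcal{C})$ and may be strictly larger, one removes a (possibly larger) set from $\mathcal{C}'^{\perp_{\rm H}}$, which has the same weight profile as $\mathcal{C}^{\perp_{\rm H}}$, and this can only raise the minimum remaining weight. To repair your argument you need a containment of this shape --- the image of the old hull under the weight-preserving bijection $\mathcal{C}^{\perp_{\rm H}}\to\mathcal{C}_{\mathbf{b}^*}^{\perp_{\rm H}}$ landing inside the new hull --- rather than an (incorrect) bijection between the two hulls.
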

\begin{proof}
Consider the code $\mathcal{C}'$ generated by a matrix $G' := G \Diag(a_1,\cdots,a_n)$, with $a_i\in\mathbb{F}_{q^2}^*$. By Lemma \ref{lem:dimHull}, 
$\dim({\rm Hull}_{\rm H}(\mathcal{C}')) = k-\rank(G'{G'}^\dagger)$, where
\begin{multline}\label{eq:iso_rank}
\rank(G'{G'}^\dagger)= \\
\rank\left(G \, \Diag(a_1^{q+1},\cdots,a_n^{q+1}) \, G^\dagger\right).
\end{multline}
As $a_i^{q+1}\in\mathbb{F}_q$, it suffices to minimize
\eqref{eq:iso_rank} over all invertible diagonal matrices over
$\mathbb{F}_q$.  By the surjectivity of the norm, given
$b_i\in\mathbb{F}_q$, there exists $b_i\in\mathbb{F}_{q^2}$ with
$b_i^{q+1}=a_i$.

Concerning the minimum distance $\delta$, first we note that multiplying the
coordinates of the code $\mathcal{C}$ with non-zero elements $a_i$
does not change its distance or that of its Hermitian dual
$\mathcal{C}^{\perp_{\rm H}}$.  Moreover, the Hermitian hull
${\rm Hull}_{\rm H}(\mathcal{C}')$ contains the transformed
vectors of ${\rm Hull}_{\rm H}(\mathcal{C})$. Since the Hermitian hull of $\mathcal{C}'$ might be a larger set than that of $\mathcal{C}$, we have
\begin{align*}
   {\rm wt}\left(\mathcal{C'}^{\perp_{\rm H}} \setminus \bigl(\mathcal{C'}\cap \mathcal{C'}^{\perp_{\rm H}}\bigr)\right) \ge
   {\rm wt}\left(\mathcal{C}^{\perp_{\rm H}} \setminus \bigl(\mathcal{C}\cap \mathcal{C}^{\perp_{\rm H}}\bigr)\right).
\end{align*}
The second part of
\eqref{eq:min_entanglement_distance} applies in the extremal case when $\mathcal{C}'^{\perp_{\rm H}}\subseteq\mathcal{C}'$,
$c=2k-n$, and $\kappa=0$.
\end{proof}
We do not have an efficient method to determine an equivalent code
$\mathcal{C}^{\prime}$ that minimizes \eqref{eq:min_entanglement}.

In the extremal case of $c=0$, we obtain a Hermitian self-orthogonal
code by finding particular solutions to a linear system.
\begin{thm}
A given $[n,k,d]_{q^2}$-code $\mathcal{C}$ is equivalent to a Hermitian self-orthogonal code if and only if there is a vector $\mathbf{b}\in\mathbb{F}_q^n$, with $b_i \ne 0$, such that
\begin{equation}\label{eq:punctureCode}
\sum_{i=1}^n b_i x_i y_i^q = 0 \mbox{ for all } \mathbf{x},\mathbf{y}\in\mathcal{C}.
\end{equation}
\end{thm}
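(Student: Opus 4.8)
The plan is to reduce the statement to a condition on generator matrices, using three facts: a monomial equivalence is a coordinate scaling followed by a permutation; permutations preserve Hermitian self-orthogonality; and the norm map $a\mapsto a^{q+1}$ is a surjection $\mathbb{F}_{q^2}^{*}\to\mathbb{F}_q^{*}$, while $a^{q+1}\in\mathbb{F}_q$ for every $a\in\mathbb{F}_{q^2}$. Fix a generator matrix $G$ of $\mathcal{C}$ with rows $\mathbf{g}_1,\dots,\mathbf{g}_k$. For a vector $\mathbf{b}=(b_1,\dots,b_n)$ one has $\bigl(G\,\Diag(\mathbf{b})\,G^{\dagger}\bigr)_{st}=\sum_{l=1}^{n} b_l\,(g_s)_l\,(g_t)_l^{q}$; since the form $(\mathbf{x},\mathbf{y})\mapsto\sum_l b_l x_l y_l^{q}$ is additive in each argument, linear in $\mathbf{x}$, and semilinear in $\mathbf{y}$, it vanishes on all of $\mathcal{C}\times\mathcal{C}$ exactly when it vanishes on all pairs of rows of $G$. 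Thus \eqref{eq:punctureCode} is equivalent to the matrix equation $G\,\Diag(\mathbf{b})\,G^{\dagger}=\mathbf{0}$, which is linear in $\mathbf{b}$ (the ``linear system'' mentioned before the statement).

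For the direction ``$\Leftarrow$'', suppose $\mathbf{b}\in\mathbb{F}_q^{n}$ with all $b_i\neq 0$ satisfies \eqref{eq:punctureCode}. By surjectivity of the norm, choose $a_i\in\mathbb{F}_{q^2}^{*}$ with $a_i^{q+1}=b_i$ for each $i$, and let $\mathcal{C}^{\prime}$ be the monomially equivalent code obtained by scaling coordinate $i$ of $\mathcal{C}$ with $a_i$, so that $G^{\prime}:=G\,\Diag(a_1,\dots,a_n)$ is a generator matrix of $\mathcal{C}^{\prime}$. Then $G^{\prime}G^{\prime\dagger}=G\,\Diag(a_1^{q+1},\dots,a_n^{q+1})\,G^{\dagger}=G\,\Diag(\mathbf{b})\,G^{\dagger}=\mathbf{0}$ by the previous paragraph, so $\mathcal{C}^{\prime}\subseteq\mathcal{C}^{\prime\,\perp_{\rm H}}$; that is, $\mathcal{C}^{\prime}$ is a Hermitian self-orthogonal code equivalent to $\mathcal{C}$.

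For the direction ``$\Rightarrow$'', suppose $\mathcal{C}$ is equivalent to a Hermitian self-orthogonal code $\mathcal{C}^{\prime\prime}$ via a monomial matrix $M$. Write $M=\Diag(a_1,\dots,a_n)\,P$ with $a_i\in\mathbb{F}_{q^2}^{*}$ and $P$ a permutation matrix; then $\mathcal{C}^{\prime\prime}$ is the $P$-permutation image of the code $\mathcal{C}^{\prime}$ generated by $G^{\prime}:=G\,\Diag(a_1,\dots,a_n)$. Permuting coordinates does not change whether a code is Hermitian self-orthogonal (cf. the computation in the proof of Lemma~\ref{lem1H}), so $\mathcal{C}^{\prime}$ is itself Hermitian self-orthogonal, whence $G^{\prime}G^{\prime\dagger}=\mathbf{0}$. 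Putting $b_i:=a_i^{q+1}\in\mathbb{F}_q^{*}$ gives $G\,\Diag(\mathbf{b})\,G^{\dagger}=G^{\prime}G^{\prime\dagger}=\mathbf{0}$, which is \eqref{eq:punctureCode} by the first paragraph.

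I do not expect a genuine obstacle here; the only points requiring care are bookkeeping ones: factoring a monomial equivalence as a scaling composed with a permutation and observing that the permutation is harmless for self-orthogonality; keeping straight that $a^{q+1}$ always lands in $\mathbb{F}_q$ and, conversely, that every element of $\mathbb{F}_q^{*}$ is such a norm, which is exactly what permits the passage between the real coefficients $b_i\in\mathbb{F}_q^{*}$ and the scalars $a_i\in\mathbb{F}_{q^2}^{*}$ of the equivalence; and the standard reduction of \eqref{eq:punctureCode} from all of $\mathcal{C}\times\mathcal{C}$ to a basis, i.e. to the matrix identity $G\,\Diag(\mathbf{b})\,G^{\dagger}=\mathbf{0}$.
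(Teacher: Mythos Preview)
Your proof is correct and follows essentially the same approach as the paper's: both directions proceed by passing between the scalars $a_i\in\mathbb{F}_{q^2}^{*}$ of a diagonal equivalence and their norms $b_i=a_i^{q+1}\in\mathbb{F}_q^{*}$, invoking surjectivity of the norm for the converse. You are simply more explicit than the paper about the bookkeeping---the reduction to the matrix identity $G\,\Diag(\mathbf{b})\,G^{\dagger}=\mathbf{0}$ and the factorization of a monomial equivalence as scaling composed with a permutation---where the paper just writes ``without loss of generality'' to absorb the permutation.
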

\begin{proof}
Let $\mathcal{C}$ be equivalent to a Hermitian self-orthogonal code
$\mathcal{C}^{\prime}$. Without loss of generality, we can assume that
$\mathcal{C}^{\prime}$ is obtained from $\mathcal{C}$ via
multiplication of the coordinates by $a_i
\in\mathbb{F}_{q^2}^*$. Setting $b_i=a_i^{q+1}\in\mathbb{F}_q^*$
implies \eqref{eq:punctureCode}.  On the other hand, if
\eqref{eq:punctureCode} has a solution, then, by the surjectivity of
the norm, we find $a_i\in\mathbb{F}_{q^2}^*$ with
$b_i=a_i^{q+1}$. Multiplying the coordinates of $\mathcal{C}$ by $b_i$
yields an equivalent Hermitian self-orthogonal code, provided that
$a_i$ and $b_i$ are nonzero for all $i$.
\end{proof}
The linear space of all solutions $\mathbf{b}\in\mathbb{F}_q^n$ of
\eqref{eq:punctureCode} (\textit{i.e.}, allowing zero-coordinates as well) is known as a \emph{punctured code} \cite{Rains99}.

Swapping the roles of $\mathcal{C}$ and $\mathcal{C}^{\perp_{\rm H}}$ in Proposition \ref{prop:two} enables us to better control the quantum distance. We use this approach in the proof of the next result.

\begin{thm}[Same Entanglement]\label{thm:same}
If there exists a pure $\dsb{n,\kappa,\delta;c}_q$-code $\mathcal{Q}$ with $\kappa>0$, $c>0$
obtained by the Hermitian construction in Proposition \ref{prop:two}, then there exists an
$\dsb{n+1, \kappa-1, \delta'; c}_q$-code $\mathcal{Q}^{\prime}$ that is pure to distance
$\delta'$ with $\delta\le\delta'\le\delta+1$.
\end{thm}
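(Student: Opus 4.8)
The plan is to build $\mathcal{Q}'$ by applying the Hermitian construction (Proposition~\ref{prop:two}) not to $\mathcal{C}$ itself but to a one-dimensional extension of the dual code, using Proposition~\ref{cor:extension} (the column-adjoining lemma) on the right object. Concretely, let $\mathcal{C}$ be the $[n,k]_{q^2}$-code realizing $\mathcal{Q}$, so that $c = k - \dim({\rm Hull}_{\rm H}(\mathcal{C}))$, $\kappa = n-2k+c>0$, and $\delta = {\rm wt}(\mathcal{C}^{\perp_{\rm H}}\setminus{\rm Hull}_{\rm H}(\mathcal{C}))$ (purity gives $\delta = d(\mathcal{C}^{\perp_{\rm H}})$). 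Set $\ell := \dim({\rm Hull}_{\rm H}(\mathcal{C})) = k-c$. Because $\kappa>0$ we have $n-2k+c>0$, i.e. $k-c < n-k$, which rereads as $\ell < n-k$; together with $c>0$, i.e. $\ell<k$, the hypothesis $0\le \ell < \min\{k,\,n-k\}$ of Proposition~\ref{cor:extension} is satisfied. Apply that proposition to $\mathcal{C}$: it yields an $[n+1,k,d']_{q^2}$-code $\mathcal{C}'$ with $\dim({\rm Hull}_{\rm H}(\mathcal{C}')) = \ell+1$ and $d \le d' \le d+1$ — but this is the wrong bookkeeping for what we want, so instead I would run the construction on the dual side.

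The cleaner route: view $\mathcal{Q}$ as coming from the dual code via the ``swap'' remark immediately preceding the theorem — namely $\mathcal{Q}$ is equally the output of Proposition~\ref{prop:two} applied to $\mathcal{D} := \mathcal{C}^{\perp_{\rm H}}$, an $[n,n-k]_{q^2}$-code, since ${\rm Hull}_{\rm H}(\mathcal{D}) = {\rm Hull}_{\rm H}(\mathcal{C})$ has the same dimension $\ell$, the value of $c$ is unchanged, $n - 2(n-k) + c = 2k-n+c = \kappa$ by the original relation only after rechecking — so here I must be careful: $\dim\mathcal{D} = n-k$, and $n - 2(n-k) + c = 2k - n + c$, whereas $\kappa = n-2k+c$; these agree only when $n=2k$, so the swap does not literally fix $\kappa$. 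The honest fix is to keep $\mathcal{C}$ and instead extend $\mathcal{C}$ by Proposition~\ref{cor:extension} so that the new hull dimension is $\ell+1$; then the new ebit count is $c_{\rm new} = k - (\ell+1) = c-1$, which is again not what we want. So the correct move is the \emph{other} direction of Proposition~\ref{cor:extension}: we want an $[n+1,k+1]$-code, obtained via Proposition~\ref{pextendrule}, applied to $\mathcal{C}^{\perp_{\rm H}}$. Indeed, take $\mathcal{D} = \mathcal{C}^{\perp_{\rm H}}$ with parameters $[n,n-k]_{q^2}$ and hull dimension $\ell$; since $0\le\ell<\min\{n-k,\,k\}$ (both strict inequalities are exactly $c>0$ and $\kappa>0$ again), Proposition~\ref{pextendrule} produces an $[n+1,n-k+1,d'']_{q^2}$-code $\mathcal{D}'$ with $\dim({\rm Hull}_{\rm H}(\mathcal{D}')) = \ell+1$ and $d'' = \min\{d(\mathcal{D}),\,d_0+1\}$ for a suitable intermediate $d_0 \ge d(\mathcal{D})$.

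Now feed $\mathcal{D}'$ into Proposition~\ref{prop:two} (regarding $\mathcal{D}'$ itself as the code ``$\mathcal{C}$'' of that proposition). Its length is $n+1$, its dimension is $k':= n-k+1$, and its hull has dimension $\ell+1$, so the resulting EAQECC $\mathcal{Q}'$ has $c' = k' - (\ell+1) = (n-k+1) - (\ell+1) = n-k-\ell = n-k-(k-c) = n-2k+c = \kappa$ — wait, that is $\kappa$, not $c$; this tells me the bookkeeping still needs one more reflection, and rather than iterate in prose I will state the intended end result and the single calculation that must be checked: one of the two codes $\{\mathcal{C},\mathcal{C}^{\perp_{\rm H}}\}$, after a length-one extension by Proposition~\ref{cor:extension} or~\ref{pextendrule}, has hull dimension $\ell$ or $\ell+1$ in such a way that plugging it back through Proposition~\ref{prop:two} gives parameters $c' = c$, $\kappa' = \kappa-1$, and dual-distance $\delta' \in \{\delta,\delta+1\}$; the distance bound is exactly the $d\le d'\le d+1$ clause of Proposition~\ref{cor:extension} transported through the construction, using that the relevant quantum distance is a weight of (a subset of) the \emph{dual} of the extended code and that puncturing/extending changes such a weight by at most one. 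The main obstacle, and the step I would spend the most care on, is precisely this parameter audit: matching which classical code gets extended and in which of the two hull-dimension directions, so that the three outputs $(n{+}1,\kappa{-}1,c)$ come out simultaneously; once the right choice is pinned down, the distance bound and the hull-dimension claim are immediate from the propositions already proved. Purity to distance $\delta'$ follows because the extended construction still has $\mathcal{C}'^{\perp_{\rm H}}\not\subseteq\mathcal{C}'$ (the ``otherwise'' branch of Proposition~\ref{prop:two} persists when $\kappa' = \kappa-1 \ge 0$ and $c'>0$), so $\delta'$ is by definition a minimum weight over a punctured set, i.e. the code is pure to that distance.
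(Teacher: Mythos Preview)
Your proposal correctly identifies the ingredients (Proposition~\ref{cor:extension} and Proposition~\ref{prop:two}) and correctly verifies the hypothesis $0\le\ell<\min\{k,n-k\}$ from $\kappa>0$ and $c>0$. But the bookkeeping is never resolved, and that is the whole proof: you cycle through three combinations, each giving the wrong value of $c'$, and then stop at ``once the right choice is pinned down\ldots''. There is exactly one correct choice, and you miss it.

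The right move is to apply Proposition~\ref{cor:extension} to $\mathcal{C}^{\perp_{\rm H}}$ (the $[n,n-k,\delta]_{q^2}$-code), \emph{not} to $\mathcal{C}$ and \emph{not} via Proposition~\ref{pextendrule}. This produces an $[n+1,\,n-k,\,\delta']_{q^2}$-code $\mathcal{E}$ with hull dimension $\ell+1$ and $\delta\le\delta'\le\delta+1$. Now let $\mathcal{C}':=\mathcal{E}^{\perp_{\rm H}}$, which has dimension $(n+1)-(n-k)=k+1$ and the same hull dimension $\ell+1$. Feeding $\mathcal{C}'$ into Proposition~\ref{prop:two} gives
\[
c'=(k+1)-(\ell+1)=k-\ell=c,\qquad \kappa'=(n+1)-2(k+1)+c'=\kappa-1,
\]
and since $(\mathcal{C}')^{\perp_{\rm H}}=\mathcal{E}$ has minimum distance $\delta'$, the quantum code $\mathcal{Q}'$ is pure to distance $\delta'$. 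This is exactly the paper's argument.

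Your three attempts each break for a specific reason: applying Proposition~\ref{cor:extension} to $\mathcal{C}$ and then Proposition~\ref{prop:two} to the result gives $c'=k-(\ell+1)=c-1$; applying Proposition~\ref{pextendrule} to $\mathcal{C}^{\perp_{\rm H}}$ raises the dimension to $n-k+1$, so feeding that directly into Proposition~\ref{prop:two} yields $c'=(n-k+1)-(\ell+1)=\kappa$; and the ``swap'' you attempted does not preserve $\kappa$ because the formula $\kappa=n-2k+c$ is not symmetric under $k\leftrightarrow n-k$. The fix in every case is the same: keep the dimension of the \emph{dual} side fixed under extension, then pass to its Hermitian dual before invoking Proposition~\ref{prop:two}.
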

\begin{proof}
Let $\mathcal{C}^{\perp_{\rm H}}$ be the linear code with parameters $[n,n-k,\delta]_{q^2}$ used in the Hermitian construction of $\mathcal{Q}$. The dimension of the Hermitian hull is 
\[
\dim(\mathcal{C}\cap\mathcal{C}^{\perp_{\rm H}})=\ell=n-k-c.
\]
Applying Proposition \ref{cor:extension} to $\mathcal{C}^{\perp_{\rm H}}$ yields an $[n+1,n-k,\delta']_{q^2}$-code ${\mathcal{C}^{\prime}}^{\perp_{\rm H}}$. The dimension of its Hermitian hull is $\ell+1$. Applying the Hermitian construction to $\mathcal{C}^{\prime}$ gives us an $\dsb{n+1,\kappa-1,\delta';c}_q$-code $\mathcal{Q}^{\prime}$.
\end{proof}

Applying the Hermitian construction on the $[n+1,k,d']_{q^2}$-code $\mathcal{C}^{\prime}$ from Proposition \ref{cor:extension} produces an $\dsb{n+1,\kappa-1,\delta';c}_q$-code $\mathcal{Q}^{\prime}$. On the original classical code
$\mathcal{C}$, the outcome is an $\dsb{n,\kappa,\delta;c}_q$-code $\mathcal{Q}$. The minimum distance $\delta'$ of $\mathcal{Q}^{\prime}$ depends on how the extended code $\mathcal{C}^{\prime}$ is built, as illustrated in the following example.
\begin{example}\label{ex:thm10}
Let $\omega$ be a root of $x^2 + 2x +2 \in \mathbb{F}_3[x]$ and let
$\mathbb{F}_9=\mathbb{F}_3(\omega)$. Let $\mathcal{C}$ be the
$[5,4,2]_9$-code generated by $G=\begin{pmatrix} I_4\;B \end{pmatrix}$
with $B=\begin{pmatrix} 2\; 2\;2\;2 \end{pmatrix}^{\top}$. Extending
the matrix $G$ by the column
$\begin{pmatrix}1\; w^3\; w^2\; w^7\end{pmatrix}^{\top}$, we obtain a $[6,4,3]_9$-code $\mathcal{C}^{\prime}$ with 
$\dim{\rm Hull} (\mathcal{C}^{\prime})=1$. Since $\mathcal{C}^{\prime}$ is an MDS code, its Hermitian dual
${\mathcal{C}^{\prime}}^{\perp_{\rm H}}$ has parameters $[6,2,5]_9$.
Using the code $\mathcal{C}^{\prime}$ in Proposition \ref{prop:two} results in a pure $\dsb{6,1,5;3}_3$-code which is optimal by the bounds in the next section. Its net rate is $-1/3$ and it improves on the distance of the distance-optimal quantum code $\dsb{6,1,3}_3$ by $2$.
\end{example}

Based on the derived $[n+1,k+1,d']_{q^2}$-code in Proposition \ref{pextendrule}, we have the following result which allows for the transmission of the same amount of quantum information using a smaller number of pairs of maximally entangled qudits.

\begin{thm}[Less Entanglement]\label{thm:less} The existence of a pure
  $\dsb{n,\kappa,\delta;c}_q$-code $\mathcal{Q}$, constructed from an
  $\mathbb{F}_{q^2}$-linear code $\mathcal{C}$ based on Proposition \ref{prop:two},
  implies the existence of an $\dsb{n+1, \kappa,\delta^{\prime}; c-1}_q$-code
  $\mathcal{Q}^{\prime}$ with $\delta^{\prime} \leq \delta$. 
\end{thm}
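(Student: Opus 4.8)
The plan is to mirror the proof strategy of Theorem~\ref{thm:same}, but to swap in Proposition~\ref{pextendrule} in place of Proposition~\ref{cor:extension}. Start from the pure $\dsb{n,\kappa,\delta;c}_q$-code $\mathcal{Q}$ obtained from $\mathcal{C}$ via Proposition~\ref{prop:two}. Purity of $\mathcal{Q}$ means $\delta = {\rm wt}(\mathcal{C}^{\perp_{\rm H}})$, i.e.\ the distance is realized already by $\mathcal{C}^{\perp_{\rm H}}$, and the hull dimension is $\ell = \dim({\rm Hull}_{\rm H}(\mathcal{C})) = k - \rank(GG^\dagger)$, so that $c = k - \ell$. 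Since $c > 0$ (entanglement is actually used — if $c=0$ there is nothing to decrease, so one should note that the statement is only meaningful for $c\ge 1$, equivalently $\ell < k$), and $\kappa \ge 0$ forces $\ell \le n-k$ as well, the hypothesis $0 \le \ell < \min\{k, n-k\}$ of Proposition~\ref{pextendrule} is satisfied (with a trivial boundary case when $\ell = n-k$ which forces $\kappa=0$ and should be handled or excluded separately).

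Next I would apply Proposition~\ref{pextendrule} to $\mathcal{C}$ itself: choosing a codeword $\mathbf{c} \in \mathcal{C}^{\perp_{\rm H}} \setminus {\rm Hull}_{\rm H}(\mathcal{C})$ with $\mathbf{c}\mathbf{c}^\dagger \ne 0$, it produces an $[n+1, k+1, d']_{q^2}$-code $\mathcal{C}'$ with $\dim({\rm Hull}_{\rm H}(\mathcal{C}')) = \ell+1$. Feeding $\mathcal{C}'$ into the Hermitian construction of Proposition~\ref{prop:two} yields an EAQECC $\mathcal{Q}'$ with parameters computed from $n' = n+1$, $k' = k+1$, $\ell' = \ell+1$: namely $c' = k' - \ell' = k - \ell = c$ — wait, this gives the \emph{same} $c$, not $c-1$. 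So the correct move is instead to run Proposition~\ref{prop:two} starting from the dual side, exactly as in Theorem~\ref{thm:same}: take the $[n, n-k, \delta]_{q^2}$-code $\mathcal{C}^{\perp_{\rm H}}$, which has the same hull dimension $\ell$, and apply Proposition~\ref{pextendrule} to \emph{it}, obtaining an $[n+1, n-k+1, \delta']_{q^2}$-code $(\mathcal{C}')^{\perp_{\rm H}}$ with hull dimension $\ell+1$. Call the ambient $[n+1, (n+1)-(n-k+1)]_{q^2} = [n+1, k]_{q^2}$ code $\mathcal{C}'$. Then Proposition~\ref{prop:two} applied to $\mathcal{C}'$ gives $c' = k - (\ell+1) = c - 1$, $\kappa' = (n+1) - 2k + c' = n - 2k + c = \kappa$, and $\delta'$ equal to (or bounded by) $d((\mathcal{C}')^{\perp_{\rm H}})$, which by Proposition~\ref{pextendrule} is $\min\{\delta, d_0+1\} \le \delta$. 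This delivers the claimed $\dsb{n+1, \kappa, \delta'; c-1}_q$-code with $\delta' \le \delta$.

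The key steps in order are therefore: (i) extract $\ell = n-k-c$ and the purity identity $\delta = {\rm wt}(\mathcal{C}^{\perp_{\rm H}})$ from the hypothesis; (ii) verify that $\mathcal{C}^{\perp_{\rm H}}$, as an $[n, n-k]_{q^2}$-code with hull dimension $\ell$, satisfies $0 \le \ell < \min\{n-k, k\}$ — the inequality $\ell < n-k$ is exactly $c>0$, and $\ell < k$ is exactly $\kappa > 0$ rearranged, both of which are in the hypothesis (modulo checking that $\kappa>0$ together with $c>0$ suffices, and noting the degenerate case); (iii) invoke Proposition~\ref{pextendrule} on $\mathcal{C}^{\perp_{\rm H}}$ to get the extended dual code with hull dimension $\ell+1$ and distance $\delta' = \min\{\delta, d_0+1\} \le \delta$; (iv) read off the parameters of $\mathcal{Q}'$ from Proposition~\ref{prop:two} applied to the new ambient code $\mathcal{C}'$, confirming $c' = c-1$, $\kappa' = \kappa$, and that $\mathcal{Q}'$ is pure to distance $\delta'$ since the minimum distance of $(\mathcal{C}')^{\perp_{\rm H}}$ governs it.

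The main obstacle I anticipate is bookkeeping the hull-dimension and purity conditions correctly across the dual swap, and in particular making sure the hypotheses $\kappa > 0$ and $c > 0$ in the theorem statement translate precisely into the constraint $0 \le \ell < \min\{k, n-k\}$ needed to guarantee that a suitable codeword $\mathbf{c} \in \mathcal{C}^{\perp_{\rm H}} \setminus {\rm Hull}_{\rm H}$ with $\mathbf{c}\mathbf{c}^\dagger \ne 0$ exists (this existence is asserted inside the proof of Proposition~\ref{pextendrule}, so I can cite it, but I must check the boundary cases do not sneak in). A secondary subtlety is the claim that $\mathcal{Q}'$ is "pure to distance $\delta'$": this should follow because, by construction via Proposition~\ref{prop:two}, the relevant distance is the weight of $(\mathcal{C}')^{\perp_{\rm H}}$ (or of $(\mathcal{C}')^{\perp_{\rm H}}$ minus its hull), and Proposition~\ref{pextendrule} pins down $d((\mathcal{C}')^{\perp_{\rm H}}) = \delta'$ exactly — so the degeneracy, if any, only appears between $\delta'$ and the true quantum distance, which is precisely what "pure to distance $\delta'$" records. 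I would keep the write-up short, essentially a three-line argument paralleling the proof of Theorem~\ref{thm:same}.
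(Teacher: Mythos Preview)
Your approach is exactly the paper's: apply Proposition~\ref{pextendrule} to $\mathcal{C}^{\perp_{\rm H}}$ (not to $\mathcal{C}$) and then read off the new parameters via Proposition~\ref{prop:two}, mirroring the proof of Theorem~\ref{thm:same}. One bookkeeping slip to tidy up: with $\mathcal{C}$ of dimension $k$ one has $\ell = k - c = n - k - \kappa$ (not $\ell=n-k-c$ as you wrote in step~(i)), so your two equivalences are actually swapped---$\ell < n-k \Leftrightarrow \kappa > 0$ and $\ell < k \Leftrightarrow c > 0$---but since both inequalities are required for Proposition~\ref{pextendrule}, the argument is unaffected.
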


\begin{proof}
The desired result follows from Proposition \ref{pextendrule} by a method analogous to the one in the proof of Theorem \ref{thm:same}.
\end{proof}

In Theorem \ref{thm:less}, the pure minimum distance of the resulting EAQECC is determined
by the choice of the codeword $\mathbf{c}$, which was defined earlier in Proposition
\ref{pextendrule}. To construct a new EAQECC with good minimum distance, one can try all
such codewords and then select an EAQECC with the largest minimum distance from all
resulting EAQECCs. The next example illustrates such an implementation.

\begin{example}
Let $\omega$ be a root of $x^2 + 2x +2 \in \mathbb{F}_3[x]$ and let $\mathbb{F}_9=\mathbb{F}_3(\omega)$. Let $\mathcal{C}$ be the $[16, 5, 8]_9$-code with generator matrix $G$ given by
\begin{small}
\[\let\w\omega
\setlength\arraycolsep{2.2pt}
\begin{pmatrix}
1 & 0 & 0 & 0 & \w^6 & 1 & \w^6 & \w^5 & 0 & \w^6 & \w^6 & 1 & \w^7 & 1 & \w & 1 \\
0 & 1 & 0 & 0 & \w^7 & 2 & \w & \w^2 & 0 & \w & 0 & 0 & 1 & \w^5 & \w^2 & \w^3 \\
0 & 0 & 1 & 0 & 2 & 1 & 1 & 1 & 0 & \w^5 & \w^3 & \w & 2 & \w^2 & \w^2 & \w^3 \\
0 & 0 & 0 & 1 & \w^2 & 1 & \w^7 & \w^2 & 0 & 2 & \w & \w^2 & \w & \w^6 & \w & 1 \\
0 & 0 & 0 & 0 & 0 & 0 & 0 & 0 & 1 & 2 & 1 & 2 & 1 & 2 & 1 & 2
\end{pmatrix}.
\]
\end{small}
The Hermitian dual code $\mathcal{C}^{\perp_{\rm H}}$ has parameters $[16,11,5]_9$ and the Hermitian hull ${\rm Hull}_{\rm H}(\mathcal{C})$ is a $[16,3,12]_9$-code. Following the proof of Proposition \ref{pextendrule}, we select a codeword
\begin{equation}\label{eq:wordc}
\let\w\omega
\setlength\arraycolsep{1.6pt}
\mathbf{c}:=
\begin{pmatrix}
1 & \w^7 & \w & \w^5 & \w^6 & \w^2 & 0 & \w^5 & \w^3 & 2 & \w^2 & 1 & \w^2 & \w^5 & \w^3 & \w^2
\end{pmatrix}
\end{equation}
of weight $15$ in $\mathcal{C}^{\perp_{\rm H}} \setminus 
{\rm Hull}_{\rm H} (\mathcal{C})$ such that $\mathbf{c}\mathbf{c}^{\dagger} = 1$ to obtain the $[17,6,8]$-code $\mathcal{C}^{\prime}$ whose generator matrix is
\begin{equation*}
G^{\prime}=\left(
\begin{matrix}G & \mathbf{0}_{5\times 1}\\ 
\mathbf{c}& \omega \end{matrix}\right).
\end{equation*}
The Hermitian dual $\mathcal{C}^{\prime \perp_{\rm H}}$ has parameters $[17,11,5]_9$ and ${\rm Hull}_{\rm H} (\mathcal{C}^{\prime})$ is a $[17,4,10]_9$-code.

We now switch perspective and use the $[16,5,8]_9$-code as the $\mathcal{C}^{\perp_{\rm H}}$, instead of as the $\mathcal{C}$, in Proposition \ref{prop:two} to construct a pure $\dsb{16,2,8;8}_3$-code $\mathcal{Q}$. Using the derived $[17,6,8]$-code $\mathcal{C}^{\prime}$ as the $\mathcal{C}^{\perp_{\rm H}}$ in Proposition \ref{prop:two} leads to a pure $\dsb{17,2,8;7}_3$-code $\mathcal{Q}^{\prime}$.

If we have chosen as our codeword $\mathbf{c}$ the vector
\[
\let\w\omega
\setlength\arraycolsep{1.6pt}
\mathbf{c}:=
\begin{pmatrix}
\w^7 & \w^5 & \w & 0 & \w^5 & \w^5 & 0 & 2 & \w^7 & \w^6 & \w^2 & 2 & \w & 2 & \w^2 & \w^3
\end{pmatrix}
\]
of weight $14$, instead of the one in (\ref{eq:wordc}), then the resulting $\mathcal{C}^{\prime}$ would have parameters $[17,6,7]_9$. The constructed pure quantum codes $\mathcal{Q}$ and $\mathcal{Q}^{\prime}$ would have parameters $\dsb{16,2,7;8}_3$ and $\dsb{17,2,7;7}_3$, respectively. This highlights the importance of choosing $\mathbf{c}$ such that $d^{\prime}=d$, that is, $d_0 \geq d-1$, in Proposition \ref{pextendrule}.
\end{example}

The idea of Lison{\v{e}}k and Singh in \cite{lisonvek2014quantum} is to start with a length $n$ classical code with a large hull. One then carefully selects a codeword so that it can be used to extend the length by $1$ and prevent the quantum distance from deteriorating. Our idea here is similar. The advantage is that we have more freedom in choosing the codeword that may lead to a better quantum distance. The two approaches coincide when the classical ingredient $\mathcal{C}$ is $k$-dimensional over $\mathbb{F}_{q^2}$ and its hull has dimension $k-1$.

\section{Upper Bounds}\label{sec:bound}

There is a vast literature on EAQECCs constructed via classical maximum distance separable (MDS) code and (Hermitian) LCD codes. Their parameters and excellent properties allow for a straightforward derivation of the parameters of the corresponding quantum codes. The length of MDS
codes, however, are constrained by the cardinality of the underlying finite fields. Using classical MDS codes over $\mathbb{F}_4$ for the qubit case and $\mathbb{F}_9$ for the qutrit case offer limited insights beyond very small lengths.

The Singleton bound for an $\dsb{n, \kappa, \delta; c}_q$-code $\mathcal{Q}$ in
\cite[Corollary 9]{GraHubWin2022} reads 
\begin{align}
  \kappa &\le c+\max\{0,n - 2 \delta + 2\},\label{eq:QMDS_small_distance}\\
  \kappa &\le n-\delta+1,\label{eq:QMDS_trivial}\\
  \kappa
  &\le\frac{(n-\delta+1)(c+2\delta-2-n)}{3\delta-3-n} \mbox{, if } \delta-1 \ge \frac{n}{2}. \label{eq:QMDS_large_distance}
\end{align}
Codes attaining the bound \eqref{eq:QMDS_small_distance} for
$\delta-1 \le\frac{n}{2}$ or the bound \eqref{eq:QMDS_large_distance} for
$\delta-1 \ge\frac{n}{2}$ with equality are called MDS EAQECCs. We note that without the bound \eqref{eq:QMDS_trivial}, the upper bound on the dimension
$\kappa$ would be linear in the \textit{a priori} unbounded number $c$
of maximally entangled pairs of qudits.

To our knowledge, most known families of MDS EAQECCs, \textit{e.\,g.}, those presented in \cite{Guenda2017,Chen2021, Chen2021a,Fang2020,Luo2019}, were built by applying Propositions \ref{prop:two} and \ref{prop:one} on suitably chosen classical MDS codes. In \cite{Guenda2017}, an $[n,k,d]_{q^2}$-code
$\mathcal{C}$, whose Hermitian dual is an $[n,n-k,d^{\prime}]_{q^2}$-code, is used in Proposition \ref{prop:two} to yield two EAQECCs with parameters
\begin{alignat}{5}
&\dsb{n, k-\dim({\rm Hull}_{\rm H} (\mathcal{C})),d; 
n-k-\dim({\rm Hull}_{\rm H}(\mathcal{C}))}_q, \\
&\dsb{n,n-k-\dim({\rm Hull}_{\rm H}(\mathcal{C})),d^{\prime}; 
k-\dim({\rm Hull}_{\rm H}(\mathcal{C}))}_q.
\end{alignat}
From an $[n,k,n-k+1]_{q^2}$-MDS code $\mathcal{C}$ and its $[n,n-k,k+1]_{q^2}$-Hermitian
dual $\mathcal{C}^{\perp_{\rm H}}$ with
$\dim({\rm  Hull}_{\rm H}(\mathcal{C})) = 
\dim({\rm  Hull}_{\rm H}(\mathcal{C}^{\perp_{\rm H}}))=\ell$, one obtains EAQECCs with parameters
\begin{alignat}{5}
 &\dsb{n,k-\ell,n-k+1;n-k-\ell}_q \mbox{ and }\\
 &\dsb{n,n-k-\ell,k+1;k-\ell}_q.
\end{alignat}
In general, only the code with distance $d\le\frac{n}{2}$ is an
MDS EAQECC, whereas, for $d>\frac{n}{2}$ and $\kappa\le c <n-k$, the bound \eqref{eq:QMDS_large_distance} cannot be achieved with equality.

As shown by Grassl, Huber, and Winter in \cite[Theorem 7]{GraHubWin2022}, any pure $\dsb{n,\kappa,\delta;c}_q$-code obeys the bounds
\begin{equation}\label{eq:old_QMDS}
2 \delta \le n+c-\kappa+2.
\end{equation}
We show that this bound also applies to EAQECCs that can be obtained by Propositions \ref{prop:two} and \ref{prop:one}.

\begin{thm}\label{thm:bound55}
For any $\dsb{n, \kappa, \delta; c}_q$-code $\mathcal{Q}$ obtained by the Hermitian construction in Proposition~\ref{prop:two}, we have
\begin{alignat}{5}
  2 \delta \leq n + c - \kappa +2.
\end{alignat}
\end{thm}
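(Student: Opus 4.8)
The plan is to translate the quantum Singleton-type inequality into a purely classical statement about the $\mathbb{F}_{q^2}$-linear code $\mathcal{C}$ that produces $\mathcal{Q}$ via Proposition~\ref{prop:two}, and then invoke the classical Singleton bound for $\mathcal{C}^{\perp_{\rm H}}$. Recall that in the Hermitian construction, writing $\ell := \dim_{\mathbb{F}_{q^2}}(\mathcal{C}\cap\mathcal{C}^{\perp_{\rm H}})$, we have $c = k-\ell$ and $\kappa = n-2k+c = n-k-\ell$. The minimum distance $\delta$ satisfies $\delta \le d(\mathcal{C}^{\perp_{\rm H}})$ in all cases: when $\mathcal{C}^{\perp_{\rm H}}\subseteq\mathcal{C}$ we have $\delta = \operatorname{wt}(\mathcal{C}^{\perp_{\rm H}}) = d(\mathcal{C}^{\perp_{\rm H}})$, and otherwise $\delta = \operatorname{wt}\bigl(\mathcal{C}^{\perp_{\rm H}}\setminus(\mathcal{C}\cap\mathcal{C}^{\perp_{\rm H}})\bigr) \ge d(\mathcal{C}^{\perp_{\rm H}})$ — wait, in fact here $\delta$ is the minimum over the \emph{larger} set minus the hull, so $\delta \ge d(\mathcal{C}^{\perp_{\rm H}})$. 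So I need to be careful: the bound I want is an \emph{upper} bound on $\delta$, so I should instead bound $\delta$ by the distance of the code whose nonzero words of minimum weight realize $\delta$.

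The cleaner route: first I would handle the pure case directly by quoting \eqref{eq:old_QMDS} from \cite[Theorem~7]{GraHubWin2022}, since a pure code automatically satisfies $2\delta \le n+c-\kappa+2$. For the general (possibly impure) code obtained from Proposition~\ref{prop:two}, I would argue that $\mathcal{Q}$ is always \emph{pure to distance} $d(\mathcal{C}^{\perp_{\rm H}})$, and that there is an associated pure code with the same $n$, $c$, $\kappa$ whose distance equals $\delta_0 := d(\mathcal{C}^{\perp_{\rm H}}) \le \delta$... no — again that inequality goes the wrong way for an upper bound. Let me reconsider. The correct observation is: the code $\mathcal{C}^{\perp_{\rm H}}$ is an $[n, n-k, \delta_0]_{q^2}$ code with $\delta_0 \le \delta$ only when the hull is nontrivial is false; rather $\delta_0 = d(\mathcal{C}^{\perp_{\rm H}})$ and $\delta \ge \delta_0$. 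So to get $2\delta \le n+c-\kappa+2$ I genuinely need to use the structure of the minimum-weight words of $\mathcal{C}^{\perp_{\rm H}}$ lying \emph{outside} the hull. I would set $\mathcal{D} := \mathcal{C}^{\perp_{\rm H}}$, an $[n, n-k]_{q^2}$ code, with hull $\mathcal{H} = \mathcal{C}\cap\mathcal{C}^{\perp_{\rm H}}$ of dimension $\ell$. Then $\delta = \operatorname{wt}(\mathcal{D}\setminus\mathcal{H})$. The key step is a classical Singleton-type bound: for a linear $[n,m]_{q^2}$ code $\mathcal{D}$ with an $\ell$-dimensional subcode $\mathcal{H}$, one has $\operatorname{wt}(\mathcal{D}\setminus\mathcal{H}) \le n - (m-\ell) + 1$; indeed, puncture on a set $S$ of $m-\ell-1$ coordinates chosen so that the projection of $\mathcal{D}$ onto the complement still has a word outside the projection of $\mathcal{H}$ — equivalently, pick any $(m-\ell)$-dimensional complement-type subspace and shorten. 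Concretely: if $\operatorname{wt}(\mathcal{D}\setminus\mathcal{H}) > n-(m-\ell)+1$, then shortening $\mathcal{D}$ on any $m-\ell-1$ coordinates leaves a code of dimension $\ge m-(m-\ell-1) = \ell+1 > \ell$, which must contain a word outside $\mathcal{H}$ (since $\dim\mathcal{H} = \ell$), but that word has weight $\le n-(m-\ell-1)$, contradiction. With $m = n-k$, this gives $\delta \le n-(n-k-\ell)+1 = k+\ell+1$. Substituting $c = k-\ell$ and $\kappa = n-k-\ell$, one computes $n+c-\kappa+2 = n+(k-\ell)-(n-k-\ell)+2 = 2k+2 \ge 2\delta - 2(\ell+1) + ...$; let me just check: $2\delta \le 2(k+\ell+1) = 2k+2\ell+2$, and $n+c-\kappa+2 = 2k+2$, so this bound alone gives $2\delta \le 2k+2\ell+2$, which is \emph{weaker} than desired by $2\ell$ when $\ell>0$.

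So the naive classical Singleton bound is not tight enough, and the real obstacle — which I expect to be the crux of the argument — is to exploit the \emph{Hermitian orthogonality} between $\mathcal{C}$ and $\mathcal{D}=\mathcal{C}^{\perp_{\rm H}}$, not just dimension counting. The right idea, I believe, is to shorten \emph{both} $\mathcal{C}$ and $\mathcal{C}^{\perp_{\rm H}}$ simultaneously on a common coordinate set and track how the hull dimension changes, producing a \emph{shorter} EAQECC via Proposition~\ref{prop:two} and then either inducting or landing in the pure regime where \eqref{eq:old_QMDS} applies. Specifically, I would puncture $\mathcal{C}$ (equivalently shorten $\mathcal{C}^{\perp_{\rm H}}$) at one coordinate: this yields $\mathcal{C}_1$ of length $n-1$, and by a standard argument the new parameters $(n-1, \kappa_1, \delta_1; c_1)$ obtained from Proposition~\ref{prop:two} satisfy $\delta_1 \ge \delta - 1$, $\kappa_1 \le \kappa$... and crucially $c_1 \le c$ with $n-1 + c_1 - \kappa_1 \le n+c-\kappa$ — i.e. the quantity $n+c-\kappa$ does not increase under puncturing while $\delta$ drops by at most one. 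Iterating until the code becomes pure (which happens once the residual hull is trivial or the distance stabilizes) reduces everything to the pure bound \eqref{eq:old_QMDS}. The main obstacle is verifying rigorously that puncturing the classical ingredient induces a legitimate application of Proposition~\ref{prop:two} on the shortened pair with the claimed behaviour of $(\kappa, c, \delta)$ — in particular that the hull dimension of the shortened code behaves monotonically — and that after finitely many steps the resulting code is pure to its own distance so that \eqref{eq:old_QMDS} is available; I would carry out that induction carefully, using Lemma~\ref{lem:dimHull} to control $\operatorname{rank}(GG^\dagger)$ under deleting a column of $G$, and handle the edge cases $\kappa = 0$ (where $\mathcal{Q}$ is pure by definition, as noted after Proposition~\ref{prop:one}) and $\delta$ small (where \eqref{eq:QMDS_small_distance} already suffices) separately.
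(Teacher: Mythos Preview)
Your first instinct — reduce to a classical Singleton bound on a complement of the hull inside $\mathcal{C}^{\perp_{\rm H}}$ — is exactly the paper's approach, but you execute it sub-optimally and then abandon it for an unnecessary induction.

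The gap is in your ``naive'' bound $\operatorname{wt}(\mathcal{D}\setminus\mathcal{H})\le n-(m-\ell)+1$. You shorten $\mathcal{D}$ on $\kappa-1=m-\ell-1$ \emph{arbitrary} positions; that wastes the structure of $\mathcal{H}$ and leaves you $2\ell$ short. The correct move is to shorten on an \emph{information set of $\mathcal{H}$} (size $\ell$): put a generator matrix of $\mathcal{H}$ in standard form $(I_\ell\;R)$ and extend it to a generator matrix of $\mathcal{D}=\mathcal{C}^{\perp_{\rm H}}$ of the shape
\[
\begin{pmatrix} I_\ell & R\\ \mathbf{0}_{\kappa\times\ell} & A\end{pmatrix}.
\]
The $\kappa$ rows $(\mathbf{0}\;A)$ span a subcode whose nonzero words all lie in $\mathcal{D}\setminus\mathcal{H}$ (they vanish on the first $\ell$ coordinates, while a nonzero hull word cannot), and they are supported on only $n-\ell$ coordinates. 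Hence $A$ generates an $[n-\ell,\kappa,\ge\delta]_{q^2}$ code, and the classical Singleton bound gives $\delta\le (n-\ell)-\kappa+1$. Substituting $\ell=(n-\kappa-c)/2$ yields $2\delta\le n+c-\kappa+2$ directly.

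So no Hermitian orthogonality beyond the definition of the hull is needed, and no puncturing/induction either; your proposed iterative reduction to the pure case (with its unverified monotonicity claims about $c$, $\kappa$, and the hull under puncturing) is both harder and unnecessary. The whole theorem is a one-line consequence of choosing the complement to $\mathcal{H}$ in $\mathcal{C}^{\perp_{\rm H}}$ to vanish on an information set of $\mathcal{H}$.
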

\begin{proof}
Corresponding to the $\dsb{n, \kappa, \delta; c}_q$-code $\mathcal{Q}$, there exists an 
$\left[n, n-\kappa-\ell \right]_{q^2}$-code $\mathcal{C}$ such that $\dim_{\mathbb{F}_{q^2}}\left({\rm Hull}_{\rm H}\left(\mathcal{C}\right)\right) =\ell$. If ${\rm Hull}_{\rm H}(\mathcal{C})$ has generator matrix $\begin{pmatrix} I_\ell & R
\end{pmatrix}$, then $\mathcal{C}^{\perp_{\rm H}}$ has generator matrix 
\[
\begin{pmatrix}
	I_{\ell} & R\\
	\mathbf{O}_{\kappa \times \ell} & A
\end{pmatrix}.
\]
The code generated by 
\[
\begin{pmatrix}
	\mathbf{O}_{ \kappa  \times \ell} & A
\end{pmatrix}
\]
is a subset of $\mathcal{C}^{\perp_{\rm H}} \setminus {\rm Hull}_{\rm H}(\mathcal{C})$ and
has parameters $[n, \kappa, \geq \delta]_{q^2}$. Hence, the linear code
generated by the matrix $A$ has parameters
\[
\left[n- \ell, \kappa , \geq \delta \right]_{q^2}.	
\]
By the classical Singleton bound, we arrive at
\[
\delta \leq n - \ell -\kappa+ 1.
\]
Since $c=n-\kappa- 2\ell$, we have
\[
\delta \leq n - \frac{n-\kappa-c}{2} -\kappa+ 1 \iff 2 \delta \leq n+c-\kappa+2.
\]
\end{proof}

The codes in the CSS-like subfamily obeys the bound
\eqref{eq:old_QMDS} as well.
\begin{thm}\label{thms}
For $i=1,2$, let $\mathcal{C}_i$ be an $[n,k_i]_q$-code. Let
$\kappa=n-(k_1+k_2)+c$. For any $\dsb{n ,\kappa , \delta; c}_q$-code $\mathcal{Q}$ obtained by the CSS-like construction in Proposition~\ref{prop:one}, we have
\begin{alignat}{5}
 2 \delta \leq n + c - \kappa + 2.
\end{alignat}
\end{thm}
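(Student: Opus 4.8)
The plan is to mirror the proof of Theorem~\ref{thm:bound55}, working this time with the pair of codes $\mathcal{C}_1,\mathcal{C}_2$ rather than a single code and its Hermitian dual. First I would recall that, by Proposition~\ref{prop:one}, the parameter $c$ is $k_1-\dim(\mathcal{C}_1\cap\mathcal{C}_2^\perp)$; write $\ell:=\dim(\mathcal{C}_1\cap\mathcal{C}_2^\perp)$, so that $c=k_1-\ell$ and, by the symmetry of the construction in the two codes, also $\dim(\mathcal{C}_2\cap\mathcal{C}_1^\perp)=k_2-(k_2-(k_1-c))$ — more precisely one checks $\dim(\mathcal{C}_1^\perp\cap\mathcal{C}_2)=n-k_1-(k_2-c)$. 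The distance $\delta$ is governed by $\mathrm{wt}\bigl(C_2^\perp\setminus(C_1\cap C_2^\perp)\bigr)$ (and the analogous term with indices swapped), so it suffices to bound the weight of a nonzero codeword of $C_2^\perp$ lying outside the intersection $C_1\cap C_2^\perp$.

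The key step is a suitable choice of generator matrix. I would put the intersection $\mathcal{C}_1\cap\mathcal{C}_2^\perp$, an $[n,\ell]_q$-code, in standard form $\begin{pmatrix} I_\ell & R\end{pmatrix}$ after a coordinate permutation (which changes nothing relevant, by the classical analogue of Lemma~\ref{lem1H}). Then extend this to a generator matrix of $\mathcal{C}_2^\perp$, which has dimension $n-k_2$, of the block shape
\[
\begin{pmatrix}
I_\ell & R\\
\mathbf{O}_{(n-k_2-\ell)\times\ell} & A
\end{pmatrix}.
\]
Every nonzero row-combination that uses the bottom block nontrivially is a codeword of $\mathcal{C}_2^\perp$ outside $\mathcal{C}_1\cap\mathcal{C}_2^\perp$ — since the first $\ell$ rows span exactly the intersection — hence has weight at least $\delta$. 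The punctured code generated by $A$, of length $n-\ell$, dimension $n-k_2-\ell$, therefore has minimum distance at least $\delta$, and the classical Singleton bound gives $\delta\le (n-\ell)-(n-k_2-\ell)+1=k_2+1$. By symmetry one also gets $\delta\le k_1+1$, but the sharp consequence comes from combining the Singleton estimate with the expression for $c$: since $\ell=k_1-c$ and $\kappa=n-k_1-k_2+c$, substituting $\delta\le n-k_2-\ell+1=n-k_2-k_1+c+1$ yields $2\delta\le (n-k_1-k_2+c+1)+(k_2+1)$; then using $\kappa=n-k_1-k_2+c$ this is exactly $2\delta\le n+c-\kappa+2$. (Equivalently, average the two Singleton bounds $\delta\le n-k_2-\ell+1$ and $\delta\le n-k_1-\ell'+1$ with $\ell'=\dim(\mathcal{C}_2\cap\mathcal{C}_1^\perp)$, after checking $\ell+\ell'=n-\kappa-c$... but the one-sided computation already suffices.)

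The main obstacle is bookkeeping with the various dimensions: one must be careful that the bottom block $A$ really does generate codewords that avoid the hull-like intersection $C_1\cap C_2^\perp$ rather than merely $C_2^\perp\cap C_2$, and that the relevant distance in the ``otherwise'' branch of Proposition~\ref{prop:one} is indeed at least $\mathrm{wt}(C_2^\perp\setminus(C_1\cap C_2^\perp))$ — this is exactly what the block-triangular form delivers, since deleting the first $\ell$ standard-form columns cannot decrease the weight of any codeword involving the $A$-block. A secondary subtlety is the degenerate case $\ell=n-k_2$ (i.e.\ $C_2^\perp\subseteq C_1$, so the ``if'' branch applies and $\delta=\min\{\mathrm{wt}(C_1^\perp),\mathrm{wt}(C_2^\perp)\}$); there the bottom block is empty, but then $\delta\le\mathrm{wt}(C_2^\perp)=d(C_2^\perp)\le k_2+1$ by the ordinary Singleton bound on $C_2^\perp$, and the same arithmetic closes the argument. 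Once the block structure is set up correctly, the rest is the identical two-line substitution as in Theorem~\ref{thm:bound55}.
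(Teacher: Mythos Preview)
Your approach is essentially the paper's: put $\Delta=\mathcal{C}_1\cap\mathcal{C}_2^\perp$ in standard form, extend to a block-triangular generator matrix of $\mathcal{C}_2^\perp$, observe that the $A$-block generates an $[n-\ell,\,n-k_2-\ell,\,\ge\delta]_q$-code, and apply Singleton. You correctly obtain $\delta\le k_2+1$ and, by symmetry, $\delta\le k_1+1$. Adding these two gives $2\delta\le k_1+k_2+2=n+c-\kappa+2$ immediately --- and that is precisely what the paper does.

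Where your write-up breaks is the attempted ``one-sided'' shortcut. After correctly stating $\delta\le k_2+1$, you suddenly invoke ``$\delta\le n-k_2-\ell+1$'', which is \emph{not} what Singleton gives on an $[n-\ell,\,n-k_2-\ell]$-code; Singleton gives $(n-\ell)-(n-k_2-\ell)+1=k_2+1$, nothing more. The ensuing arithmetic then fails to close: $(n-k_1-k_2+c+1)+(k_2+1)=n-k_1+c+2$, whereas $n+c-\kappa+2=k_1+k_2+2$, and these coincide only in the accidental case $\kappa=k_1$. Your parenthetical ``averaging'' alternative repeats the same slip with the same spurious bound. Separately, the expressions you write for $\dim(\mathcal{C}_2\cap\mathcal{C}_1^\perp)$ in the opening paragraph are garbled (the correct value is $k_2-c$, not $k_1-c$ nor $n-k_1-(k_2-c)=\kappa$), though fortunately you never use them downstream. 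Drop the one-sided computation and simply add the two Singleton inequalities you already derived; the proof is then complete and matches the paper's.
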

	
\begin{proof}
By Proposition \ref{prop:one}, there exist two linear codes
$\mathcal{C}_1$ and $\mathcal{C}_2$ with respective parameters $[n,k_1]_q$ and
$[n,k_2]_q$, where $k_2 = n- \kappa + c- k_1$. We denote by $\Delta$ the code $\mathcal{C}_1 \cap \mathcal{C}_2^{\perp}$ and
let $\ell=\dim_{\mathbb{F}_q}(\Delta)$. Let $\Delta$ be generated by $\begin{pmatrix} 
I_\ell & R \end{pmatrix}$. Let 
\[
\begin{pmatrix}
I_{\ell} & R\\
\mathbf{O}_{(n-\kappa_2-\ell) \times \ell} & A
\end{pmatrix}
\]
generate $\mathcal{C}_2^{\perp}$. The $[n, n-k_2-\ell, \geq \delta]_q$-code generated by
$\begin{pmatrix}
\mathbf{O}_{(n-k_2-\ell) \times \ell} & A
\end{pmatrix}$ is a subset of $\mathcal{C}_2^{\perp} \setminus
\Delta$. Hence, there exists an $[n- \ell, n-k_2-\ell, \geq
  \delta]_q$-code generated by the matrix $A$. By the
Singleton bound, we infer that $\delta \leq k_2+1$.  In a similar manner, starting from the $\beta$-dimensional code $\Gamma = \mathcal{C}_2 \cap \mathcal{C}_1^{\perp}$, one derives an $[n-\beta, n-k_1-\beta,
  \geq \delta]_q$-code, with $\delta \leq k_1+1$.  Combining the two
inequalities gives us $2 \delta \leq k_1 + k_2 + 2 = n+ c- \kappa + 2$, as promised.
\end{proof}

All these Singleton-type bounds are independent of the alphabet size
$q$.  The classical bound of Griesmer from \cite{Griesmer1960} leads
to a sharper upper bound for lengths $n > q^2+1$.

\begin{thm}
For any $\dsb{n, \kappa, \delta; c}_q$-code $\mathcal{Q}$ obtained by the CSS-like construction in Proposition~\ref{prop:one}, we have 
\[
\frac{n+ \kappa + c}{2} \geq \sum_{i=0}^{\kappa-1} \left\lceil\frac{\delta}{q^i}\right\rceil.
\]
\end{thm}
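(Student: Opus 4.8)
The plan is to mimic the proof of Theorem~\ref{thms}, replacing the classical Singleton bound by the classical Griesmer bound. Recall from the proof of Theorem~\ref{thms} that, given a $\dsb{n,\kappa,\delta;c}_q$-code $\mathcal{Q}$ obtained by the CSS-like construction, there exist linear codes $\mathcal{C}_1$ with parameters $[n,k_1]_q$ and $\mathcal{C}_2$ with parameters $[n,k_2]_q$, where $k_1+k_2 = n + c - \kappa$. Setting $\ell = \dim(\mathcal{C}_1 \cap \mathcal{C}_2^{\perp})$ and $\beta = \dim(\mathcal{C}_2 \cap \mathcal{C}_1^{\perp})$, one extracts from a generator matrix of $\mathcal{C}_2^{\perp}$ in the block form used there an $[n-\ell,\, n-k_2-\ell,\, \ge \delta]_q$-code generated by the matrix $A$, and symmetrically an $[n-\beta,\, n-k_1-\beta,\, \ge \delta]_q$-code. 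These are the two classical codes I would feed into the Griesmer bound.

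First I would apply the Griesmer bound to the $[n-\ell,\, n-k_2-\ell,\, \ge \delta]_q$-code. Since that code has dimension $n-k_2-\ell$ and minimum distance at least $\delta$, the Griesmer bound gives $n - \ell \ge \sum_{i=0}^{n-k_2-\ell-1} \lceil \delta/q^i \rceil$. The summand $\lceil \delta/q^i \rceil$ is nonnegative and is at least $1$ for every $i$, so truncating the sum to its first $\kappa$ terms only decreases it; but I must be careful that the dimension $n-k_2-\ell$ is at least $\kappa$ so that the truncation is legitimate. Indeed $n - k_2 - \ell \ge \kappa$ is equivalent to $n - k_2 - \ell \ge n - k_1 - k_2 + c$, i.e. $k_1 - \ell \ge c$; and since $c = k_1 - \dim(\mathcal{C}_1 \cap \mathcal{C}_2^{\perp}) = k_1 - \ell$ by Proposition~\ref{prop:one}, this holds with equality. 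Hence $n - \ell \ge \sum_{i=0}^{\kappa-1} \lceil \delta/q^i \rceil$. Running the symmetric argument on the $[n-\beta,\, n-k_1-\beta,\, \ge \delta]_q$-code and using $c = k_2 - \beta$ gives $n - \beta \ge \sum_{i=0}^{\kappa-1} \lceil \delta/q^i \rceil$ as well.

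Adding the two inequalities yields $2n - \ell - \beta \ge 2 \sum_{i=0}^{\kappa-1} \lceil \delta/q^i \rceil$. It remains to express $\ell + \beta$ in terms of the quantum parameters: from $c = k_1 - \ell = k_2 - \beta$ we get $\ell + \beta = k_1 + k_2 - 2c = (n + c - \kappa) - 2c = n - \kappa - c$. Substituting, $2n - (n - \kappa - c) \ge 2 \sum_{i=0}^{\kappa-1} \lceil \delta/q^i \rceil$, that is, $n + \kappa + c \ge 2 \sum_{i=0}^{\kappa-1} \lceil \delta/q^i \rceil$, which is exactly the claimed bound after dividing by $2$.

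The main obstacle, and the only genuinely delicate point, is justifying the truncation of the Griesmer sum: one needs $n - k_2 - \ell \ge \kappa$ (and symmetrically $n - k_1 - \beta \ge \kappa$) so that dropping the higher-index terms is valid, and this is where the identity $c = k_1 - \ell$ (respectively $c = k_2 - \beta$) coming from the definition of $c$ in Proposition~\ref{prop:one} is used. A secondary point is the edge case $\kappa = 0$, where the sum is empty and the inequality reads $n + c \ge 0$, which is trivially true; and one should also note the bound is only asserted for codes actually arising from Proposition~\ref{prop:one}, so these structural facts about $\mathcal{C}_1, \mathcal{C}_2$ are available by hypothesis. Everything else is the same bookkeeping as in the proof of Theorem~\ref{thms}, with ``Singleton'' replaced by ``Griesmer.''
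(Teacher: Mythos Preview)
Your proposal is correct and follows essentially the same route as the paper: extract the two classical codes $[n-\ell,\,n-k_2-\ell,\ge\delta]_q$ and $[n-\beta,\,n-k_1-\beta,\ge\delta]_q$ from the proof of Theorem~\ref{thms}, apply Griesmer to each, add, and simplify using $c=k_1-\ell=k_2-\beta$. Two minor remarks: the ``truncation'' you worry about is in fact no truncation at all, since $n-k_2-\ell=\kappa$ exactly (the paper simply rewrites the upper summation index as $n-(k_1+k_2)+c-1=\kappa-1$); and the identity $c=k_2-\beta$ you invoke by symmetry is justified in the paper via $\rank(G_1G_2^{\top})=\rank(G_2G_1^{\top})$, which gives $k_1-\ell=k_2-\beta$.
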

\begin{proof}
Let $k_2 = n-\kappa+ c- k_1$. By the proof of Theorem \ref{thms},
there exist two linear codes $\mathcal{A}$ and $\mathcal{B}$ with
respective parameters $[n- \ell, n-k_2-\ell, \geq \delta]_q$ and 
$ [n-\beta,n-k_1-\beta, \geq \delta]_q$, where $\ell$ and $\beta$ are the $\mathbb{F}_q$-dimensions of $\mathcal{C}_1 \cap \mathcal{C}_2^{\perp}$ and $\mathcal{C}_2 \cap \mathcal{C}_1^{\perp}$. Let $G_1$ and $G_2$ be
generator matrices of $\mathcal{C}_1$ and $\mathcal{C}_2$,
respectively. Then the dimension of the solution space of $G_1
G_2^{\top} \mathbf{x}^{\top} = \mathbf{0}$ is $k_2-
\rank(G_1G_2^{\top})$. Since
\[
G_1 G_2^{\top} \mathbf{x}^{\top} = G_1 (\mathbf{x}G_2)^{\top} = \mathbf{0},
\]
we have $k_2-\rank(G_1G_2^{\top})=\beta$. Employing the method
analogous to the one we have just used, we arrive at  
\[
k_1-\rank(G_2G_1^{\top})= \ell \implies k_2-\beta=k_1-\ell.
\]
We note that 
\begin{equation}\label{eq:EA_10}
	c=\rank(G_1G_2^{\top})=k_1-\ell=k_2-\beta.
\end{equation}
Applying the Griesmer bound to $\mathcal{A}$ and $\mathcal{B}$ gives
us
\begin{align}
n- \ell & \geq \sum_{i=0}^{n-k_2-\ell-1} \left\lceil\frac{\delta}{q^i}\right\rceil = \sum_{i=0}^{n-(k_1+k_2)+c-1} \left\lceil\frac{\delta}{q^i}\right\rceil \mbox{ and}  \label{eq:EA_11}\\
n- \beta & \geq \sum_{i=0}^{n-k_1-\beta-1} \left\lceil\frac{\delta}{q^i}\right\rceil = \sum_{i=0}^{n-(k_1+k_2)+c-1} \left\lceil\frac{\delta}{q^i}\right\rceil. \label{eq:EA_12}
\end{align}
Since $\ell=k_1-c$ and $\beta=k_2-c$, it follows from (\ref{eq:EA_11})
and (\ref{eq:EA_12}) that
\begin{multline*}
2n-\ell-\beta = 
2n-(k_1+k_2)+2c=  \\
\geq 2 \sum_{i=0}^{n-(k_1+k_2)+c-1} \left\lceil\frac{\delta}{q^i} \right\rceil.
\end{multline*}
The conclusion follows from $k_2=n-\kappa+c-k_1$.
\end{proof}

\begin{thm}{\rm \cite{Li2015}}\label{thm:Griesmer}
For any $\dsb{n,\kappa, \delta; c}_q$-EAQECC obtained by the Hermitian
construction of Proposition~\ref{prop:two}, we have
\begin{equation}\label{eq:GriesmerH}
\frac{n+\kappa+c}{2} \geq \sum_{i=0}^{\kappa-1} \left\lceil\frac{\delta}{q^{2i}}\right\rceil.
\end{equation}
\end{thm}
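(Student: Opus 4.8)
The plan is to mimic the strategy already used for the CSS-like Griesmer bound in the preceding theorem, but applied through the Hermitian construction of Proposition~\ref{prop:two} rather than Proposition~\ref{prop:one}. Starting from an $\dsb{n,\kappa,\delta;c}_q$-code $\mathcal{Q}$ obtained via Proposition~\ref{prop:two}, I would invoke the presence of an $[n,n-\kappa-\ell]_{q^2}$-code $\mathcal{C}$ with $\dim_{\mathbb{F}_{q^2}}({\rm Hull}_{\rm H}(\mathcal{C}))=\ell$ and the relation $c=n-\kappa-2\ell$ (exactly as in the proof of Theorem~\ref{thm:bound55}). Choosing a generator matrix for $\mathcal{C}^{\perp_{\rm H}}$ in the block form $\begin{pmatrix} I_\ell & R\\ \mathbf{O}_{\kappa\times\ell} & A\end{pmatrix}$, where the first $\ell$ rows generate ${\rm Hull}_{\rm H}(\mathcal{C})$, the rows of $\begin{pmatrix}\mathbf{O}_{\kappa\times\ell} & A\end{pmatrix}$ lie in $\mathcal{C}^{\perp_{\rm H}}\setminus{\rm Hull}_{\rm H}(\mathcal{C})$, so the code generated by $A$ has parameters $[n-\ell,\kappa,\ge\delta]_{q^2}$.

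Next I would apply the classical Griesmer bound over $\mathbb{F}_{q^2}$ to the $[n-\ell,\kappa,\ge\delta]_{q^2}$-code generated by $A$, obtaining
\[
n-\ell \;\ge\; \sum_{i=0}^{\kappa-1}\left\lceil\frac{\delta}{q^{2i}}\right\rceil .
\]
Then I would substitute $\ell = \tfrac{1}{2}(n-\kappa-c)$, which rearranges to $n-\ell = \tfrac{1}{2}(n+\kappa+c)$, yielding the claimed inequality \eqref{eq:GriesmerH} directly. This is the entire argument: everything reduces to extracting one classical code of the right parameters and feeding it to the classical Griesmer bound, with the alphabet being $\mathbb{F}_{q^2}$ (hence the $q^{2i}$ in the denominator, in contrast to the $q^i$ in the CSS-like version where the ingredients live over $\mathbb{F}_q$).

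The one place that deserves care — and which I expect to be the main (mild) obstacle — is justifying the block structure of the generator matrix of $\mathcal{C}^{\perp_{\rm H}}$, i.e.\ that ${\rm Hull}_{\rm H}(\mathcal{C})$ can be taken to sit as the first $\ell$ coordinates of a basis of $\mathcal{C}^{\perp_{\rm H}}$ in standard form, and that the complementary block $\begin{pmatrix}\mathbf{O} & A\end{pmatrix}$ indeed avoids the hull except at $\mathbf{0}$ so that its minimum weight is bounded below by $\delta$. This is handled exactly as in the proof of Theorem~\ref{thm:bound55}: after a suitable change of basis the hull generator is $\begin{pmatrix}I_\ell & R\end{pmatrix}$, any codeword of $\mathcal{C}^{\perp_{\rm H}}$ using a nonzero combination of the last $\kappa$ rows is outside the hull, and by the definition of $\delta$ (the weight of $\mathcal{C}^{\perp_{\rm H}}\setminus{\rm Hull}_{\rm H}(\mathcal{C})$ in the relevant case, or of all of $\mathcal{C}^{\perp_{\rm H}}$ in the pure case, which is only larger) such words have weight at least $\delta$; projecting away the first $\ell$ zero columns preserves the weight, giving the $[n-\ell,\kappa,\ge\delta]_{q^2}$-code. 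Since the argument is essentially a transcription of the earlier proof with $q$ replaced by $q^2$, and since this bound is cited as \cite{Li2015}, I would keep the writeup short, noting only the two substitutions $\ell=\tfrac12(n-\kappa-c)$ and "work over $\mathbb{F}_{q^2}$" as the distinguishing features.
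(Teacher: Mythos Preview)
Your proposal is correct and follows essentially the same approach as the paper: extract the $[n-\ell,\kappa,\ge\delta]_{q^2}$-code from the proof of Theorem~\ref{thm:bound55}, apply the classical Griesmer bound over $\mathbb{F}_{q^2}$, and substitute $\ell=\tfrac{1}{2}(n-\kappa-c)$. The paper's version is just terser, citing the auxiliary code directly from the earlier proof rather than re-deriving the block decomposition.
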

\begin{proof}
We have an 
$[n-\ell,\kappa, \geq \delta]_{q^2}$-code with $\ell=\frac{n-\kappa-c}{2}$ from the proof of Theorem
\ref{thm:bound55}. By the Griesmer bound,
\[
n-\ell =\frac{n+\kappa+c}{2} \geq \sum_{i=0}^{k-1}\left\lceil\frac{\delta}{q^{2i}}\right\rceil.
\]
\end{proof}

\section{Computational Results}\label{sec:compute}

The results we have derived as well as previously available tools can now be used to search for good entanglement-assisted (EA) qubits and qutrit. 

The simplest approach would have been to apply Proposition \ref{prop:two} on $\mathbb{F}_4$ and $\mathbb{F}_9$-linear codes in the current {\tt MAGMA BKLC} database of codes with best-known minimum distances \cite{Bosma1997,Grassl:codetables}. Most codes in the database are LCD codes or codes with small Hermitian hulls. For qutrit codes, in light of Theorems \ref{thm:hull} and \ref{thm:more}, we prefer codes with large Hermitian hulls, \textit{e.\,g.}, specially crafted quasi-cyclic codes with large Hermitian hulls based on the construction method in \cite[Section III]{ELOS19}. Such classical codes yield EAQECCs with a wider range of parameters. The parameters also depend on the minimum distances of their respective dual codes. The choice of classical codes to record in the said database does not take the above into consideration. One can switch the role of the code and its dual in Proposition \ref{prop:two} so that the code from the database provide information on the minimum distance.

Theorem \ref{thm:same} yields good codes on numerous occasions. Determining the matrix $D$, however, is time consuming and the resulting parameters are often already covered by the other construction approaches. Computational evidences indicate that the benefit from applying Theorem \ref{thm:same} occurs when $d' = d+1$. Replacing the diagonal matrix on the right hand side of (\ref{eq:thm41H}) by a matrix of rank $s$ sometimes allows for a more efficient randomized procedure to find a suitable matrix $D$ that eventually leads to a good qutrit code.

For qutrit codes, we use Theorem
\ref{thm:minEntanglement} to determine the minimum number of maximally entangled pairs $c_{\text{min}}$, \textit{e.\,g.}, by exhaustive search, or use a randomized search to find a smaller value.

We provide the parameters of the best-performing qubit, for lengths $3
\leq n \leq 64$, and qutrit, for lengths $3 \leq n \leq 36$, of
EAQECCs that we can explicitly construct in Tables \ref{table:qubit}
and \ref{table:qutrit}. Among the parameters
$\dsb{n,\kappa,\delta;c}_q$, all other parameters being equal, we
record the smallest $n$, the largest $\kappa$, the largest $\delta$,
and the smallest $c$ for $q \in \{2,3\}$. The tables are compressed
using the propagation rules in this paper and those given in
\cite{Galindo2019,Galindo2021,GraHubWin2022}. For ease of reference we
list the following eight propagation rules. The first four rules are
trivial. Rule $(5)$ is obtained by erasing one position of the
original code.  Rules $(6)$ to $(8)$ come from Theorem \ref{thm:more},
\cite[Theorem 7]{Luo2022}, and \cite[Theorem 8]{Luo2022},
respectively.
\begin{enumerate}
\item[(1)] length extension: $[\![n,\kappa,\delta;c]\!]_q \longrightarrow
      [\![n+1,\kappa,\delta;c]\!]_q$.
\item[(2)] subcode: $[\![n,\kappa,\delta;c]\!]_q \longrightarrow
      [\![n,\kappa-1,\delta;c]\!]_q$.
\item[(3)] smaller distance: $[\![n,\kappa,\delta;c]\!]_q \longrightarrow
      [\![n,\kappa,\delta-1;c]\!]_q$. 
\item[(4)] requiring more entanglement:\\
$[\![n,\kappa,\delta;c]\!]_q \longrightarrow
      [\![n,\kappa,\delta;c+1]\!]_q$. 
\item[(5)] puncturing, assuming $\delta>1$ and $c<n-\kappa$:\\ $[\![n,\kappa,\delta;c]\!]_q\longrightarrow
      [\![n-1,\kappa,\delta-1;c]\!]_q$.
\item[(6)] increasing the dimension of a pure $q$-ary quantum code
  with $q>2$ by using extra entanglement, provided that $c \leq n-\kappa-2$:\\ $[\![n,\kappa,\delta;c]\!]_q\longrightarrow
      [\![n,\kappa+1,\delta;c+1]\!]_q$.
\item[(7)] reducing the length by using extra entanglement, provided that $c \leq n-\kappa-2$:\\ $[\![n,\kappa,\delta;c]\!]_q \longrightarrow
      [\![n-1,\kappa,\delta;c+1]\!]_q$.
\item[(8)] shortening pure quantum code:\\
$[\![n,\kappa,\delta;c]\!]_q \longrightarrow [\![n-1,\kappa+1,\delta-1;c]\!]_q$.
\end{enumerate}

The parameters that we have determined in this work can be found in
the online record of the bounds on the minimum distance of
entanglement-assisted quantum codes \cite{Grassl:EAQECCtables}.

\section{Concluding Remarks}\label{sec:conclu}

The use of pre-shared entanglement in quantum error control raises
questions. How do entanglement-assisted QECCs compare to other QECCs that draw on different resources? In what setups can they be more useful than the others? The enhanced rate or better error-handling capability offered by EAQECCs must be paid for by the additional cost of pre-shared entanglement. On the more practical front, one asks how to best share ebits and how many of them to share.

It is possible for the net rate $\Bar{\rho}(Q)$ to be zero or
negative. 
Can such a code be useful in practice? Since shared entanglement $\ket{\phi}_{AB}$ is independent of the message $\ket{\varphi}$, it can be prepared ahead of time and stored to be used as and when needed. In a quantum network, where usage varies over time, Alice and Bob can use periods of low usage to accumulate ebits. These can then be utilized to increase the transmission rate without trading off on the error-correcting power when the network usage grows higher.

Codes with positive net rate can be used as building blocks in the
construction of \textit{catalytic quantum codes}, leading to the
quantum analogue of highly-efficient classical codes such as Turbo and LDPC codes \cite{Wilde2014}.

The quantum setup provides a rich ground for coding theorists of the more classical mould to venture into topics hitherto less explored. Instead of focusing on quantum codes that meet the analogue of the Singleton bound, for example, constructing qubit and qutrit codes that have better chances of being implemented in actual quantum devices and networks could take a more focal position.

We identify the following open directions for further investigation. 
\begin{enumerate}
    \item Establish sharper lower and upper bounds on the parameters of best EAQECCs, especially for qubit and qutrit codes.
    \item Find the quantum code with the largest rate for a specified quantum distance and hull dimension. The duality can be chosen among suitable choices of inner products, depending on the construction routes.
    \item In the classical setting, given a length $n$ and dimension $k$, construct a code with the largest hull and optimal dual distance.
\end{enumerate}

\begin{acknowledgments}
The authors would like to thank Hao Chen and Tania Sidana for comments
on earlier versions of this manuscript.

G. Luo, M. F. Ezerman, and S. Ling are supported by Nanyang
Technological University Grant 04INS000047C230GRT01.

M. Grassl acknowledges support by the Foundation for Polish Science
(IRAP project, ICTQT, contract no. 2018/MAB/5, co-financed by EU
within Smart Growth Operational Programme).
\end{acknowledgments}

\bibliographystyle{plainnat}
\bibliography{EAQECC}

\widetext

\begin{table*}
\caption{A Concise Version of the Parameters of Good Qubit EAQECCs with $3 \leq n \leq 64$. To obtain the full table, one applies the propagation rules in Section \ref{sec:compute}. We exclude the entries with $c=0$ since a large database for such codes is already available in \cite{Grassl:codetables}.}
\label{table:qubit}
\centering
\resizebox{0.9\textwidth}{!}{
\begin{tabular}{lll lll}
\hline
$\dsb{n,\kappa,\delta;c}_2 $ & $\dsb{n,\kappa,\delta;c}_2 $ & $\dsb{n,\kappa,\delta;c}_2 $ & $\dsb{n,\kappa,\delta;c}_2 $ & $\dsb{n,\kappa,\delta;c}_2 $ & $\dsb{n,\kappa,\delta;c}_2 $ \\
\hline
$ \dsb{3, 1, 3; 2}_2$ &
$ \dsb{5, 4, 2; 1}_2$ &
$ \dsb{5, 0, 4; 1}_2$ &
$ \dsb{6, 0, 6; 4}_2$ &
$ \dsb{7, 6, 2; 1}_2$ &
$ \dsb{7, 4, 3; 3}_2$ \\

$ \dsb{8, 5, 3; 3}_2$ &
$ \dsb{8, 4, 4; 4}_2$ &
$ \dsb{8, 2, 5; 4}_2$ &
$ \dsb{8, 0, 8; 6}_2$ &
$ \dsb{9, 8, 2; 1}_2$ &
$ \dsb{9, 6, 3; 3}_2$ \\

$ \dsb{9, 4, 5; 5}_2$ &
			$ \dsb{9, 2, 6; 5}_2$ &
			$ \dsb{10, 6, 4; 4}_2$ &
			$ \dsb{10, 4, 5; 4}_2$ &
			$ \dsb{10, 4, 6; 6}_2$ &
			$ \dsb{10, 0, 10; 8}_2$ \\
			
			$ \dsb{11, 10, 2; 1}_2$ &
			$ \dsb{11, 7, 3; 2}_2$ &
			$ \dsb{11, 6, 4; 3}_2$ &
			$ \dsb{11, 5, 6; 6}_2$ &
			$ \dsb{12, 9, 3; 3}_2$ &
			$ \dsb{12, 8, 4; 4}_2$ \\
			
			$ \dsb{12, 5, 6; 5}_2$ &
			$ \dsb{12, 0, 12; 10}_2$ &
			$ \dsb{13, 12, 2; 1}_2$ &
			$ \dsb{13, 10, 3; 3}_2$ &
			$ \dsb{13, 9, 4; 4}_2$ &
			$ \dsb{14, 0, 14; 12}_2$ \\
			
			$ \dsb{15, 14, 2; 1}_2$ &
			$ \dsb{15, 9, 5; 6}_2$ &
			$ \dsb{16, 13, 3; 3}_2$ &
			$ \dsb{16, 9, 4; 1}_2$ &
			$ \dsb{16, 9, 6; 7}_2$ &
			$ \dsb{16, 0, 16; 14}_2$ \\
			
			$ \dsb{17, 16, 2; 1}_2$ &
			$ \dsb{17, 13, 3; 2}_2$ &
			$ \dsb{17, 9, 6; 6}_2$ &
			$ \dsb{17, 0, 8; 1}_2$ &
			$ \dsb{17, 0, 12; 9}_2$ &
			$ \dsb{18, 15, 3; 3}_2$ \\
			
			$ \dsb{18, 10, 5; 4}_2$ &
			$ \dsb{18, 7, 9; 11}_2$ &
			$ \dsb{18, 0, 10; 6}_2$ &
			$ \dsb{18, 0, 18; 16}_2$ &
			$ \dsb{19, 18, 2; 1}_2$ &
			$ \dsb{19, 13, 4; 4}_2$ \\
			
			$ \dsb{20, 15, 3; 1}_2$ &
			$ \dsb{20, 15, 4; 5}_2$ &
			$ \dsb{20, 14, 5; 6}_2$ &
			$ \dsb{20, 10, 6; 4}_2$ &
			$ \dsb{20, 9, 8; 9}_2$ &
			$ \dsb{20, 0, 20; 18}_2$ \\
			
			$ \dsb{21, 20, 2; 1}_2$ &
			$ \dsb{21, 16, 4; 5}_2$ &
			$ \dsb{21, 15, 5; 6}_2$ &
			$ \dsb{21, 9, 7; 6}_2$ &
			$ \dsb{22, 16, 4; 4}_2$ &
			$ \dsb{22, 12, 5; 4}_2$ \\
			
			$ \dsb{22, 0, 22; 20}_2$ &
			$ \dsb{23, 22, 2; 1}_2$ &
			$ \dsb{23, 18, 4; 5}_2$ &
			$ \dsb{23, 14, 5; 5}_2$ &
			$ \dsb{23, 14, 6; 7}_2$ &
			$ \dsb{24, 18, 4; 4}_2$ \\
			
			$ \dsb{24, 16, 5; 6}_2$ &
			$ \dsb{24, 16, 6; 8}_2$ &
			$ \dsb{24, 0, 24; 22}_2$ &
			$ \dsb{25, 24, 2; 1}_2$ &
			$ \dsb{25, 20, 3; 3}_2$ &
			$ \dsb{25, 19, 4; 4}_2$ \\
			
			$ \dsb{25, 11, 7; 6}_2$ &
			$ \dsb{25, 0, 14; 11}_2$ &
			$ \dsb{26, 22, 3; 4}_2$ &
			$ \dsb{26, 21, 4; 5}_2$ &
			$ \dsb{26, 13, 7; 7}_2$ &
			$ \dsb{26, 0, 26; 24}_2$ \\
			
			$ \dsb{27, 26, 2; 1}_2$ &
			$ \dsb{27, 23, 3; 4}_2$ &
			$ \dsb{27, 22, 4; 5}_2$ &
			$ \dsb{28, 23, 3; 3}_2$ &
			$ \dsb{28, 23, 4; 5}_2$ &
			$ \dsb{28, 0, 28; 26}_2$ \\
			
			$ \dsb{29, 28, 2; 1}_2$ &
			$ \dsb{29, 25, 3; 4}_2$ &
			$ \dsb{29, 23, 4; 4}_2$ &
			$ \dsb{29, 0, 12; 1}_2$ &
			$ \dsb{30, 26, 3; 4}_2$ &
			$ \dsb{30, 25, 4; 5}_2$ \\
			
			$ \dsb{30, 0, 16; 12}_2$ &
			$ \dsb{30, 0, 30; 28}_2$ &
			$ \dsb{31, 30, 2; 1}_2$ &
			$ \dsb{31, 26, 4; 5}_2$ &
			$ \dsb{31, 22, 5; 5}_2$ &
			$ \dsb{31, 16, 6; 1}_2$ \\
			
			$ \dsb{31, 16, 9; 13}_2$ &
			$ \dsb{31, 9, 13; 16}_2$ &
			$ \dsb{31, 2, 14; 11}_2$ &
			$ \dsb{32, 27, 3; 3}_2$ &
			$ \dsb{32, 26, 4; 4}_2$ &
			$ \dsb{32, 24, 5; 6}_2$ \\
			
			$ \dsb{32, 0, 32; 30}_2$ &
			$ \dsb{33, 32, 2; 1}_2$ &
			$ \dsb{33, 28, 4; 5}_2$ &
			$ \dsb{33, 26, 5; 7}_2$ &
			$ \dsb{34, 29, 3; 3}_2$ &
			$ \dsb{34, 28, 4; 4}_2$ \\
			
			$ \dsb{34, 26, 5; 6}_2$ &
			$ \dsb{34, 24, 6; 8}_2$ &
			$ \dsb{34, 0, 14; 8}_2$ &
			$ \dsb{34, 0, 34; 32}_2$ &
			$ \dsb{35, 34, 2; 1}_2$ &
			$ \dsb{35, 30, 4; 5}_2$ \\
			
			$ \dsb{35, 28, 5; 7}_2$ &
			$ \dsb{35, 26, 6; 9}_2$ &
			$ \dsb{35, 12, 8; 1}_2$ &
			$ \dsb{35, 4, 14; 11}_2$ &
			$ \dsb{36, 32, 3; 4}_2$ &
			$ \dsb{36, 30, 4; 4}_2$ \\
			
			$ \dsb{36, 28, 5; 6}_2$ &
			$ \dsb{36, 26, 6; 8}_2$ &
			$ \dsb{36, 17, 10; 13}_2$ &
			$ \dsb{36, 9, 17; 23}_2$ &
			$ \dsb{36, 0, 36; 34}_2$ &
			$ \dsb{37, 36, 2; 1}_2$ \\
			
			$ \dsb{37, 30, 5; 7}_2$ &
			$ \dsb{37, 28, 6; 9}_2$ &
			$ \dsb{37, 14, 13; 19}_2$ &
			$ \dsb{38, 28, 6; 8}_2$ &
			$ \dsb{38, 16, 13; 20}_2$ &
			$ \dsb{38, 16, 14; 22}_2$ \\
			
			$ \dsb{38, 14, 8; 2}_2$ &
			$ \dsb{38, 0, 38; 36}_2$ &
			$ \dsb{39, 38, 2; 1}_2$ &
			$ \dsb{39, 30, 6; 9}_2$ &
			$ \dsb{39, 22, 10; 17}_2$ &
			$ \dsb{39, 16, 8; 3}_2$ \\
			
			$ \dsb{39, 15, 11; 12}_2$ &
			$ \dsb{39, 13, 17; 26}_2$ &
			$ \dsb{39, 12, 9; 3}_2$ &
			$ \dsb{39, 0, 18; 15}_2$ &
			$ \dsb{40, 36, 3; 4}_2$ &
			$ \dsb{40, 31, 4; 1}_2$ \\

			$ \dsb{40, 30, 5; 4}_2$ &
			$ \dsb{40, 30, 6; 8}_2$ &
			$ \dsb{40, 22, 10; 16}_2$ &
			$ \dsb{40, 20, 12; 20}_2$ &
			$ \dsb{40, 18, 13; 20}_2$ &
			$ \dsb{40, 16, 14; 20}_2$ \\
			
			$ \dsb{40, 0, 40; 38}_2$ &
			$ \dsb{41, 40, 2; 1}_2$ &
			$ \dsb{41, 36, 3; 3}_2$ &
			$ \dsb{41, 32, 5; 5}_2$ &
			$ \dsb{41, 32, 6; 9}_2$ &
			$ \dsb{41, 20, 7; 1}_2$ \\
			
			$ \dsb{41, 20, 13; 21}_2$ &
			$ \dsb{41, 18, 14; 21}_2$ &
			$ \dsb{42, 38, 3; 4}_2$ &
			$ \dsb{42, 34, 5; 6}_2$ &
			$ \dsb{42, 33, 6; 9}_2$ &
			$ \dsb{42, 28, 8; 14}_2$ \\
			
			$ \dsb{42, 22, 12; 20}_2$ &
			$ \dsb{42, 20, 11; 16}_2$ &
			$ \dsb{42, 20, 13; 20}_2$ &
			$ \dsb{42, 20, 14; 22}_2$ &
			$ \dsb{42, 16, 9; 6}_2$ &
			$ \dsb{42, 0, 16; 8}_2$ \\
			
			$ \dsb{42, 0, 42; 40}_2$ &
			$ \dsb{43, 42, 2; 1}_2$ &
			$ \dsb{43, 39, 3; 4}_2$ &
			$ \dsb{43, 36, 5; 7}_2$ &
			$ \dsb{43, 34, 4; 3}_2$ &
			$ \dsb{43, 34, 6; 9}_2$ \\
			
			$ \dsb{43, 31, 6; 8}_2$ &
			$ \dsb{43, 29, 8; 14}_2$ &
			$ \dsb{43, 21, 14; 22}_2$ &
			$ \dsb{43, 0, 18; 13}_2$ &
			$ \dsb{44, 39, 3; 3}_2$ &
			$ \dsb{44, 36, 4; 4}_2$ \\
			
			$ \dsb{44, 29, 8; 13}_2$ &
			$ \dsb{44, 21, 14; 21}_2$ &
			$ \dsb{44, 0, 44; 42}_2$ &
			$ \dsb{45, 44, 2; 1}_2$ &
			$ \dsb{45, 41, 3; 4}_2$ &
			$ \dsb{45, 31, 8; 14}_2$ \\
			
			$ \dsb{45, 24, 12; 21}_2$ &
			$ \dsb{45, 22, 7; 1}_2$ &
			$ \dsb{46, 41, 3; 3}_2$ &
			$ \dsb{46, 38, 4; 4}_2$ &
			$ \dsb{46, 38, 5; 8}_2$ &
			$ \dsb{46, 34, 6; 6}_2$ \\
			
			$ \dsb{46, 28, 8; 12}_2$ &
			$ \dsb{46, 32, 8; 14}_2$ &
			$ \dsb{46, 19, 9; 5}_2$ &
			$ \dsb{46, 17, 10; 7}_2$ &
			$ \dsb{46, 25, 11; 19}_2$ &
			$ \dsb{46, 24, 12; 20}_2$ \\
			
			$ \dsb{46, 1, 15; 7}_2$ &
			$ \dsb{46, 1, 19; 17}_2$ &
			$ \dsb{46, 0, 46; 44}_2$ &
			$ \dsb{47, 46, 2; 1}_2$ &
			$ \dsb{47, 41, 3; 2}_2$ &
			$ \dsb{47, 40, 4; 5}_2$ \\
			
			$ \dsb{47, 33, 8; 14}_2$ &
			$ \dsb{48, 43, 3; 3}_2$ &
			$ \dsb{48, 42, 4; 6}_2$ &
			$ \dsb{48, 35, 7; 13}_2$ &
			$ \dsb{48, 33, 8; 13}_2$ &
			$ \dsb{48, 23, 13; 21}_2$ \\
			
			$ \dsb{48, 16, 17; 26}_2$ &
			$ \dsb{48, 0, 20; 16}_2$ &
			$ \dsb{48, 0, 48; 46}_2$ &
			$ \dsb{49, 48, 2; 1}_2$ &
			$ \dsb{49, 45, 3; 4}_2$ &
			$ \dsb{49, 43, 4; 6}_2$ \\
			
			$ \dsb{49, 35, 8; 14}_2$ &
			$ \dsb{50, 46, 3; 4}_2$ &
			$ \dsb{50, 43, 4; 5}_2$ &
			$ \dsb{50, 36, 7; 12}_2$ &
			$ \dsb{50, 35, 8; 13}_2$ &
			$ \dsb{50, 19, 9; 1}_2$ \\
			
			$ \dsb{50, 0, 50; 48}_2$ &
			$ \dsb{51, 50, 2; 1}_2$ &
			$ \dsb{51, 46, 3; 3}_2$ &
			$ \dsb{51, 45, 4; 6}_2$ &
			$ \dsb{51, 43, 5; 8}_2$ &
			$ \dsb{51, 34, 6; 1}_2$ \\
			
			$ \dsb{51, 38, 7; 13}_2$ &
			$ \dsb{51, 37, 8; 14}_2$ &
			$ \dsb{51, 9, 12; 2}_2$ &
			$ \dsb{52, 48, 3; 4}_2$ &
			$ \dsb{52, 46, 4; 6}_2$ &
			$ \dsb{52, 44, 5; 8}_2$ \\
			
			$ \dsb{52, 17, 10; 1}_2$ &
			$ \dsb{52, 0, 22; 18}_2$ &
			$ \dsb{52, 0, 52; 50}_2$ &
			$ \dsb{53, 52, 2; 1}_2$ &
			$ \dsb{53, 48, 3; 3}_2$ &
			$ \dsb{53, 47, 4; 6}_2$ \\
			
			$ \dsb{53, 44, 5; 7}_2$ &
			$ \dsb{53, 19, 10; 2}_2$ &
			$ \dsb{53, 0, 16; 1}_2$ &
			$ \dsb{54, 50, 3; 4}_2$ &
			$ \dsb{54, 46, 5; 8}_2$ &
			$ \dsb{54, 44, 6; 10}_2$ \\
			
			$ \dsb{54, 0, 54; 52}_2$ &
			$ \dsb{55, 54, 2; 1}_2$ &
			$ \dsb{55, 51, 3; 4}_2$ &
			$ \dsb{55, 47, 4; 4}_2$ &
			$ \dsb{55, 47, 5; 8}_2$ &
			$ \dsb{56, 52, 3; 4}_2$ \\
			
			$ \dsb{56, 48, 5; 8}_2$ &
			$ \dsb{56, 0, 24; 20}_2$ &
			$ \dsb{56, 0, 56; 54}_2$ &
			$ \dsb{57, 56, 2; 1}_2$ &
			$ \dsb{57, 52, 3; 3}_2$ &
			$ \dsb{57, 49, 4; 4}_2$ \\
			
			$ \dsb{57, 49, 5; 8}_2$ &
			$ \dsb{58, 54, 3; 4}_2$ &
			$ \dsb{58, 0, 58; 56}_2$ &
			$ \dsb{59, 58, 2; 1}_2$ &
			$ \dsb{59, 55, 3; 4}_2$ &
			$ \dsb{60, 55, 3; 3}_2$ \\

			$ \dsb{60, 0, 60; 58}_2$ &
			$ \dsb{61, 60, 2; 1}_2$ &
			$ \dsb{61, 57, 3; 4}_2$ &
			$ \dsb{61, 0, 18; 1}_2$ &
			$ \dsb{62, 58, 3; 4}_2$ &
			$ \dsb{62, 51, 4; 1}_2$ \\
			
			$ \dsb{62, 0, 62; 60}_2$ &
			$ \dsb{63, 62, 2; 1}_2$ &
			$ \dsb{63, 58, 3; 3}_2$ &
			$ \dsb{63, 44, 6; 1}_2$ &
			$ \dsb{63, 17, 13; 4}_2$ &
			$ \dsb{63, 16, 14; 5}_2$ \\
			
			$ \dsb{64, 60, 3; 4}_2$ &
			$ \dsb{64, 53, 4; 1}_2$ &
			$ \dsb{64, 49, 5; 1}_2$ &
			$ \dsb{64, 40, 10; 14}_2$ &
			$ \dsb{64, 37, 8; 1}_2$ &
			$ \dsb{64, 31, 9; 1}_2$ \\
			
			$ \dsb{64, 25, 11; 1}_2$ &
			$ \dsb{64, 24, 12; 2}_2$ &
			$ \dsb{64, 21, 16; 17}_2$ &
			$ \dsb{64, 17, 15; 9}_2$ &
			$ \dsb{64, 16, 16; 14}_2$ & 
			$ \dsb{64, 12, 14; 2}_2$ \\
			
			$ \dsb{64, 2, 22; 12}_2$ &
			$ \dsb{64, 1, 20; 9}_2$ &
			$ \dsb{64, 1, 23; 13}_2$ &
			$ \dsb{64, 1, 27; 25}_2$ &
			$ \dsb{64, 0, 24; 14}_2$ &
			$ \dsb{64, 0, 64; 62}_2$ \\
\hline
\end{tabular}
}
\end{table*}

\pagebreak

\begin{table*}
\caption{A Concise Version of the Parameters of Good Qutrit EAQECCs with $3 \leq n \leq 36$. To obtain the full table, one applies the propagation rules in Section \ref{sec:compute}. We include entries with $c=0$ since a large database for such codes is not yet currently available online.}
\label{table:qutrit}
\centering
\resizebox{0.9\textwidth}{!}{
\begin{tabular}{lll lll}
\hline
$\dsb{n,\kappa,\delta;c}_3 $ & $\dsb{n,\kappa,\delta;c}_3 $ & $\dsb{n,\kappa,\delta;c}_3 $ & $\dsb{n,\kappa,\delta;c}_3 $ & $\dsb{n,\kappa,\delta;c}_3 $ & $\dsb{n,\kappa,\delta;c}_3 $ \\
\hline

$ \dsb{3, 0, 3; 1 }_3$ &
$ \dsb{5, 4, 2; 1 }_3$ &
		$ \dsb{5, 2, 3; 1 }_3$ &
		$ \dsb{5, 0, 4; 1 }_3$ &
		$ \dsb{5, 0, 5; 3 }_3$ &
		$ \dsb{6, 2, 4; 2 }_3$ \\
		
		$ \dsb{7, 3, 3; 0 }_3$ &
		$ \dsb{7, 0, 6; 3 }_3$ &
		$ \dsb{8, 4, 3; 0 }_3$ &
		$ \dsb{8, 0, 7; 4 }_3$ &
		$ \dsb{10, 6, 3; 0 }_3$ &
		$ \dsb{10, 4, 4; 0 }_3$ \\
		
		$ \dsb{10, 4, 5; 2 }_3$ &
		$ \dsb{10, 1, 6; 1 }_3$ &
		$ \dsb{10, 2, 7; 4 }_3$ &
		$ \dsb{10, 0, 8; 4 }_3$ &
		$ \dsb{10, 0, 9; 6 }_3$ &
		$ \dsb{10, 0, 10; 8 }_3$ \\
		
		$ \dsb{11, 0, 11; 9 }_3$ &
		$ \dsb{12, 0, 12; 10 }_3$ &
		$ \dsb{13, 4, 7; 5 }_3$ &
		$ \dsb{13, 1, 10; 8 }_3$ &
		$ \dsb{13, 0, 13; 11 }_3$ &
		$ \dsb{14, 8, 3; 0 }_3$ \\
		
		$ \dsb{14, 5, 6; 3 }_3$ &
		$ \dsb{14, 2, 9; 6 }_3$ &
		$ \dsb{14, 0, 14; 12 }_3$ &
		$ \dsb{15, 9, 3; 0 }_3$ &
		$ \dsb{15, 5, 5; 0 }_3$ &
		$ \dsb{15, 5, 7; 4 }_3$ \\
		
		$ \dsb{15, 4, 8; 5 }_3$ &
		$ \dsb{15, 0, 12; 9 }_3$ &
		$ \dsb{15, 0, 15; 13 }_3$ &
		$ \dsb{16, 10, 3; 0 }_3$ &
		$ \dsb{16, 9, 4; 1 }_3$ &
		$ \dsb{16, 7, 5; 1 }_3$ \\
		
		$ \dsb{16, 6, 6; 2 }_3$ &
		$ \dsb{16, 6, 7; 4 }_3$ &
		$ \dsb{16, 5, 8; 5 }_3$ &
		$ \dsb{16, 4, 9; 6 }_3$ &
		$ \dsb{16, 2, 10; 6 }_3$ &
		$ \dsb{16, 1, 11; 7 }_3$ \\
		
		$ \dsb{16, 1, 12; 9 }_3$ &
		$ \dsb{16, 0, 13; 10 }_3$ &
		$ \dsb{16, 0, 16; 14 }_3$ &
		$ \dsb{17, 11, 3; 0 }_3$ &
		$ \dsb{17, 10, 4; 1 }_3$ &
		$ \dsb{17, 9, 5; 2 }_3$ \\
		
		$ \dsb{17, 8, 6; 3 }_3$ &
		$ \dsb{17, 0, 14; 11 }_3$ &
		$ \dsb{17, 0, 17; 15 }_3$ &
		$ \dsb{18, 12, 3; 0 }_3$ &
		$ \dsb{18, 11, 4; 1 }_3$ &
		$ \dsb{18, 10, 5; 2 }_3$ \\
		
		$ \dsb{18, 6, 8; 4 }_3$ &
		$ \dsb{18, 6, 9; 6 }_3$ &
		$ \dsb{18, 4, 10; 6 }_3$ &
		$ \dsb{18, 0, 18; 16 }_3$ &
		$ \dsb{19, 13, 3; 0 }_3$ &
		$ \dsb{19, 12, 4; 1 }_3$ \\
		
		$ \dsb{19, 11, 5; 2 }_3$ &
		$ \dsb{19, 2, 13; 11 }_3$ &
		$ \dsb{19, 0, 19; 17 }_3$ &
		$ \dsb{19, 0, 20, 19 }_3$ &
		$ \dsb{20, 12, 5; 2 }_3$ &
		$ \dsb{20, 10, 6; 4 }_3$ \\
		
		$ \dsb{20, 9, 7; 5 }_3$ &
		$ \dsb{20, 6, 10; 6 }_3$ &
		$ \dsb{20, 5, 11; 9 }_3$ &
		$ \dsb{20, 3, 12; 9 }_3$ &
		$ \dsb{20, 0, 15; 12 }_3$ &
		$ \dsb{20, 0, 20; 18 }_3$ \\
		
		$ \dsb{21, 15, 3; 0 }_3$ &
		$ \dsb{21, 13, 4; 0 }_3$ &
		$ \dsb{21, 11, 6; 4 }_3$ &
		$ \dsb{21, 10, 7; 5 }_3$ &
		$ \dsb{21, 2, 14; 11 }_3$ &
		$ \dsb{21, 2, 15; 13 }_3$ \\
		
		$ \dsb{21, 1, 16; 14 }_3$ &
		$ \dsb{21, 0, 21; 19 }_3$ &
		$ \dsb{22, 16, 3; 0 }_3$ &
		$ \dsb{22, 15, 4; 1 }_3$ &
		$ \dsb{22, 12, 6; 4 }_3$ &
		$ \dsb{22, 11, 7; 5 }_3$ \\
		
		$ \dsb{22, 1, 17; 15 }_3$ &
		$ \dsb{22, 0, 22; 20 }_3$ &
		$ \dsb{23, 17, 3; 0 }_3$ &
		$ \dsb{23, 16, 4; 1 }_3$ &
		$ \dsb{23, 9, 8; 6 }_3$ &
		$ \dsb{23, 4, 13; 11 }_3$ \\
		
		$ \dsb{23, 0, 23; 21 }_3$ &
		$ \dsb{24, 13, 6; 3 }_3$ &
		$ \dsb{24, 11, 8; 7 }_3$ &
		$ \dsb{24, 6, 11; 8 }_3$ &
		$ \dsb{24, 6, 12; 10 }_3$ &
		$ \dsb{24, 0, 19; 16 }_3$ \\
		
		$ \dsb{24, 0, 24; 22 }_3$ &
		$ \dsb{25, 17, 4; 0 }_3$ &
		$ \dsb{25, 15, 5; 2 }_3$ &
		$ \dsb{25, 9, 10; 8 }_3$ &
		$ \dsb{25, 5, 14; 12 }_3$ &
		$ \dsb{25, 3, 16; 14 }_3$ \\
		
		$ \dsb{25, 2, 17; 15 }_3$ &
		$ \dsb{25, 0, 25; 23 }_3$ &
		$ \dsb{26, 18, 4; 0 }_3$ &
		$ \dsb{26, 13, 7; 5 }_3$ &
		$ \dsb{26, 12, 8; 6 }_3$ &
		$ \dsb{26, 11, 9; 7 }_3$ \\
		
		$ \dsb{26, 10, 10; 8 }_3$ &
		$ \dsb{26, 7, 12; 9 }_3$ &
		$ \dsb{26, 6, 14; 12 }_3$ &
		$ \dsb{26, 3, 17; 15 }_3$ &
		$ \dsb{26, 2, 18; 16 }_3$ &
		$ \dsb{26, 0, 26; 24 }_3$ \\
		
		$ \dsb{27, 19, 4; 0 }_3$ &
		$ \dsb{27, 17, 5; 2 }_3$ &
		$ \dsb{27, 15, 6; 2 }_3$ &
		$ \dsb{27, 14, 7; 5 }_3$ &
		$ \dsb{27, 13, 8; 6 }_3$ &
		$ \dsb{27, 12, 9; 7 }_3$ \\
		
		$ \dsb{27, 11, 10; 8 }_3$ &
		$ \dsb{27, 8, 12; 9 }_3$ &
		$ \dsb{27, 8, 13; 11 }_3$ &
		$ \dsb{27, 7, 14; 12 }_3$ &
		$ \dsb{27, 4, 16; 13 }_3$ &
		$ \dsb{27, 3, 18; 16 }_3$ \\
		
		$ \dsb{27, 2, 19; 17 }_3$ &
		$ \dsb{27, 1, 20; 18 }_3$ &
		$ \dsb{27, 0, 27, 25 }_3$ &
		$ \dsb{28, 20, 4; 0 }_3$ &
		$ \dsb{28, 14, 8; 6 }_3$ &
		$ \dsb{28, 13, 9; 7 }_3$ \\
		
		$ \dsb{28, 12, 10; 8 }_3$ &
		$ \dsb{28, 11, 11; 9 }_3$ &
		$ \dsb{28, 10, 13; 12 }_3$ &
		$ \dsb{28, 9, 12; 9 }_3$ &
		$ \dsb{28, 8, 14; 12 }_3$ &
		$ \dsb{28, 5, 16; 13 }_3$ \\
		
		$ \dsb{28, 4, 17; 14 }_3$ &
		$ \dsb{28, 4, 18; 16 }_3$ &
		$ \dsb{28, 4, 19; 18 }_3$ &
		$ \dsb{28, 2, 20; 18 }_3$ &
		$ \dsb{28, 1, 21; 19 }_3$ &
		$ \dsb{28, 0, 28; 26 }_3$ \\
		
		$ \dsb{29, 23, 3; 0 }_3$ &
		$ \dsb{29, 21, 4; 0 }_3$ &
		$ \dsb{29, 19, 5; 2 }_3$ &
		$ \dsb{29, 17, 6; 2 }_3$ &
		$ \dsb{29, 16, 7; 5 }_3$ &
		$ \dsb{29, 0, 29; 27 }_3$ \\
		
		$ \dsb{30, 23, 4; 1 }_3$ &
		$ \dsb{30, 21, 6; 5 }_3$ &
		$ \dsb{30, 18, 7; 6 }_3$ &
		$ \dsb{30, 10, 15; 16 }_3$ &
		$ \dsb{30, 0, 12; 0 }_3$ &
		$ \dsb{30, 0, 30; 28 }_3$ \\
		
		$ \dsb{31, 25, 3; 0 }_3$ &
		$ \dsb{31, 25, 4; 2 }_3$ &
		$ \dsb{31, 22, 5; 3 }_3$ &
		$ \dsb{31, 22, 6; 5 }_3$ &
		$ \dsb{31, 0, 31; 29 }_3$ &
		$ \dsb{32, 26, 4; 2 }_3$ \\
		
		$ \dsb{32, 23, 5; 3 }_3$ &
		$ \dsb{32, 23, 6; 5 }_3$ &
		$ \dsb{32, 0, 32; 30 }_3$ &
		$ \dsb{33, 27, 3; 0 }_3$ &
		$ \dsb{33, 27, 4; 2 }_3$ &
		$ \dsb{33, 25, 5; 4 }_3$ \\
		
		$ \dsb{33, 24, 6; 5 }_3$ &
		$ \dsb{33, 14, 10; 9 }_3$ &
		$ \dsb{33, 13, 13; 14 }_3$ &
		$ \dsb{33, 13, 14; 16 }_3$ &
		$ \dsb{33, 12, 15; 17 }_3$ &
		$ \dsb{33, 11, 16; 18 }_3$ \\
		
		$ \dsb{33, 10, 17; 19 }_3$ &
		$ \dsb{33, 8, 18; 19 }_3$ &
		$ \dsb{33, 8, 19; 21 }_3$ &
		$ \dsb{33, 7, 20; 22 }_3$ &
		$ \dsb{33, 0, 33; 31 }_3$ &
		$ \dsb{34, 28, 3; 0 }_3$ \\
		
		$ \dsb{34, 26, 5; 4 }_3$ &
		$ \dsb{34, 25, 6; 5 }_3$ &
		$ \dsb{34, 12, 12; 10 }_3$ &
		$ \dsb{34, 0, 22; 18 }_3$ &
		$ \dsb{34, 4, 23; 24 }_3$ &
		$ \dsb{34, 0, 34; 32 }_3$ \\
		
		$ \dsb{35, 30, 3; 1 }_3$ &
		$ \dsb{35, 28, 4; 1 }_3$ &
		$ \dsb{35, 26, 6; 5 }_3$ &
		$ \dsb{35, 15, 10; 8 }_3$ &
		$ \dsb{35, 0, 35; 33 }_3$ &
		$ \dsb{36, 31, 3; 1 }_3$ \\
		
		$ \dsb{36, 30, 4; 2 }_3$ &
		$ \dsb{36, 27, 5; 3 }_3$ &
		$ \dsb{36, 27, 6; 5 }_3$ &
		$ \dsb{36, 20, 7; 2 }_3$ &
		$ \dsb{36, 19, 10; 11 }_3$ &
		$ \dsb{36, 18, 8; 4 }_3$ \\
		
		$ \dsb{36, 17, 9; 5 }_3$ &
		$ \dsb{36, 15, 11; 9 }_3$ &
		$ \dsb{36, 15, 14; 17 }_3$ &
		$ \dsb{36, 14, 13; 14 }_3$ &
		$ \dsb{36, 14, 15; 18 }_3$ &
		$ \dsb{36, 13, 12; 9 }_3$ \\
		
		$ \dsb{36, 11, 16; 17 }_3$ &
		$ \dsb{36, 11, 18; 21 }_3$ &
		$ \dsb{36, 10, 17; 18 }_3$ &
		$ \dsb{36, 10, 19; 22 }_3$ &
		$ \dsb{36, 8, 20; 22 }_3$ &
		$ \dsb{36, 8, 21; 24 }_3$ \\
		
		$ \dsb{36, 0, 36; 34 }_3$ &
		&&&&\\
		\hline 
\end{tabular}
}
\end{table*}
\end{document}